\definecolor{blueviolet}{rgb}{0.2, 0.2, 0.6}
\definecolor{webgreen}{rgb}{0,.5,0}
\definecolor{webbrown}{rgb}{.6,0,0}
\newcommand\RedeclareMathOperator{%
  \@ifstar{\def\rmo@s{m}\rmo@redeclare}{\def\rmo@s{o}\rmo@redeclare}%
}
\newcommand\rmo@redeclare[2]{%
  \begingroup \escapechar\m@ne\xdef\@gtempa{{\string#1}}\endgroup
  \expandafter\@ifundefined\@gtempa
     {\@latex@error{\noexpand#1undefined}\@ehc}%
     \relax
  \expandafter\rmo@declmathop\rmo@s{#1}{#2}}
\newcommand\rmo@declmathop[3]{%
  \DeclareRobustCommand{#2}{\qopname\newmcodes@#1{#3}}%
}
\RedeclareMathOperator*{\E}{{\mathbb{E}}}
\DeclareMathOperator*{\argmax}{arg\,max}
\numberwithin{equation}{section}
\newtheorem{theorem}{Theorem}
\newtheorem{prop}{Proposition}
\newtheorem{lemma}{Lemma}
\newtheorem{corollary}{Corollary}
\newtheorem{definition}{Definition}
\newtheorem{fact}{Fact}
\newcommand{\indicator}{\mathds{1}}
\begin{document}

\title{Improved machine learning algorithm for\\ predicting ground state properties}
\date{\today}
\author{Laura Lewis}
\affiliation{California Institute of Technology, Pasadena, CA, USA}
\author{Hsin-Yuan Huang}
\affiliation{California Institute of Technology, Pasadena, CA, USA}
\author{Viet T. Tran}
\affiliation{Johannes Kepler University, Linz, Austria}
\author{Sebastian Lehner}
\affiliation{Johannes Kepler University, Linz, Austria}
\author{Richard Kueng}
\affiliation{Johannes Kepler University, Linz, Austria}
\author{John Preskill}
\affiliation{California Institute of Technology, Pasadena, CA, USA}
\affiliation{AWS Center for Quantum Computing, Pasadena, CA, USA}

\begin{abstract}
Finding the ground state of a quantum many-body system is a fundamental problem in quantum physics.
In this work, we give a classical machine learning (ML) algorithm for predicting ground state properties with an inductive bias encoding geometric locality.
The proposed ML model can efficiently predict ground state properties of an $n$-qubit gapped local Hamiltonian after learning from only $\mathcal{O}(\log(n))$ data about other Hamiltonians in the same quantum phase of matter.
This improves substantially upon previous results that require $\mathcal{O}(n^c)$ data for a large constant $c$.
Furthermore, the training and prediction time of the proposed ML model scale as $\mathcal{O}(n \log n)$ in the number of qubits $n$.
Numerical experiments on physical systems with up to $45$ qubits confirm the favorable scaling in predicting ground state properties using a small training dataset.
\end{abstract}

\maketitle

{\renewcommand\addcontentsline[3]{} \section{Introduction}}

Finding the ground state of a quantum many-body system is a fundamental problem with far-reaching consequences for physics, materials science, and chemistry.
Many powerful methods \cite{HohenbergKohn, NobelKohn, CEPERLEY555,SandvikSSE,becca_sorella_2017, DMRG1,DMRG2} have been proposed,
but classical computers still struggle to solve many general classes of the ground state problem.
To extend the reach of classical computers, classical machine learning (ML) methods have recently been adapted to study this problem~\cite{CarleoRMP,APXReview, dassarma2017, carrasquilla2017nature,Carleo_2017,torlai_learning_2016,Nomura2017, evert2017nature,leiwang2016,gilmer2017neural,torlai_Tomo,vargas2018extrapolating,schutt2019unifying,Glasser2018,caro2022out,rodriguez2019identifying,qiao2020orbnet,choo_fermionicnqs2020,kawai2020predicting,moreno2020deep,Kottmann2021}.
A recent work \cite{huang2021provably} proposes a polynomial-time classical ML algorithm that can efficiently predict ground state properties of gapped geometrically local Hamiltonians, after learning from data obtained by measuring other Hamiltonians in the same quantum phase of matter.
Furthermore, \cite{huang2021provably} shows that under a widely accepted conjecture, no polynomial-time classical algorithm can achieve the same performance guarantee.
However, although the ML algorithm given in \cite{huang2021provably} uses a polynomial amount of training data and computational time, the polynomial scaling $\mathcal{O}(n^c)$ has a very large degree~$c$.
Moreover, when the prediction error $\epsilon$ is small, the amount of training data grows exponentially in $1 / \epsilon$, indicating that a very small prediction error cannot be achieved efficiently.

In this work, we present an improved ML algorithm for predicting ground state properties.
We consider an $m$-dimensional vector $x \in [-1, 1]^{m}$ that parameterizes an $n$-qubit gapped geometrically local Hamiltonian given as
\begin{equation}
    H(x) = \sum_{j} h_j(\vec{x}_j),
\end{equation}
where $x$ is the concatenation of constant-dimensional vectors $\vec{x}_1, \ldots, \vec{x}_L$ parameterizing the few-body interaction $h_j(\vec{x}_j)$.
Let $\rho(x)$ be the ground state of $H(x)$ and $O$ be a sum of geometrically local observables with $\norm{O}_\infty \leq 1$.
We assume that the geometry of the $n$-qubit system is known, but we do not know how $h_j(\vec{x}_j)$ is parameterized or what the observable $O$ is.
The goal is to learn a function $h^*(x)$ that approximates the ground state property $\Tr(O \rho(x))$ from a classical dataset,
\begin{equation}
    \big( x_\ell, y_\ell \big), \quad \forall \ell = 1, \ldots, N,
\end{equation}
where $y_\ell \approx \Tr(O \rho(x_\ell))$ records the ground state property for $x_\ell \in [-1, 1]^m$ sampled from an arbitrary unknown distribution~$\mathcal{D}$.

The setting considered in this work is very similar to that in \cite{huang2021provably}, but we assume the geometry of the $n$-qubit system to be known, which is necessary to overcome the sample complexity lower bound of $N = n^{\Omega(1 / \epsilon)}$ given in \cite{huang2021provably}.
One may compare the setting to that of finding ground states using adiabatic quantum computation \cite{farhi2000quantum,mizel2007simple,childs2001robustness,aharonov2008adiabatic,barends2016digitized,albash2018adiabatic,du2010nmr,wan2020fast}.
To find the ground state property $\Tr(O \rho(x))$ of $H(x)$,
this class of quantum algorithms requires the ground state $\rho_0$ of another Hamiltonian $H_0$ stored in quantum memory, explicit knowledge of a gapped path connecting $H_0$ and $H(x)$, and an explicit description of~$O$.
In contrast, here we focus on ML algorithms that are entirely classical, have no access to quantum state data, and have no knowledge about the Hamiltonian $H(x)$, the observable $O$, or the gapped paths between $H(x)$ and other Hamiltonians.

The proposed ML algorithm uses a nonlinear feature map $x \mapsto \phi(x)$ with a geometric inductive bias built into the mapping.
At a high level, the high-dimensional vector $\phi(x)$ contains nonlinear functions for each geometrically local subset of coordinates in the $m$-dimensional vector~$x$.
Here, the geometry over coordinates of the vector $x$ is defined using the geometry of the $n$-qubit system.
The ML algorithm learns a function $h^*(x) = \mathbf{w}^* \cdot \phi(x)$ by training an $\ell_1$-regularized regression (LASSO) \cite{doi:10.1137/0907087, tibshirani1996regression, mohri2018foundations} in the feature space.
We prove that given $\epsilon = \Theta(1)$, the improved ML algorithm can use a dataset size of
\begin{equation}
    N = \mathcal{O}\left(\log\left(n\right)\right),
\end{equation}
to learn a function $h^*(x)$ with an average prediction error of at most $\epsilon$,
\begin{equation}
    \E_{x \sim \mathcal{D}} \left| h^*(x) - \Tr(O \rho(x)) \right|^2 \leq \epsilon,
\end{equation}
with high success probability.

The sample complexity $N = \mathcal{O}\left(\log\left(n\right)\right)$ of the proposed ML algorithm improves substantially over the sample complexity of $N = \mathcal{O}(n^c)$ in the previously best-known classical ML algorithm \cite{huang2021provably}, where $c$ is a very large constant.
The computational time of both the improved ML algorithm and the ML algorithm in \cite{huang2021provably} is $\mathcal{O}(n N)$.
Hence, the logarithmic sample complexity $N$ immediately implies a nearly linear computational time.
In addition to the reduced sample complexity and computational time,
the proposed ML algorithm works for any distribution over $x$, while the best previously known algorithm \cite{huang2021provably} works only for the uniform distribution over $[-1, 1]^m$.
Furthermore, when we consider the scaling with the prediction error $\epsilon$, the best known classical ML algorithm in \cite{huang2021provably} has a sample complexity of $N = n^{\mathcal{O}(1 / \epsilon)}$, which is exponential in $1 / \epsilon$.
In contrast, the improved ML algorithm has a sample complexity of $N = \log(n) 2^{\mathrm{polylog}(1 / \epsilon)}$, which is quasi-polynomial in $1 / \epsilon$.
In combination with the classical shadow formalism \cite{huang2020predicting, elben2020mixed, elben2022randomized, wan2022matchgate, bu2022classical}, the proposed ML algorithm also yields the same reduction in sample and time complexity compared to \cite{huang2021provably} for predicting ground state representations.

\begin{figure}[t]
    \centering
    \includegraphics[width=1.0\linewidth]{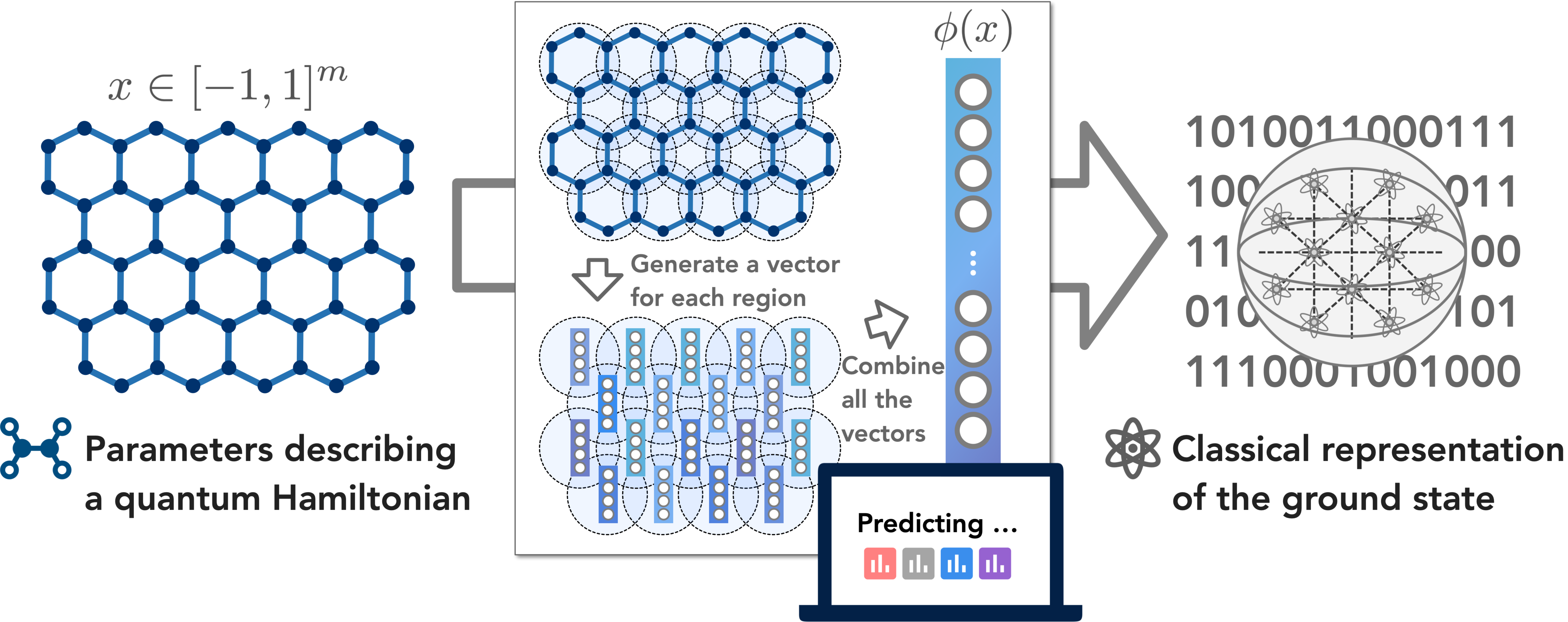}
    \caption{\textbf{Overview of the proposed machine learning algorithm.} Given a vector $x \in [-1, 1]^m$ that parameterizes a quantum many-body Hamiltonian $H(x)$. The algorithm uses a geometric structure to create a high-dimensional vector $\phi(x) \in \mathbb{R}^{m_\phi}$. The ML algorithm then predicts properties or a representation of the ground state $\rho(x)$ of Hamiltonian $H(x)$ using the $m_\phi$-dimensional vector $\phi(x)$. }
    \label{fig:overview}
\end{figure}

\vspace{2em}
{\renewcommand\addcontentsline[3]{}\section{ML algorithm and rigorous guarantee}}

The central component of the improved ML algorithm is the geometric inductive bias built into our feature mapping $x \in [-1, 1]^m \mapsto \phi(x) \in \mathbb{R}^{m_\phi}$.
To describe the ML algorithm, we first need to present some definitions relating to this geometric structure.

\vspace{2em}
{\renewcommand\addcontentsline[3]{} \subsection{Definitions}}

We consider $n$ 
qubits arranged at locations, or sites, in a $d$-dimensional space, e.g., a spin chain ($d=1$), a square lattice ($d=2$), or a cubic lattice ($d=3$).
This geometry is characterized by the distance $d_{\mathrm{qubit}}(i, i')$ between any two qubits $i$ and $i'$.
Using the distance $d_{\mathrm{qubit}}$ between qubits, we can define the geometry of local observables.
Given any two observables $O_A, O_B$ on the $n$-qubit system,
we define the distance $d_{\mathrm{obs}}(O_A, O_B)$ between the two observables as the minimum distance between the qubits that $O_A$ and $O_B$ act on.
We also say an observable is geometrically local if it acts nontrivially only on nearby qubits under the distance metric $d_{\mathrm{qubit}}$.
We then define $S^{\mathrm{(geo)}}$ as the set of all geometrically local Pauli observables, i.e., geometrically local observables that belong to the set $\{I, X, Y, Z\}^{\otimes n}$.
The size of $S^{\mathrm{(geo)}}$ is $\mathcal{O}(n)$, linear in the total number of qubits.

With these basic definitions in place, we now define a few more geometric objects.
The first object is the set of coordinates in the $m$-dimensional vector $x$ that are close to a geometrically local Pauli observable $P$. This is formally given by,
\begin{equation}
  \label{eq:ip}
    I_P \triangleq \left\{ c \in \{1, \dots, m\} : d_{\mathrm{obs}}(h_{j(c)}, P) \leq \delta_1 \right\},
\end{equation}
where $h_{j(c)}$ is the few-body interaction term in the $n$-qubit Hamiltonian $H(x)$ that is parameterized by the variable $x_c \in [-1, 1]$, and $\delta_1$ is an efficiently computable hyperparameter that is determined later.
Note that, by definition, each variable $x_c$ parameterizes one of the interaction terms $h_{j(c)}$.
Intuitively, $I_P$ is the set of coordinates that have the strongest influence on the function $\Tr(P \rho(x))$.

The second geometric object is a discrete lattice over the space $[-1, 1]^m$ associated to each subset $I_P$ of coordinates.
For any geometrically local Pauli observable $P \in S^{\mathrm{(geo)}}$, we define $X_P$ to contain all vectors $x$ that take on value $0$ for coordinates outside $I_P$ and take on a set of discrete values for coordinates inside $I_P$. Formally, this is given by
\begin{equation}
    X_P \triangleq \left.\begin{cases}
    x \in [-1,1]^m : \text{if } c \notin I_P, \,\, x_{c} = 0\\
    \hspace{62pt} \text{if } c \in I_P, \,\, x_{c} \in \left\{0, \pm \delta_2, \pm 2\delta_2,\dots, \pm 1\right\}
    \end{cases}\right\},
\end{equation}
where $\delta_2$ is an efficiently computable hyperparameter to be determined later.
The definition of $X_P$ is meant to enumerate all sufficiently different vectors for coordinates in the subset $I_P \subseteq \{1, \ldots, m\}$.

Now given a geometrically local Pauli observable $P$ and a vector $x$ in the discrete lattice $X_P \subseteq [-1, 1]^m$, the third object is a set $T_{x, P}$ of vectors in $[-1, 1]^m$ that are close to $x$ for coordinates in $I_P$. This is formally defined as,
\begin{equation}
    T_{x, P} \triangleq \left\{ x' \in [-1,1]^m : -\frac{\delta_2}{2} < x_{c} - x_{c}' \leq \frac{\delta_2}{2}, \forall c \in I_P \right\}.
\end{equation}
The set $T_{x, P}$ is defined as a thickened affine subspace close to the vector $x$ for coordinates in $I_P$.
If a vector $x'$ is in $T_{x, P}$, then $x'$ is close to $x$ for all coordinates in $I_P$, but $x'$ may be far away from $x$ for coordinates outside of $I_P$.

\vspace{2em}
{\renewcommand\addcontentsline[3]{} \subsection{Feature mapping and ML model}\label{subsec:ML-model}}

We can now define the feature map $\phi$ taking an $m$-dimensional vector $x$ to an $m_\phi$-dimensional vector $\phi(x)$ using the thickened affine subspaces $T_{x', P}$ for every geometrically local Pauli observable $P \in S^{\mathrm{(geo)}}$ and every vector $x'$ in the discrete lattice $X_P$.
The dimension of the vector $\phi(x)$ is given by $m_{\phi} = \sum_{P \in S^{\mathrm{(geo)}}} |X_P|$.
Each coordinate of the vector $\phi(x)$ is indexed by $x' \in X_P$ and $P \in S^{\mathrm{(geo)}}$ with
\begin{equation} \label{eq:phi-def}
    \phi(x)_{x', P} \triangleq \indicator\left[x \in T_{x', P}\right],
\end{equation}
which is the indicator function checking if $x$ belongs to the thickened affine subspace.
Recall that this means each coordinate of the $m_\phi$-dimensional vector $\phi(x)$ checks if $x$ is close to a point $x'$ on a discrete lattice $X_P$ for the subset $I_P$ of coordinates close to a geometrically local Pauli observable $P$.

The classical ML model we consider is an $\ell_1$-regularized regression (LASSO) over the $\phi(x)$ space.
More precisely, given an efficiently computable hyperparameter $B > 0$, the classical ML model finds an $m_\phi$-dimensional vector $\mathbf{w}^*$ from the following optimization problem,
\begin{equation}
    \min_{\substack{\mathbf{w} \in \mathbb{R}^{m_\phi}\\ \norm{\mathbf{w}}_1 \leq B} } \, \frac{1}{N} \sum_{\ell=1}^N \left| \mathbf{w} \cdot \phi(x_\ell) - y_\ell \right|^2,
\end{equation}
where $\{(x_\ell, y_\ell)\}_{\ell=1}^N$ is the training data.
Here, $x_\ell \in [-1,1]^m$ is an $m$-dimensional vector that parameterizes a Hamiltonian $H(x)$ and $y_\ell$ approximates $\Tr(O\rho(x_\ell))$.
The learned function is given by $h^*(x) = \mathbf{w}^* \cdot \phi(x)$.
The optimization does not have to be solved exactly.
We only need to find a $\mathbf{w}^*$ whose function value is $\mathcal{O}(\epsilon)$ larger than the minimum function value.
There is an extensive literature \cite{efron2004least, daubechies2004iterative, combettes2005signal, cesa2011efficient, friedman2010regularization, hazan2012linear, chen2021quantum} improving the computational time for the above optimization problem.
The best known classical algorithm \cite{hazan2012linear} has a computational time scaling linearly in $m_\phi / \epsilon^2$ up to a log factor, while the best known quantum algorithm \cite{chen2021quantum} has a computational time scaling linearly in $\sqrt{m_\phi} / \epsilon^2$ up to a log factor.

\vspace{2em}
{\renewcommand\addcontentsline[3]{} \subsection{Rigorous guarantee}}

\noindent The classical ML algorithm given above yields the following sample and computational complexity.
This theorem improves substantially upon the result in \cite{huang2021provably}, which requires $N = n^{\mathcal{O}(1 / \epsilon)}$.
The proof idea is given in Section~\ref{sec:proofidea}, and the detailed proof is given in Appendices~\ref{sec:simple},~\ref{sec:norminequality},~\ref{sec:algorithm}.
Using the proof techniques presented in this work, one can show that the sample complexity $N = \log(n / \delta) 2^{\mathrm{polylog}(1 / \epsilon)}$ also applies to any sum of few-body observables $O = \sum_j O_j$ with $\sum_j \norm{O_j}_\infty \leq 1$, even if the operators $\{O_j\}$ are not geometrically local.

\vspace{0.5em}
\begin{theorem}[Sample and computational complexity] \label{thm:rigor-guarantee}
Given $n, \delta > 0$, $\tfrac{1}{e} > \epsilon > 0$ and a training data set $\{x_\ell, y_\ell\}_{\ell = 1}^N$ of size
\begin{equation}
N = \log(n / \delta) 2^{\mathrm{polylog}(1 / \epsilon)},
\end{equation}
where $x_\ell$ is sampled from an unknown distribution $\mathcal{D}$ and $|y_\ell - \Tr(O \rho(x_\ell))| \leq \epsilon$ for any observable~$O$ with eigenvalues between $-1$ and $1$ that can be written as a sum of geometrically local observables.
With a proper choice of the efficiently computable hyperparameters $\delta_1, \delta_2$, and $B$,
the learned function $h^*(x) = \mathbf{w}^* \cdot \phi(x)$ satisfies
\begin{equation}
\E_{x \sim \mathcal{D}} \left| h^*(x) - \Tr(O \rho(x)) \right|^2 \leq \epsilon
\end{equation}
with probability at least $1 - \delta$.
The training and prediction time of the classical ML model are bounded by $\mathcal{O}(n N) = n \log(n / \delta) 2^{\mathrm{polylog}(1 / \epsilon)}$.
\end{theorem}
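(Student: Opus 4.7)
The plan is to construct a target weight vector $\mathbf{w}^\dagger$ of controlled $\ell_1$ norm $B$ such that $\mathbf{w}^\dagger \cdot \phi(x)$ approximates $\Tr(O \rho(x))$ uniformly well, and then to read off the sample complexity from the standard Rademacher-complexity bound for $\ell_1$-constrained linear regression in the bounded-feature setting. First I would Pauli-expand $O = \sum_{P \in S^{(\mathrm{geo})}} \alpha_P P$. Since each summand $O_j$ of $O$ is constant-body with $\sum_j \norm{O_j}_\infty \le 1$, one has $\sum_P |\alpha_P| = \mathcal{O}(1)$ and the sum is supported on geometrically local Paulis, of which there are only $|S^{(\mathrm{geo})}| = \mathcal{O}(n)$. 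It is therefore enough to build, for each $P$, a good approximation of $\Tr(P\rho(x))$ inside the span of $\{\phi(\cdot)_{x', P} : x' \in X_P\}$.

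The approximation proceeds in two stages. By the spectral-gap assumption and quasi-adiabatic continuation combined with Lieb--Robinson bounds (as used in \cite{huang2021provably}), $\Tr(P\rho(x))$ depends on the coordinate $x_c$ only through a contribution that decays exponentially in $d_{\mathrm{obs}}(h_{j(c)}, P)$. Choosing $\delta_1 = \Theta(\mathrm{polylog}(1/\epsilon))$ therefore isolates a function $f_P$ of the projection $x_{I_P}$ with pointwise error $\mathcal{O}(\sqrt{\epsilon})$ and $|I_P| = \mathcal{O}(\delta_1^d) = \mathrm{polylog}(1/\epsilon)$. The same quasi-adiabatic estimate gives a uniform Lipschitz bound on $f_P$ in $x_{I_P}$, so a piecewise-constant approximation on the grid $X_P$ with spacing $\delta_2 = \Theta(\sqrt{\epsilon}/\sqrt{|I_P|})$ has pointwise error $\mathcal{O}(\sqrt{\epsilon})$. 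Since $\{T_{x', P} : x' \in X_P\}$ partitions $[-1,1]^m$ at fixed $P$, this approximation has the form $\sum_{x'} c_{x', P}\, \indicator[x \in T_{x', P}]$ with $|c_{x', P}| \le 1$, contributing at most $|X_P| = (1/\delta_2)^{|I_P|} = 2^{\mathrm{polylog}(1/\epsilon)}$ to the $\ell_1$ norm per observable $P$. Weighting by $|\alpha_P|$ and summing over $P$ yields $\norm{\mathbf{w}^\dagger}_1 \le B = 2^{\mathrm{polylog}(1/\epsilon)}$ and $|\mathbf{w}^\dagger \cdot \phi(x) - \Tr(O\rho(x))| = \mathcal{O}(\sqrt{\epsilon})$ uniformly, so the population squared risk of $\mathbf{w}^\dagger$ is $\mathcal{O}(\epsilon)$.

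With this witness in hand I would invoke the standard LASSO generalization bound. Since $\norm{\phi(x)}_\infty \le 1$, the Rademacher complexity of $\{\mathbf{w} \cdot \phi(\cdot) : \norm{\mathbf{w}}_1 \le B\}$ is $\mathcal{O}(B \sqrt{\log m_\phi / N})$; combined with Talagrand contraction for the squared loss on predictions bounded by $B$, this gives an excess squared risk of $\mathcal{O}(B^2 \sqrt{\log(m_\phi/\delta)/N})$ with probability at least $1-\delta$ when the algorithm is run with the matched hyperparameter. The label noise $|y_\ell - \Tr(O\rho(x_\ell))| \le \epsilon$ contributes only an additive $\mathcal{O}(\epsilon)$. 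Setting the bound to $\epsilon$ and using $\log m_\phi \le \log |S^{(\mathrm{geo})}| + \log \max_P |X_P| = \log n + \mathrm{polylog}(1/\epsilon)$ produces $N = \log(n/\delta) \cdot 2^{\mathrm{polylog}(1/\epsilon)}$. The training and prediction time then follow from observing that $\phi(x)$ has only $\mathcal{O}(n)$ nonzero entries (one per $P$) and each is evaluated in $2^{\mathrm{polylog}(1/\epsilon)}$ time.

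The principal obstacle I expect is the locality-plus-smoothness statement in the second paragraph: one needs, uniformly across the gapped phase, that $\Tr(P\rho(x))$ admits a local Lipschitz approximation whose locality radius grows only polylogarithmically in $1/\epsilon$ and whose Lipschitz constant is dimension-independent. Quasi-adiabatic continuation supplies both ingredients, but the precise constants feed directly into the polylog exponents of the final sample complexity, and absorbing them correctly into the choices of $\delta_1, \delta_2, B$ is the delicate bookkeeping. Once that structural lemma is in place, the grid argument, the $\ell_1$-norm accounting, and the LASSO generalization bound are essentially standard, and the announced scaling drops out of tallying $m_\phi$ and $B$ against the Rademacher bound.
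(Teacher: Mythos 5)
Your overall architecture matches the paper's: a local approximation of $\Tr(P\rho(x))$ via spectral flow plus Lieb--Robinson bounds with radius $\delta_1 = \Theta(\mathrm{polylog}(1/\epsilon))$, a piecewise-constant discretization on the grid $X_P$ (the sets $T_{x',P}$ do partition $[-1,1]^m$ at fixed $P$, as the paper verifies), an $\ell_1$-norm accounting for the witness vector, and the standard Rademacher/LASSO generalization bound with $\log m_\phi = \log n + \mathrm{polylog}(1/\epsilon)$. All of that is faithful to Appendices~\ref{sec:simple} and~\ref{sec:algorithm}.

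There is, however, one genuine gap, and it sits exactly where the paper places its second main technical contribution. You assert $\sum_P |\alpha_P| = \mathcal{O}(1)$ on the grounds that each $O_j$ is constant-body and $\sum_j \norm{O_j}_\infty \le 1$. But that norm condition is not a hypothesis of the theorem: the theorem assumes only that $O$ has eigenvalues in $[-1,1]$ (i.e., $\norm{O}_\infty \le 1$) and admits \emph{some} decomposition into geometrically local pieces, with no control on the norms of the individual summands, which could be large with cancellations. (The paper mentions the $\sum_j \norm{O_j}_\infty \le 1$ setting only as a separate remark.) Under the actual hypothesis, the trivial bound is $\sum_P |\alpha_P| \le |S^{(\mathrm{geo})}| = \mathcal{O}(n)$, which would force $B = \mathcal{O}(n)\, 2^{\mathrm{polylog}(1/\epsilon)}$ and, through the $B$-dependence of both the training-error witness and the generalization term $\mathcal{O}(B^2\sqrt{\log(m_\phi/\delta)/N})$, would destroy the $\log n$ sample complexity entirely. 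Closing this gap is Theorem~\ref{thm:main-normineq} (restated as Theorem~\ref{thm:normineq}): $\sum_P|\alpha_P| \le 2^d R\, 4^R \norm{O}_\infty$ for any sum of geometrically local observables, proved by an explicit test-state construction --- partitioning the lattice into blocks separated by buffer regions of width equal to the interaction range, selecting the shift maximizing the captured Pauli weight, and tensoring signed block states with the maximally mixed state on the buffer so that $\Tr(O\rho) \ge 4^{-R}\sum_{P\in U_{\vec{j}^*}}|\alpha_P|$. This is not a routine step, and your proposal needs either this inequality or the stronger hypothesis you implicitly invoked.
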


The output $y_\ell$ in the training data can be obtained by measuring $\Tr(O \rho(x_\ell))$ for the same observable~$O$ multiple times and averaging the outcomes.
Alternatively, we can use the classical shadow formalism~\cite{huang2020predicting, elben2020mixed, elben2022randomized, wan2022matchgate, bu2022classical, van2022hardware} that performs randomized Pauli measurements on $\rho(x_\ell)$ to predict $\Tr(O \rho(x_\ell))$ for a wide range of observables $O$.
Theorem~\ref{thm:rigor-guarantee} and the classical shadow formalism together yield the following corollary for predicting ground state representations.
We present the proof of Corollary~\ref{corollary:ground-state-rep} in Appendix~\ref{sec:rigor-guarantee}.

\begin{corollary}
\label{corollary:ground-state-rep}
Given $n, \delta > 0$, $\tfrac{1}{e} > \epsilon > 0$ and a training data set $\{x_\ell, \sigma_T(\rho(x_\ell))\}_{\ell = 1}^N$ of size
\begin{equation}
N = \log(n / \delta) 2^{\mathrm{polylog}(1 / \epsilon)},
\end{equation}
where $x_\ell$ is sampled from an unknown distribution $\mathcal{D}$ and $\sigma_T(\rho(x_\ell))$ is the classical shadow representation of the ground state $\rho(x_\ell)$ using $T$ randomized Pauli measurements. For $T = \tilde{\mathcal{O}}(\log(n)/\epsilon^2)$, then the proposed ML algorithm can learn a ground state representation $\hat{\rho}_{N, T}(x)$ that achieves
\begin{equation}
\E_{x \sim \mathcal{D}}|\Tr(O\hat{\rho}_{N, T}(x)) - \Tr(O\rho(x))|^2 \leq \epsilon
\end{equation}
for any observable $O$ with eigenvalues between $-1$ and $1$ that can be written as a sum of geometrically local observables with probability at least $1-\delta$.
\end{corollary}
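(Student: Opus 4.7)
The plan is to reduce learning a full ground-state representation to learning a large but manageable collection of single-observable expectation values, one per geometrically local Pauli, and then to invoke Theorem~\ref{thm:rigor-guarantee} separately on each of these instances. First I would use the classical shadow $\sigma_T(\rho(x_\ell))$ to produce, for every $P \in S^{(\mathrm{geo})}$, a scalar estimator $\hat o_P(x_\ell) \triangleq \Tr(P\,\sigma_T(\rho(x_\ell)))$ of $\Tr(P\rho(x_\ell))$. The standard Pauli-shadow concentration bound gives that $T = \tilde{\mathcal{O}}(\log(n)/\epsilon'^{\,2})$ randomized Pauli measurements suffice to guarantee, simultaneously across all $\mathcal{O}(n)$ geometrically local Paulis $P$ and across all $N$ training points $\ell$, the estimate $|\hat o_P(x_\ell) - \Tr(P\rho(x_\ell))| \leq \epsilon'$ with probability at least $1 - \delta/2$. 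The parameter $\epsilon'$ is chosen as a constant fraction of the eventual target $\epsilon$, so $T = \tilde{\mathcal{O}}(\log(n)/\epsilon^2)$ as claimed.

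Next, for every $P \in S^{(\mathrm{geo})}$ I would treat $\{(x_\ell, \hat o_P(x_\ell))\}_{\ell=1}^N$ as the training data required by Theorem~\ref{thm:rigor-guarantee}, instantiated with the observable $P$ and failure probability $\delta / (2|S^{(\mathrm{geo})}|)$. The theorem then returns a function $h^*_P(x) = \mathbf{w}^*_P \cdot \phi(x)$ with $\E_{x \sim \mathcal{D}}|h^*_P(x) - \Tr(P\rho(x))|^2 \leq \tilde{\epsilon}$ using a training set of size $N = \log(n/\delta)\,2^{\mathrm{polylog}(1/\tilde{\epsilon})}$. A union bound over the $\mathcal{O}(n)$ instances of the theorem absorbs only an additive $\log n$ factor into the sample complexity, so the common training set of size $N$ in the corollary statement can serve all Paulis at once. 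Crucially, the same $N$ physical Hamiltonians are used for every $P$: only the labels $\hat o_P(x_\ell)$ change, and these are all extractable from the same shadow $\sigma_T(\rho(x_\ell))$.

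I would then define the predicted representation implicitly by linearity: for any target observable $O$ that is a sum of geometrically local terms, expand $O = \sum_{P \in S^{(\mathrm{geo})}} \alpha_P(O)\, P$ in the Pauli basis and set $\Tr(O\,\hat{\rho}_{N,T}(x)) \triangleq \sum_P \alpha_P(O)\, h^*_P(x)$. Because each local term of $O$ acts on $\mathcal{O}(1)$ qubits, it contributes only $\mathcal{O}(1)$ Pauli components with coefficients controlled by its operator norm, so the $\ell_1$ coefficient sum $\sum_P |\alpha_P(O)|$ is bounded by a constant under the hypothesis $\|O\|_\infty \leq 1$. Combining the triangle inequality and Cauchy--Schwarz yields
\begin{equation}
\bigl|\Tr(O\hat\rho_{N,T}(x)) - \Tr(O\rho(x))\bigr|^2 \;\leq\; \Bigl(\sum_P |\alpha_P(O)|\Bigr)^2 \max_{P \in S^{(\mathrm{geo})}} \bigl|h^*_P(x) - \Tr(P\rho(x))\bigr|^2,
\end{equation}
and taking the expectation over $x \sim \mathcal{D}$ bounds the mean-square prediction error by $\mathcal{O}(\tilde{\epsilon})$. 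Choosing $\tilde{\epsilon}$ and $\epsilon'$ as small constant multiples of $\epsilon$ closes the argument, with the final failure probability $\leq \delta$ by a union bound over the shadow concentration event and the $\mathcal{O}(n)$ LASSO guarantees.

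The main obstacle is parameter bookkeeping: the label noise $\epsilon'$ coming from the classical shadow must feed cleanly into the hypothesis $|y_\ell - \Tr(O\rho(x_\ell))| \leq \epsilon$ of Theorem~\ref{thm:rigor-guarantee}, the factor $\sum_P |\alpha_P(O)|$ from the Pauli expansion must not blow up (this is where geometric locality of $O$ is essential, and where the extension to general bounded sums of few-body terms noted in the paragraph preceding Theorem~\ref{thm:rigor-guarantee} is invoked), and the union bound over $\mathcal{O}(n)$ applications of the theorem must only cost a $\log n$ factor — which it does, because the dependence on the failure probability inside $N$ is logarithmic.
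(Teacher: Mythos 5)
Your overall strategy is exactly the paper's: extract per-Pauli labels from the shadows, invoke the single-observable guarantee once per $P \in S^{(\mathrm{geo})}$ with failure probability $\delta/(2|S^{(\mathrm{geo})}|)$, union bound, and recombine by linearity. However, your final error-propagation step has a genuine flaw. You bound
\begin{equation}
\bigl|\Tr(O\hat\rho_{N,T}(x)) - \Tr(O\rho(x))\bigr|^2 \leq \Bigl(\sum_P |\alpha_P|\Bigr)^2 \max_{P} \bigl|h^*_P(x) - \Tr(P\rho(x))\bigr|^2
\end{equation}
and then take expectations, but the guarantee you have is on $\E_{x}|h^*_P(x) - \Tr(P\rho(x))|^2$ for each \emph{fixed} $P$, and $\E_x[\max_P \Delta_P(x)^2]$ is not controlled by $\max_P \E_x[\Delta_P(x)^2]$; the only generic bound is $\E_x[\max_P \Delta_P^2] \leq \sum_P \E_x[\Delta_P^2] = \mathcal{O}(n)\,\tilde\epsilon$, which would force $\tilde\epsilon = \mathcal{O}(\epsilon/n)$ and destroy the $\log(n)$ sample complexity. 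The paper avoids this by expanding the square into cross terms and applying Cauchy--Schwarz to each expectation, $\E_x[|\Delta_{P_1}||\Delta_{P_2}|] \leq \sqrt{\E_x \Delta_{P_1}^2}\,\sqrt{\E_x \Delta_{P_2}^2}$, which resums to $\bigl(\sum_P|\alpha_P|\bigr)^2 \tilde\epsilon$ with no factor of $n$. Your argument needs this replacement to go through.

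A second, smaller gap: your justification that $\sum_P |\alpha_P(O)| = \mathcal{O}(1)$ (``each local term contributes $\mathcal{O}(1)$ Pauli components with coefficients controlled by its operator norm'') is not valid under the stated hypothesis. The corollary only assumes $\norm{O}_\infty \leq 1$; it does not bound $\sum_j \norm{O_j}_\infty$ for the individual geometrically local terms, and there can be $\mathcal{O}(n)$ of them, so the naive term-by-term count gives $\mathcal{O}(n)$, not $\mathcal{O}(1)$. The correct ingredient is the paper's Pauli $1$-norm bound (Theorem~\ref{thm:main-normineq} and Corollary~\ref{corollary:normineq}), which is a nontrivial result proved via an explicit test-state construction; you should cite it rather than argue it in one line. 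With these two repairs your proof coincides with the paper's.
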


We can also show that the problem of estimating ground state properties for the class of parameterized Hamiltonians $H(x) = \sum_j h_j(\vec{x}_j)$ considered in this work is hard for non-ML algorithms that cannot learn from data.
This is a manifestation of the computational power of data studied in \cite{huang2020power}.
The proof of Proposition~1 in~\cite{huang2021provably} constructs a parameterized Hamiltonian $H(x)$ that belongs to the family of parameterized Hamiltonians considered in this work and hence establishes the following.

\begin{prop}[A variant of Proposition 1 in~\cite{huang2021provably}]
Consider a randomized polynomial-time classical algorithm $\mathcal{A}$ that does not learn from data. Suppose for any smooth family of gapped 2D Hamiltonians $H(x) = \sum_j h_j(\vec{x}_j)$ and any single-qubit observable $O$, $\mathcal{A}$ can compute ground state properties $\Tr(O\rho(x))$ up to a constant error averaged over $x \in [-1, 1]^m$ uniformly.
Then, $\NP$-complete problems can be solved in randomized polynomial time.
\end{prop}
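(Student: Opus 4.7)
The plan is to invoke the reduction from Proposition~1 of \cite{huang2021provably} and verify that the Hamiltonian family it produces already belongs to the restricted class $H(x)=\sum_j h_j(\vec{x}_j)$ considered in the present work; once that check is made, the remainder of the hardness argument transfers verbatim.

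First, I would recall the reduction. \cite{huang2021provably} starts from an NP-complete problem with a 2D structure (for instance, a planar constraint satisfaction problem) and, given any instance $\mathcal{I}$ of size $n$, produces in polynomial time a parameter vector $x_{\mathcal{I}} \in [-1,1]^m$ together with a smooth family of 2D gapped geometrically local Hamiltonians $H(x)$ such that (i) the ground state $\rho(x_{\mathcal{I}})$ encodes a satisfying assignment of $\mathcal{I}$, and (ii) a specific single-qubit observable $O$ has expectation $\Tr(O\rho(x_{\mathcal{I}}))$ separating YES from NO instances by a constant additive gap independent of $n$.

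Second, I would verify that this construction already has the stratified form required here. In \cite{huang2021provably}, each scalar coordinate $x_c \in [-1,1]$ is wired into exactly one few-body term of the Hamiltonian, with smooth dependence on $x_c$. Hence the construction has the form $H(x) = \sum_j h_j(\vec{x}_j)$ with each $\vec{x}_j$ a constant-dimensional block of $x$, matching the present framework. Smoothness, the spectral gap, and 2D geometric locality are all explicit in the original construction, so no extra surgery on the Hamiltonian is needed.

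Third, I would convert the average-case guarantee of $\mathcal{A}$ into a worst-case decision procedure for $\mathcal{I}$. The encoding is arranged so that a constant fraction of $x \in [-1,1]^m$ drawn uniformly leads to $\Tr(O \rho(x))$ values that still encode the decision bit of $\mathcal{I}$, while the remaining coordinates parameterize inert degrees of freedom whose variation does not perturb the signal. The constant-error average guarantee of $\mathcal{A}$, combined with a polynomial number of random samples of $x$ and a median-of-means amplification, then recovers the decision bit with high probability, placing the NP-complete problem in randomized polynomial time. The main obstacle is precisely this average-to-worst-case step: one must engineer the embedding so that uniform sampling over $x \in [-1,1]^m$ still reveals the NP-hard signal despite the many continuous nuisance coordinates, and so that the constant additive tolerance permitted to $\mathcal{A}$ cannot wash out the signal bit. \cite{huang2021provably} handles this by amplifying the YES/NO contrast to an $n$-independent constant through redundancy in the encoding and by isolating the hard instance inside a designated block of coordinates whose ground-state property is stable under variation of the remaining coordinates; once this is carried over the proposition follows.
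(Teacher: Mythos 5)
Your proposal is correct and takes essentially the same approach as the paper: the paper's entire proof consists of the observation that the Hamiltonian family constructed in the proof of Proposition~1 of \cite{huang2021provably} already has the form $H(x) = \sum_j h_j(\vec{x}_j)$ required here, so the hardness argument (including the average-case-to-worst-case step you describe) carries over unchanged. Your verification in step two is exactly the check the paper relies on, and the remaining details you sketch are deferred to \cite{huang2021provably} in both your write-up and the paper itself.
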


{\renewcommand\addcontentsline[3]{} \section{Proof ideas} \label{sec:proofidea} }

We describe the key ideas behind the proof of Theorem~\ref{thm:rigor-guarantee}.
The proof is separated into three parts.
The first part in Appendix~\ref{sec:simple} describes the existence of a simple functional form that approximates the ground state property $\Tr(O \rho(x))$.
The second part in Appendix~\ref{sec:norminequality} gives a new bound for the $\ell_1$-norm of the Pauli coefficients of the observable $O$ when written in the Pauli basis.
The third part in Appendix~\ref{sec:algorithm} combines the first two parts, using standard tools from learning theory to establish the sample complexity corresponding to the prediction error bound given in Theorem~\ref{thm:rigor-guarantee}.
In the following, we discuss these three parts in detail.

\vspace{2em}
{\renewcommand\addcontentsline[3]{} \subsection{Simple form for ground state property}}

Using the spectral flow formalism \cite{bachmann2012automorphic,hastings2005quasiadiabatic,osborne2007simulating}, we first show that the ground state property can be approximated by a sum of local functions.
First, we write $O$ in the Pauli basis as $O = \sum_{P \in \{I, X, Y, Z\}^{\otimes n}} \alpha_P P$.
Then, we show that for every geometrically local Pauli observable $P$, we can construct a function $f_P(x)$ that depends only on coordinates in the subset $I_P$ of coordinates that parameterizes interaction terms $h_j$ near the Pauli observable $P$.
The function $f_P(x)$ is given by
\begin{equation} \label{eq:fP-def}
f_P(x) = \alpha_P \Tr(P \rho(\chi_P(x))),
\end{equation}
where $\chi_P(x) \in [-1, 1]^m$ is defined as $\chi_P(x)_c = x_c$ for coordinate $c \in I_P$ and $\chi_P(x)_c = 0$ for coordinates $c \not\in I_P$.
The sum of these local functions $f_P$ can be used to approximate the ground state property,
\begin{equation}
    \Tr(O \rho(x)) \approx \sum_{P \in S^{\mathrm{(geo)}}} f_P(x).
\end{equation}
The approximation only incurs an $\mathcal{O}(\epsilon)$ error if we consider $\delta_1 = \Theta(\log^2(1 / \epsilon))$ in the definition of $I_P$.
The key point is that correlations decay exponentially with distance in the ground state of a gapped local Hamiltonian; therefore, the properties of the ground state in a localized region are not sensitive to the details of the Hamiltonian at points far from that localized region.
Furthermore, the local function $f_P$ is smooth.
The smoothness property allows us to approximate each local function $f_P$ by a simple discretization,
\begin{equation}
    f_P(x) \approx \sum_{x' \in X_P} f_P(x') \indicator\left[x \in T_{x', P}\right].
\end{equation}
One could also use other approximations for this step, such as Fourier approximation or polynomial approximation.
For simplicity, we consider a discretization-based approximation with $\delta_2 = \Theta(1 / \epsilon)$
in the definition of $T_{x',P}$ to incur at most an $\mathcal{O}(\epsilon)$ error.
The point is that, for a sufficiently smooth function $f_P(x)$ that depends only on coordinates in $I_P$ and a sufficiently fine lattice over the coordinates in $I_P$, replacing $x$ by the nearest lattice point (based only on coordinates in $I_P$) causes only a small error.
Using the definition of the feature map $\phi(x)$ in Eq.~\eqref{eq:phi-def}, we have
\begin{equation} \label{eq:w-prime-def}
    \Tr(O \rho(x)) \approx \sum_{P \in S^{\mathrm{(geo)}}} \sum_{x' \in X_P} f_P(x') \phi(x)_{x', P} = \mathbf{w}' \cdot \phi(x),
\end{equation}
where $\mathbf{w}'$ is an $m_\phi$-dimensional vector indexed by $x' \in X_P$ and $P \in S^{\mathrm{geo}}$ given by $\mathbf{w}_{x', P}' = f_P(x')$.
The approximation is accurate if we consider $\delta_1 = \Theta(\log^2(1 / \epsilon))$ and $\delta_2 = \Theta(1 / \epsilon)$.
Thus, we can see that the ML algorithm with the proposed feature mapping indeed has the capacity to approximately represent the target function $\Tr(O \rho(x))$.
As a result, we have the following lemma.

\vspace{0.5em}
\begin{lemma}[Training error bound]
\label{lemma:main-training}
The function given by $\mathbf{w}' \cdot \phi(x)$ achieves a small training error:
\begin{equation}
    \frac{1}{N} \sum_{\ell=1}^N \left| \mathbf{w}' \cdot \phi(x_\ell) - y_\ell \right|^2 \leq 0.53 \epsilon.
\end{equation}
\end{lemma}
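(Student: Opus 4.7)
The plan is to bound the summand $|\mathbf{w}' \cdot \phi(x_\ell) - y_\ell|^2$ pointwise for every $\ell$ and then take an average. By the triangle inequality, it suffices to control three separate quantities: the locality error $E_1(x) = |\Tr(O\rho(x)) - \sum_{P \in S^{(\mathrm{geo})}} f_P(x)|$, the discretization error $E_2(x) = |\sum_{P} f_P(x) - \mathbf{w}' \cdot \phi(x)|$, and the data noise $E_3(x_\ell) = |\Tr(O\rho(x_\ell)) - y_\ell|$. The third is at most $\epsilon$ uniformly by the hypothesis of Theorem~\ref{thm:rigor-guarantee}, so the heart of the argument is to push $E_1$ and $E_2$ below $\mathcal{O}(\epsilon)$ with explicit small constants.

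For $E_1$, I would decompose $O = \sum_P \alpha_P P$ in the Pauli basis and argue $P$-by-$P$. The key physical input is that the ground state of a gapped local Hamiltonian obeys exponential clustering: the reduced density matrix on the support of $P$ depends only weakly on couplings far away. Concretely, using the spectral-flow / quasi-adiabatic continuation generator, one can interpolate between $H(x)$ and $H(\chi_P(x))$ (which agree on all interactions within distance $\delta_1$ of $P$) and bound $|\Tr(P\rho(x)) - \Tr(P\rho(\chi_P(x)))|$ by an expression that decays like $\exp(-\Omega(\sqrt{\delta_1}))$ (or faster) via Lieb-Robinson/Hastings-type estimates. Summing the per-$P$ error against $|\alpha_P|$ (which will be controlled in the next section of the paper, Appendix~\ref{sec:norminequality}) gives $E_1(x) \le c_1 \epsilon$ for any prescribed $c_1$ provided $\delta_1 = \Theta(\log^2(1/\epsilon))$.

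For $E_2$, the function $f_P$ depends only on the $|I_P|$ coordinates in $I_P$, and these are bounded in number (of order $\delta_1^d$). A standard estimate from the spectral-flow construction also shows $f_P$ is Lipschitz, or even smooth, in each of its coordinates with a modulus that grows only polynomially in $\delta_1$. Replacing $x$ by the nearest point $x' \in X_P$ on the $\delta_2$-grid changes each relevant coordinate by at most $\delta_2/2$, so $|f_P(x) - f_P(x')| \le L_P \delta_2 |I_P|/2$ for some Lipschitz constant $L_P$. Summing over $P$, and again using the forthcoming bound on $\sum_P |\alpha_P|$, yields $E_2(x) \le c_2 \epsilon$ when $\delta_2 = \Theta(1/\epsilon)$ with the hidden constant small enough.

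Combining, $|\mathbf{w}' \cdot \phi(x_\ell) - y_\ell| \le E_1(x_\ell) + E_2(x_\ell) + E_3(x_\ell) \le (c_1 + c_2 + 1)\epsilon$, so squaring and using $\epsilon < 1/e$ gives $|\mathbf{w}' \cdot \phi(x_\ell) - y_\ell|^2 \le (c_1+c_2+1)^2 \epsilon^2 \le 0.53\,\epsilon$, provided the hidden constants $c_1,c_2$ in the choices of $\delta_1,\delta_2$ are tuned small enough that $(c_1+c_2+1)^2/e \le 0.53$. Averaging over $\ell$ preserves the bound. The main obstacle is the locality approximation $E_1$: making the spectral-flow argument produce an error that is not merely $\mathcal{O}(\epsilon)$ but an explicitly controllable fraction of $\epsilon$ requires carefully tracking the decay rate and prefactors in the quasi-adiabatic bound together with the $\ell_1$-norm bound on the Pauli coefficients deferred to Appendix~\ref{sec:norminequality}. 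Once that is in place, $E_2$ and the final constant bookkeeping are routine.
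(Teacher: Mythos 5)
Your proposal follows essentially the same route as the paper: the paper's proof of this lemma likewise applies the triangle inequality to split the pointwise error into the label noise $|\Tr(O\rho(x_\ell)) - y_\ell| \le \epsilon$ and the approximation error $|g(x_\ell) - \Tr(O\rho(x_\ell))|$, with the latter further split into a spectral-flow/Lieb-Robinson locality term (controlled by $\delta_1 = \Theta(\log^2(1/\epsilon))$) and a smoothness-based discretization term (controlled by $\delta_2$), each weighted by the Pauli $1$-norm $\sum_P|\alpha_P|$ bounded in Appendix~\ref{sec:norminequality}. Your final bookkeeping, tuning the hidden constants so that the total pointwise error is at most $1.2\epsilon$ and then using $\epsilon < 1/e$ to get $(1.2\epsilon)^2 = 1.44\epsilon^2 \le 0.53\epsilon$, is exactly the paper's choice of $\epsilon_1 = 0.2\epsilon$, $\epsilon_2 = \epsilon$.
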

\noindent This lemma follows from the two facts that $\mathbf{w}' \cdot \phi(x) \approx \Tr(O \rho(x))$ and $\Tr(O \rho(x_\ell)) \approx y_\ell$.

{\renewcommand\addcontentsline[3]{} \subsection{Norm inequality for observables}}

The efficiency of an $\ell_1$-regularized regression depends greatly on the $\ell_1$ norm of the vector $\mathbf{w}'$.
Moreover, the $\ell_1$-norm of $\mathbf{w}'$ is closely related to the observable $O = \sum_j O_j$ given as a sum of geometrically local observables with $\norm{O}_\infty \leq 1$.
In particular, again writing $O$ in the Pauli basis as $O = \sum_{Q \in \{I, X, Y, Z\}^{\otimes n}} \alpha_Q Q$, the $\ell_1$-norm $\norm{\mathbf{w}'}_1$ is closely related to
$\sum_{Q} \left| \alpha_Q \right|,$
which we refer to as the Pauli $1$-norm of the observable $O$.
While it is well known that
\begin{equation}
    \sum_{Q} \left| \alpha_Q \right|^2 = \Tr(O^2) / 2^n \leq \norm{O}_\infty^2,
\end{equation}
there do not seem to be many known results characterizing $\sum_{Q} \left| \alpha_Q \right|$.
To understand the Pauli $1$-norm,
we prove the following theorem.

\vspace{0.5em}
\begin{theorem}[Pauli $1$-norm bound]
\label{thm:main-normineq}
Let $O  = \sum_{Q \in \{I, X, Y, Z\}^{\otimes n}} \alpha_Q Q$ be an observable that can be written as a sum of geometrically local observables. We have,
\begin{equation}
    \sum_Q |\alpha_Q| \leq C \norm{O}_\infty,
\end{equation}
for some constant $C$.
\end{theorem}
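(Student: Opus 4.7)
The plan is to organize the Pauli $1$-norm as a sum of operator norms of a constant number of pieces built from mutually disjoint supports, and then to control this sum by $\norm{O}_\infty$. First, I would construct the overlap graph on geometrically $k$-local Paulis, placing an edge between two Paulis whenever their supports share at least one qubit. Because each such Pauli overlaps with only $O(1)$ others under geometric $k$-locality in spatial dimension $d$, this graph has bounded degree and admits a proper coloring with $G = G(k,d)$ colors. Writing $\mathcal{C}_1, \ldots, \mathcal{C}_G$ for the color classes and $O = \sum_{g=1}^{G} O^{(g)}$ with $O^{(g)} = \sum_{Q \in \mathcal{C}_g} \alpha_Q Q$, the Paulis inside any single $O^{(g)}$ then have pairwise disjoint supports.

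Inside each color class this disjoint-support property yields an exact identity. The Paulis in $\mathcal{C}_g$ commute, each has spectrum $\{\pm 1\}$, and they can be simultaneously diagonalized by product (stabilizer) states; the spectrum of $O^{(g)}$ is exactly $\left\{\sum_{Q \in \mathcal{C}_g} \epsilon_Q \alpha_Q : \epsilon_Q \in \{\pm 1\}\right\}$. Taking $\epsilon_Q = \operatorname{sign}(\alpha_Q)$ gives $\norm{O^{(g)}}_\infty = \sum_{Q \in \mathcal{C}_g} |\alpha_Q|$, so summing over $g$ produces the key identity
\begin{equation*}
    \sum_Q |\alpha_Q| \;=\; \sum_{g=1}^{G} \norm{O^{(g)}}_\infty.
\end{equation*}
At this point the theorem is reduced to proving $\sum_g \norm{O^{(g)}}_\infty \leq C' \norm{O}_\infty$ for some constant $C'$ depending only on $G$ (and hence only on $k$ and $d$).

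For this remaining step, I would produce, for the dominant class $g^\star = \argmax_g \norm{O^{(g)}}_\infty$, a product-state witness $\ket{\psi_{g^\star}}$ saturating $\norm{O^{(g^\star)}}_\infty$ and simultaneously controlling the cross contributions $\bra{\psi_{g^\star}} O^{(g)} \ket{\psi_{g^\star}}$ for every other $g$. The stabilizer constraints of $\mathcal{C}_{g^\star}$ fix only one joint sign per local region, so many single-qubit basis choices remain free, and the plan is to exploit this residual freedom via a counting or averaging argument over the free sign flips to ensure the other $G-1$ classes cannot collectively cancel the gain from $O^{(g^\star)}$. This would give $\norm{O}_\infty \gtrsim \norm{O^{(g^\star)}}_\infty \geq (1/G) \sum_g \norm{O^{(g)}}_\infty$, hence $\sum_Q |\alpha_Q| \leq C \norm{O}_\infty$ with $C = O(G)$. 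This last step is the main obstacle: the naive triangle inequality $\norm{O}_\infty \leq \sum_g \norm{O^{(g)}}_\infty$ points in the wrong direction, so the proof must genuinely preclude destructive interference between color classes, which is the new content beyond the $\ell_2$-type bound $\sum_Q |\alpha_Q|^2 \leq \norm{O}_\infty^2$ recalled just before the theorem.
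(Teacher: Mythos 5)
Your first two steps are sound: the bounded-degree coloring into $G = \mathcal{O}(1)$ classes of pairwise-disjoint-support Paulis is valid, and the identity $\norm{O^{(g)}}_\infty = \sum_{Q \in \mathcal{C}_g} |\alpha_Q|$ for a disjoint-support sum is correct (product eigenstates realize every sign pattern). But the proposal stops exactly where the theorem's content lies. The inequality $\sum_g \norm{O^{(g)}}_\infty \leq C' \norm{O}_\infty$ that you defer to a ``counting or averaging argument over the free sign flips'' is equivalent to the theorem itself, and the sketch does not survive scrutiny: a pure product eigenstate $\ket{\psi_{g^\star}}$ of the dominant class generically has \emph{nonzero, sign-uncontrolled} expectation on Paulis of the other classes, even those supported entirely inside the support of a single $Q \in \mathcal{C}_{g^\star}$ (e.g.\ $Q = X_1 X_2$ with witness $\ket{++}$ gives $\langle X_1\rangle = 1$, and an adversary can load weight onto $-X_1$). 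Averaging over the residual eigenbasis freedom replaces the witness by a mixture of eigenprojectors $\prod_Q \tfrac{1}{2}(I + \epsilon_Q Q)$, whose Pauli expansion still contains all \emph{products} of class members, so cross terms from other classes do not cancel; nothing in the proposal precludes the destructive interference you correctly identify as the obstacle.

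The paper closes this gap with two ingredients your sketch lacks. First, instead of an extremal eigenstate, it uses on each local block the sub-normalized test state $\rho_{\vec i} \propto I + \tfrac{1}{k}\sum_{Q} \pm Q$ containing \emph{only} the identity and the selected Paulis in its Pauli expansion; by orthogonality of Pauli strings under the trace, every unselected Pauli then contributes exactly zero to $\Tr(O\rho)$, at the cost of a benign $4^{-R}$ attenuation (since $k \leq 4^R$). Second, it groups Paulis not by a graph coloring but by blocks separated by buffer regions of width equal to the locality range, on which $\rho$ is maximally mixed; this guarantees no geometrically local Pauli can straddle two blocks or equal a product of Paulis from distinct blocks, so the only surviving contributions are the selected $|\alpha_Q|$ themselves. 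Choosing the best of the $2^d R$ shifted block decompositions by averaging then yields $\sum_Q |\alpha_Q| \leq 2^d R \cdot 4^R \norm{O}_\infty$. Your color-class identity is an elegant reformulation, but without a mechanism that makes the cross-class expectations vanish (rather than merely hoping they do not conspire), the argument is incomplete.
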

\noindent A series of related norm inequalities are also established in \cite{huang2022learning}.
However, the techniques used in this work differ significantly from those in~\cite{huang2022learning}.

{\renewcommand\addcontentsline[3]{}\subsection{Prediction error bound for the ML algorithm}}

\noindent Using the construction of the local function $f_P(x_{c}, c \in I_P)$ given in Eq.~\eqref{eq:fP-def} and the vector $\mathbf{w}'$ defined in Eq.~\eqref{eq:w-prime-def}, we can show that
\begin{equation}
    \norm{\mathbf{w}'}_1 \leq \max_{P \in S^{\mathrm{(geo)}}} \left| X_P \right| \left( \sum_{Q} \left| \alpha_Q \right| \right) \leq \left(1 + \frac{2}{\delta_2}\right)^{\mathrm{poly}(\delta_1)} \left( \sum_{Q} \left| \alpha_Q \right| \right).
\end{equation}
The second inequality follows by bounding the size of our discrete subset $X_P$ and noticing that $|I_P| = \mathrm{poly}(\delta_1)$.
The norm inequality in Theorem~\ref{thm:main-normineq} then implies
\begin{equation}
\norm{\mathbf{w}'}_1 \leq C \norm{O}_\infty \left(1 + \frac{2}{\delta_2}\right)^{\mathrm{poly}(\delta_1)}  \leq 2^{\mathrm{poly} \log(1 / \epsilon)},
\end{equation}
because $\norm{O}_\infty \leq 1$ and $\delta_1 = \Theta(\log^2(1 / \epsilon)), \delta_2 = \Theta(1 / \epsilon)$.
This shows that there exists a vector $\mathbf{w}'$ that has a bounded $\ell_1$-norm and achieves a small training error.
The existence of $\mathbf{w}'$ guarantees that the vector $\mathbf{w}^*$ found by the optimization problem with the hyperparameter $B \geq \norm{\mathbf{w}'}_1$ will yield an even smaller training error.
Using the norm bound on $\mathbf{w}'$, we can choose the hyperparameter $B$ to be $B = 2^{\mathrm{poly} \log(1/\epsilon)}$.
Using standard learning theory \cite{tibshirani1996regression, mohri2018foundations}, we can thus obtain
\begin{equation}
    \E_{x \sim \mathcal{D}} \left| h^*(x) - \Tr(O \rho(x)) \right|^2 \leq \frac{1}{N} \sum_{\ell=1}^N \left| \mathbf{w}^* \cdot \phi(x_\ell) - y_\ell \right|^2 + \mathcal{O}\left( B \sqrt{\frac{\log(m_{\phi} / \delta)}{N}} \right)
\end{equation}
with probability at least $1 - \delta$.
The first term is the training error for $\mathbf{w}^*$, which is smaller than the training error of $0.53 \epsilon$ for $\mathbf{w}'$ from Lemma~\ref{lemma:main-training}.
Thus, the first term is bounded by $0.53 \epsilon$.
The second term is determined by $B$ and $m_\phi$, where we know that $m_\phi \leq |S^{\mathrm{(geo)}}| \left(1 + \frac{2}{\delta_2}\right)^{\mathrm{poly}(\delta_1)}$ and $|S^{\mathrm{(geo)}}| = \mathcal{O}(n)$.
Hence, with a training data size of
\begin{equation}
N = \mathcal{O}\left( \log(n / \delta) 2^{\mathrm{polylog}(1 / \epsilon)} \right),
\end{equation}
we can achieve a prediction error of $\epsilon$ with probability at least $1 - \delta$ for any distribution $\mathcal{D}$ over $[-1, 1]^m$.

\vspace{2em}
{\renewcommand\addcontentsline[3]{} \section{Numerical experiments}}

\begin{figure}[t]
    \centering
    \includegraphics[width=1.0\linewidth]{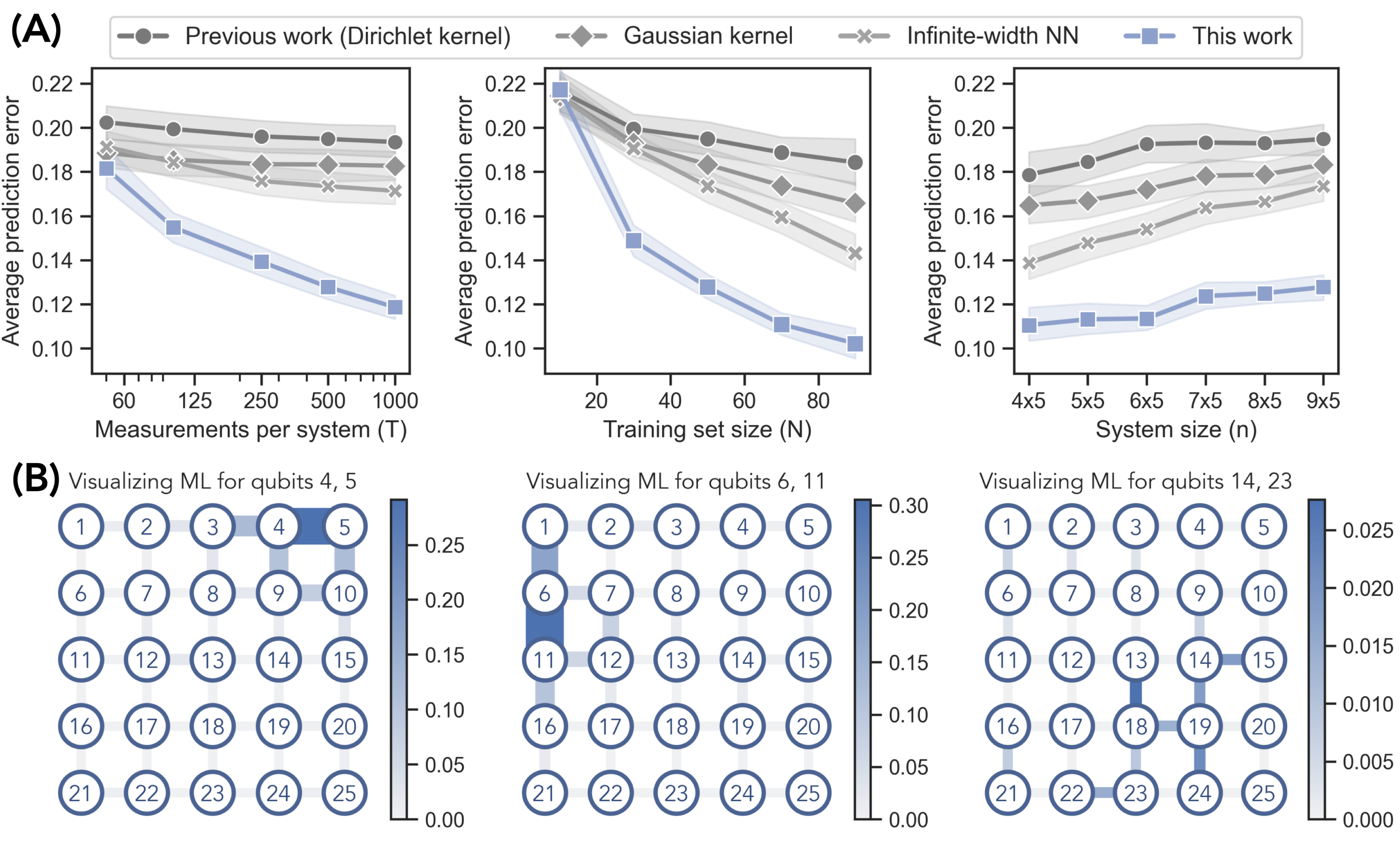}
    \caption{\textbf{Predicting ground state properties in 2D antiferromagnetic random Heisenberg models. (A)} Prediction error. Each point indicates the root-mean-square error for predicting the correlation function in the ground state (averaged over Heisenberg model instances and each pair of neighboring spins).
    Left figure fixes the training set size $N$ to be $50$ and system size $n$ to be $9 \times 5 = 45$. Center figure fixes the shadow size $T$ to be $500$ and $n = 45$. Right figure fixes $N = 50$ and $T = 500$.
    The shaded regions show the standard deviation over different spin pairs.
    \textbf{(B)} Visualization. We plot how much each coupling $J_{ij}$ contributes to the prediction of the correlation function over different pairs of qubits in the trained ML model. Thicker and darker edges correspond to higher contributions. We see that the ML model learns to utilize the local geometric structure. }
    \label{fig:heisenberg}
\end{figure}

In this section, we present numerical experiments to assess the performance of the classical ML algorithm in practice.
The results illustrate the improvement of the algorithm presented in this work compared to those considered in~\cite{huang2021provably}, the mild dependence of the sample complexity on the system size~$n$, and the inherent geometry exploited by the ML models.
We consider the classical ML models described in Section~\ref{subsec:ML-model}, utilizing a random Fourier feature map~\cite{rahimi2007random}.
While the indicator function feature map was a useful tool to obtain our rigorous guarantees, random Fourier features are more robust and commonly used in practice.
Furthermore, we determine the optimal hyperparameters using cross-validation to minimize the root-mean-square error (RMSE) and then evaluate the performance of the chosen ML model using a test set.
The models and hyperparameters are further detailed in Appendix~\ref{sec:numerics}.

For these experiments, we consider the two-dimensional antiferromagnetic random Heisenberg model consisting of $4\times 5 = 20$ to $9\times 5 = 45$ spins.
In this setting, the spins are placed on sites in a 2D lattice.
The Hamiltonian is
\begin{equation}
H = \sum_{\langle ij \rangle} J_{ij} (X_i X_j + Y_i Y_j + Z_i Z_j),
\end{equation}
where the summation ranges over all pairs $\langle ij \rangle$ of neighboring sites on the lattice and the couplings $\{J_{ij}\}$ are sampled uniformly from the interval $[0,2]$.
Here, the vector $x$ is a list of all couplings $J_{ij}$ so that the dimension of the parameter space is $m = O(n)$, where $n$ is the system size.

We trained a classical ML model using randomly chosen values of the parameter vector $x = \{J_{ij}\}$.
For each parameter vector of random couplings sampled uniformly from $[0,2]$, we approximated the ground state using the same method as in~\cite{huang2021provably}, namely with the density-matrix renormalization group (DMRG)~\cite{white1992density} based on matrix product states (MPS)~\cite{SCHOLLWOCK201196}.
The classical ML model was trained on a data set $\{x_\ell, \sigma_T(\rho(x_\ell))\}_{\ell = 1}^N$ with $N$ randomly chosen vectors $x$, where each $x$ corresponds to a classical representation $\sigma_T(\rho(x_\ell))$ created from $T$ randomized Pauli measurements \cite{huang2020predicting}.
The ML algorithm predicted the classical representation of the ground state for a new vector $x$. These predicted classical representations were used to estimate two-body correlation functions, i.e., the expectation value of
\begin{equation}
C_{ij} = \frac{1}{3}(X_i X_j + Y_i Y_j + Z_i Z_j),
\end{equation}
for each pair of qubits $\langle ij \rangle$ on the lattice.

In Figure~\ref{fig:heisenberg}A, we can clearly see that the ML algorithm proposed in this work consistently outperforms the ML models implemented in~\cite{huang2021provably}, which includes the rigorous polynomial-time learning algorithm based on Dirichlet kernel proposed in \cite{huang2021provably}, Gaussian kernel regression \cite{cortes1995support, murphy2012machine}, and infinite-width neural networks \cite{jacot2018neural, neuraltangents2020}.
Figure~\ref{fig:heisenberg}A (Left) and \ref{fig:heisenberg}A (Center) show that as the number $T$ of measurements per data point or the training set size $N$ increases, the prediction performance of the proposed ML algorithm improves faster than the other ML algorithms.
This observation reflects the improvement in the sample complexity dependence on prediction error $\epsilon$. The sample complexity in \cite{huang2021provably} depends exponentially on $1 / \epsilon$, but Theorem~\ref{thm:rigor-guarantee} establishes a quasi-polynomial dependence on $1 / \epsilon$.
From Figure~\ref{fig:heisenberg}A (Right), we can see that the ML algorithms do not yield a substantially worse prediction error as the system size $n$ increases.
This observation matches with the $\log(n)$ sample complexity in Theorem~\ref{thm:rigor-guarantee}, but not with the $\mathrm{poly}(n)$ sample complexity proven in \cite{huang2021provably}.

An important step for establishing the improved sample complexity in Theorem~\ref{thm:rigor-guarantee} is that a property on a local region $R$ of the quantum system only depends on parameters in the neighborhood of region $R$.
In Figure~\ref{fig:heisenberg}B, we visualize where the trained ML model is focusing on when predicting the correlation function over a pair of qubits.
A thicker and darker edge is considered to be more important by the trained ML model.
Each edge of the 2D lattice corresponds to a coupling $J_{ij}$.
For each edge, we sum the absolute values of the coefficients in the ML model that correspond to a feature that depends on the coupling $J_{ij}$.
We can see that the ML model learns to focus only on the neighborhood of a local region $R$ when predicting the ground state property.

\vspace{2em}
{\renewcommand\addcontentsline[3]{}\section{Outlook}}

The classical ML algorithm and the advantage over non-ML algorithms as proven in \cite{huang2021provably} illustrate the potential of using ML algorithms to solve challenging quantum many-body problems.
However, the classical ML model given in~\cite{huang2021provably} requires a large amount of training data.
Although the need for a large dataset is a common trait in contemporary ML algorithms \cite{brown2020language, deng2009imagenet, saharia2022photorealistic},
one would have to perform an equally large number of physical experiments to obtain such data.
This makes the advantage of ML over non-ML algorithms challenging to realize in practice.
The sample complexity $N = \mathcal{O}(\log n)$ of the ML algorithm proposed here illustrates that this advantage could potentially be realized after training with data from a small number of physical experiments.
The existence of a theoretically backed ML algorithm with a $\log(n)$ sample complexity raises the hope of designing good ML algorithms to address practical problems in quantum physics, chemistry, and materials science by learning from the relatively small amount of data that we can gather from real-world experiments.

Despite the progress in this work, many questions remain to be answered.
Recently, powerful machine learning models such as graph neural networks have been used to empirically demonstrate a favorable sample complexity when leveraging the local structure of Hamiltonians in the 2D random Heisenberg model~\cite{wang2022predicting, tran2022shadows}.
Is it possible to obtain rigorous theoretical guarantees for the sample complexity of neural-network-based ML algorithms for predicting ground state properties?
An alternative direction is to notice that the current results have an exponential scaling in the inverse of the spectral gap. Is the exponential scaling a fundamental nature of this problem? Or do there exist more efficient ML models that can efficiently predict ground state properties for gapless Hamiltonians?

We have focused on the task of predicting local observables in the ground state, but many other physical properties are also of high interest.
Can ML models predict low-energy excited state properties? Could we achieve a sample complexity of $N = \mathcal{O}(\log n)$ for predicting any observable~$O$?
Another important question is whether there is a provable quantum advantage in predicting ground state properties.
Could we design quantum ML algorithms that can predict ground state properties by learning from far fewer experiments than any classical ML algorithm? Perhaps this could be shown by combining ideas from adiabatic quantum computation \cite{farhi2000quantum,mizel2007simple,childs2001robustness,aharonov2008adiabatic,barends2016digitized,albash2018adiabatic,du2010nmr,wan2020fast} and recent techniques for proving quantum advantages in learning from experiments \cite{aharonov2021quantum, chen2022exponential, huang2022foundations, huang2021information, huang2022quantum}.
It remains to be seen if quantum computers could provide an unconditional super-polynomial advantage over classical computers in predicting ground state properties.

\vspace{2em}
{\renewcommand\addcontentsline[3]{} \subsection*{Acknowledgments:}}
{ The authors thank Chi-Fang Chen, Sitan Chen, Johannes Jakob Meyer, and Spiros Michalakis for valuable input and inspiring discussions.
We thank Emilio Onorati, Cambyse Rouz\'e, Daniel Stilck Fran\c ca, and James D. Watson for sharing a draft of their new results \cite{onorati2023learning} on efficiently predicting properties of states in thermal phases of matter with exponential decay of correlation and in quantum phases of matter with local topological quantum order.
LL is supported by Caltech Summer Undergraduate Research Fellowship (SURF), Barry M. Goldwater Scholarship, and Mellon Mays Undergraduate Fellowship.
HH is supported by a Google PhD fellowship and a MediaTek Research Young Scholarship. JP acknowledges funding from  the U.S. Department of Energy Office of Science, Office of Advanced Scientific Computing Research, (DE-NA0003525, DE-SC0020290), and the National Science Foundation (PHY-1733907). The Institute for Quantum Information and Matter is an NSF Physics Frontiers Center.  }

\vspace{2em}
\bibliographystyle{unsrt}
{\renewcommand\addcontentsline[3]{} \bibliography{refs}}

\newpage
\appendix
\appendixpage

\tableofcontents

\vspace{1em}

These appendices provide detailed proofs of the statements in the main text. We discuss our main contribution that $\Tr(O\rho)$ can be approximated by a machine learning model given training data scaling logarithmically in system size, where $O$ is an unknown observable and $\rho$ is the ground state of a Hamiltonian. The proof of this result has three main parts. The first two parts yield important results necessary for the design of the ML algorithm and its sample complexity.

We recommend that readers start with Appendix~\ref{sec:simple}, which derives a simpler form for the ground state property $\Tr(O\rho(x))$ that we wish to predict. In Appendix~\ref{sec:norminequality}, we give a norm inequality characterizing the Pauli coefficients of any observable that can be written as a sum of geometrically local observables.
The norm inequality reveals a structure of the ground state property $\Tr(O \rho(x))$ that we can use to design an ML algorithm that uses very few training data.
In Appendix~\ref{sec:algorithm}, we present our ML algorithm and prove its sample complexity using standard tools in ML theory, including known guarantees on the LASSO (least absolute shrinkage and selection operator) algorithm's performance.
Finally, in Appendix~\ref{sec:numerics}, we describe numerical experiments performed to assess the performance of the algorithm in practice.

\section{Simple form for ground state property}
\label{sec:simple}

This section is dedicated to deriving a simpler form for the ground state property $\Tr(O\rho(x))$ as a function of $x$. We consider the assumptions~(a)-(d) from Appendix F.5 of~\cite{huang2021provably}, with (b) and (d) adjusted for our setting, which we reproduce here for convenience:
\begin{enumerate}[(a)]
\item \textit{Physical system:} \label{assum:a} We consider $n$ finite-dimensional quantum
systems that are arranged at locations, or sites, in a $d$-dimensional space, e.g., a spin chain ($d=1$), a square lattice ($d = 2$), or a cubic lattice ($d=3$). Unless specified otherwise, our big-$\mathcal{O},\Omega,\Theta$ notation is with respect to the thermodynamic limit $n \to \infty$.
\item \textit{Hamiltonian:} \label{assum:b}$H(x)$ decomposes into a sum of geometrically local terms $H(x) = \sum_{j=1}^L h_j(\vec{x}_j)$, each of which only acts on an $\mathcal{O}(1)$ number of sites in a ball of $\mathcal{O}(1)$ radius. Here, $\vec{x}_j \in \mathbb{R}^{q}, q = \mathcal{O}(1)$ and $x$ is the concatenation of $L$ vectors $\vec{x}_1,\dots, \vec{x}_L$ with dimension $m = Lq = \mathcal{O}(n)$.
Individual terms $h_j(\vec{x}_j)$ obey $\norm{h_j(\vec{x}_j)}_\infty \leq 1$ and also have bounded directional derivative: $\norm{\partial h_j/\partial \hat{u}}_\infty \leq 1$, where $\hat{u}$ is a unit vector in parameter space.
\item \textit{Ground-state subspace:} \label{assum:c} We consider the ground state $\rho(x)$ for the Hamiltonian $H(x)$ to be defined as $\rho(x) = \lim_{\beta \to \infty} e^{-\beta H(x)}/\Tr(e^{-\beta H(x)})$. This is equivalent to a uniform mixture over the eigenspace of $H(x)$ with the minimum eigenvalue.
\item \textit{Observable:} \label{assum:d} $O$ can be written as a sum of few-body observables $O = \sum_{j} O_j$, where each $O_j$ only acts on an $\mathcal{O}(1)$ number of sites. Hence, we can also write $O = \sum_{P \in S^{\mathrm{(geo)}}} \alpha_P P$, where $P \in \{I, X, Y, Z\}^{\otimes n}$ and $S^{\mathrm{(geo)}}$ is the set of geometrically local Pauli observables (defined more precisely in Def.~\ref{def:setglpauli}).
The results in this section hold for any $O$ of the above form. However, we only focus on $O$ given as a sum of geometrically local observables $\sum_{j} O_j$, where each $O_j$ only acts on an $\mathcal{O}(1)$ number of sites in a ball of $\mathcal{O}(1)$ radius.
\end{enumerate}
Under these assumptions, we can prove that $\Tr(O\rho(x))$ can be approximated by a sum of weighted indicator functions, where the weights satisfy a $\ell_1$-norm bound.
A precise statement of this result is found in Appendix~\ref{sec:simpleformal}. We first show that $\Tr(O\rho(x))$ can be approximated by a sum of smooth local functions in Appendix~\ref{sec:approxsmooth}. Then, we prove that this sum of smooth local functions can be approximated by simple functions in Appendix~\ref{sec:discretization}. Finally, we put everything together in Appendix~\ref{sec:simpleformal}. Several technical lemmas for bounding integrals are needed throughout these proofs, which are compiled in Appendix~\ref{sec:techlemmas}.

\subsection{Approximation by a sum of smooth functions}
\label{sec:approxsmooth}

The key intermediate step is to approximate $\Tr(O\rho(x))$ by a sum of smooth local functions. The proof of this relies on the spectral flow formalism~\cite{bachmann2012automorphic} and Lieb-Robinson bounds~\cite{lieb1972finite}.

First, we review the tools necessary from spectral flow~\cite{bachmann2012automorphic,hastings2005quasiadiabatic,osborne2007simulating}. Let the spectral gap of $H(x)$ be lower bounded by a constant $\gamma$ over $[-1,1]^m$. Then, the directional derivative of an associated ground state in the direction defined by the parameter unit vector $\hat{u}$ is given by
\begin{equation}
\label{eq:spectralflow}
\frac{\partial \rho}{\partial \hat{u}}(x) = -i [D_{\hat{u}}(x), \rho(x)],
\end{equation}
where $D_{\hat{u}}(x)$ is given by
\begin{equation}
\label{eq:spectralflow2}
D_{\hat{u}}(x) = \int_{-\infty}^{+\infty} W_\gamma(t) e^{itH(x)} \frac{\partial H}{\partial \hat{u}}(x) e^{-itH(x)}\,dt.
\end{equation}
Here, $W_\gamma(t)$ is defined by
\begin{equation}
\label{eq:wgamma}
|W_\gamma(t)| \leq \begin{cases}
\frac{1}{2} & 0 \leq \gamma|t| \leq \theta,\\
35e^2 (\gamma|t|)^4 e^{-\frac{2}{7}\frac{\gamma|t|}{\log(\gamma|t|)^2}} & \gamma|t| > \theta,
\end{cases}
\end{equation}
where $\theta$ is chosen to be the largest real solution of $35e^2\theta^4 \exp(-\frac{2}{7}\frac{\theta}{\log^2(\theta)}) = 1/2$. Notice that $W_\gamma(t)$ has the property that $\sup_t |W_\gamma(t)| = 1/2$.

Next, we review the Lieb-Robinson bounds~\cite{hastings2010locality,lieb1972finite}.
Let the distance $d_{\mathrm{obs}}(X_1, X_2)$ between any two operators $X_1, X_2$ be defined as the minimum distance between all pairs of sites acted on by $X_1$ and $X_2$, respectively, in the $d$-dimensional space.
Formally, this is defined as
\begin{equation}
\label{eq:dinter}
d_{\mathrm{obs}}(X_1, X_2) \triangleq \min_{\substack{i \in \mathsf{dom}(X_1)\\ i' \in \mathsf{dom}(X_2)}} d_{\mathrm{qubit}}(i, i'),
\end{equation}
where $\mathsf{dom}(O)$ contains the qubits that the observable $O$ acts on and $d_{\mathrm{qubit}}(i, i')$ is the distance between two qubits $i$ and $i'$.
Furthermore, notice that for any operator $X$ acting on a single site, a ball of radius $r$ around $X$ contains $\mathcal{O}(r^d)$ local terms in $d$-dimensional space:
\begin{equation}
\label{eq:localtermsbound}
\sum_{j : d_{\mathrm{obs}}(X, h_j)\leq r} 1 \leq b_d + c_d r^d,
\end{equation}
where $h_j$ is an interaction term of the Hamiltonian $H = \sum_{j=1}^L h_j$. Here, this bound implies the existence of a Lieb-Robinson bound~\cite{bravyi2006lieb,hastings2010locality} such that for any two operators $X_1,X_2$ and any $t \in \mathbb{R}$,
\begin{align}
\begin{split}
&\norm{[\exp(itH(x)) X_1 \exp(-it H(x)), X_2]}_\infty \\
&\leq c_{\mathrm{lr}} \norm{X_1}_\infty \norm{X_2}_\infty |\mathsf{dom}(X_1)| \exp(-a_{\mathrm{lr}}(d_{\mathrm{obs}}(X_1, X_2) - v_{\mathrm{lr}}|t|)), \label{eq:lrbound}
\end{split}
\end{align}
where $a_{\mathrm{lr}}, c_{\mathrm{lr}}, v_{\mathrm{lr}} = \Theta(1)$ are constants. Having reviewed these tools, before stating our result formally, we need to define a quantity that we use throughout the proof.

\begin{definition}
\label{def:delta}
Let $1/e > \epsilon > 0$. Consider a family of Hamiltonians $\{H(x) : x \in [-1,1]^m\}$ in a $d$-dimensional space. Suppose that the spectral gap of $H(x)$ is lower bounded by a constant $\gamma$ over $[-1,1]^m$. Define $\delta_1$ as
\begin{equation}
\delta_1 \triangleq \max\left(C_{\mathrm{max}}\log^2(1/\epsilon), C_4, C_5, \frac{\max(5900, \alpha, 7(d + 11), \theta)}{b}\right),
\end{equation}
where we denote $b \triangleq \gamma/2v_{\mathrm{lr}}$ for convenience, and $v_{\mathrm{lr}}$ is the constant from the Lieb-Robinson bound in Eq.~\eqref{eq:lrbound}.
Here, $C_{\mathrm{max}} = \max(C_1, C_2, C_3)$, where $C_1, C_2, C_3$ are constants defined in Lemmas~\ref{lemma:c1},~\ref{lemma:c2},~\ref{lemma:c3}. Also, we define $C_4$ as a constant such that for all $\delta' \geq C_4$,
\begin{equation}
\label{eq:c4}
\frac{1}{1 - \frac{77\log^2(b(\delta' + 1))}{b(\delta' + 1)}} \leq 2.
\end{equation}
Similarly, define $C_5$ as a constant such that for all $\delta' \geq C_5$,
\begin{equation}
\label{eq:c5}
\frac{1}{1 - \frac{7(2d + 22)\log^2(b(\delta' + 1))}{2b(\delta' + 1)}} \leq 2.
\end{equation}
Moreover, $\alpha$ is defined such that for all $x \geq b(\alpha + 1)$, $35\log^2 x < x - b$.
Finally, $\theta$ is chosen to be the largest real solution of
\begin{equation}
35e^2\theta^4\exp\left(-\frac{2}{7}\frac{\theta}{\log^2(\theta)}\right) = \frac{1}{2}.
\end{equation}
\end{definition}

The existence of $C_4, C_5$ is guaranteed by noting that as $\delta'$ goes to infinity, the inequalities become less than or equal to $2$. Similarly, the existence of $\alpha$ is guaranteed by considering $x \rightarrow \infty$. Using the quantity $\delta_1$ defined above, we also define the parameters ``close to'' a given Pauli term $P$.

\begin{definition}
\label{def:ip}
Given $\delta_1$ from Definition~\ref{def:delta} and an observable $O = \sum_{P \in S^{\mathrm{(geo)}}} \alpha_P P$, for each Pauli term $P \in S^{\mathrm{(geo)}}$, we define
\begin{equation}
I_P \triangleq \left\{c \in \{1,\dots, m\}: d_{\mathrm{obs}}(h_{j(c)}, P) \leq \delta_1 \right\},
\end{equation}
as in Eq.~\eqref{eq:ip}.
\end{definition}

Now, we are ready to present the precise statement that the ground state property $\Tr(O\rho(x))$ can be approximated by a sum of smooth local functions. First, we consider the simpler case where our observable $O = \alpha_P P$ is a single Pauli term, which easily generalizes to the general case via triangle inequality.

\begin{lemma}[Approximation using smooth local functions; simple case]
\label{lemma:approxlocal}
Consider a class of local Hamiltonians $\{H(x) : x \in [-1,1]^m\}$ satisfying assumptions~\ref{assum:a}-\ref{assum:c}, and an observable $O = \alpha_P P$, where $P$ acts on at most $\mathcal{O}(1)$ qubits. Then, there exists a constant $C > 0$ such that for any $1/e > \epsilon > 0$,
\begin{equation}
|\alpha_P \Tr(P \rho(x)) - f_P(x)| \leq C |\alpha_P|\epsilon,
\end{equation}
where $f_P(x) \triangleq \alpha_P \Tr(P \rho(\chi_P(x)))$ is a smooth function that only depends on parameters $x_{c} \in [-1, 1]$ for coordinates $c \in I_P$, the restriction function $\chi_P: [-1, 1]^m \mapsto [-1, 1]^m$ is defined as
\begin{equation} \label{eq:restriction-chi}
    \chi_P(x)_c = \begin{cases}
        x_c, & c \in I_P,\\
        0, & c \not\in I_P,
    \end{cases}
    \quad \forall c \in \{1, \ldots, m\},
\end{equation}
and the set $I_P$ of coordinates is given in Definition~\ref{def:ip}.
The function $f_P(x)$ is smooth in the sense that
\begin{equation}
\norm{\nabla_x f_P(x)}_2^2 \leq |\alpha_P|^2 C'
\end{equation}
for some constant $C' > 0$.
\end{lemma}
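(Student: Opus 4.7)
The plan is to interpolate along a straight-line path in parameter space from $\chi_P(x)$ to $x$ and to bound the change in $\alpha_P\Tr(P\rho(\cdot))$ along that path using the spectral flow formalism together with Lieb-Robinson bounds. Let $x(s) = \chi_P(x) + s(x - \chi_P(x))$ for $s \in [0,1]$, so that only coordinates $c \notin I_P$ vary along the path. Combining the chain rule with Eqs.~\eqref{eq:spectralflow}--\eqref{eq:spectralflow2} and cycling the trace gives
\begin{equation*}
\alpha_P\Tr(P\rho(x)) - f_P(x) = -i\alpha_P\int_0^1 \sum_{c \notin I_P} x_c\, \Tr\!\bigl([P, D_{\hat{e}_c}(x(s))]\,\rho(x(s))\bigr)\,ds.
\end{equation*}
Since $|x_c| \leq 1$ and $\rho(x(s))$ is a density operator, it then suffices to bound $\norm{[P, D_{\hat{e}_c}(x(s))]}_\infty$ for each $c \notin I_P$ and sum.

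For each such $c$, the term $h_{j(c)}$ sits at distance $r \triangleq d_{\mathrm{obs}}(h_{j(c)}, P) > \delta_1$ from $P$ by the definition of $I_P$. I would split the $t$-integration appearing in $D_{\hat{e}_c}$ into a near regime $|t| \leq r/(2v_{\mathrm{lr}})$ and a far regime $|t| > r/(2v_{\mathrm{lr}})$. On the near regime, the Lieb-Robinson bound~\eqref{eq:lrbound} gives $\norm{[P, e^{itH}(\partial h_{j(c)}/\partial x_c)e^{-itH}]}_\infty \leq c_{\mathrm{lr}} c_0 \exp(-a_{\mathrm{lr}} r/2)$, which when integrated against $|W_\gamma(t)| \leq \tfrac{1}{2}$ contributes a term of order $r\exp(-a_{\mathrm{lr}} r/2)$. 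On the far regime, I would use the trivial commutator bound together with the sub-exponential tail~\eqref{eq:wgamma} of $W_\gamma$; the resulting integral is $e^{-\Omega(r/\log^2 r)}$ up to polynomial factors. These two tail estimates are precisely the technical integrals to be collected in Appendix~\ref{sec:techlemmas}, and they are exactly what the constants $C_1, C_2, C_3, C_4, C_5, \theta, \alpha$ in Definition~\ref{def:delta} are calibrated to absorb. Summing over $c \notin I_P$ with the shell-counting bound~\eqref{eq:localtermsbound},
\begin{equation*}
\sum_{c \notin I_P} \norm{[P, D_{\hat{e}_c}(x(s))]}_\infty \;\lesssim\; \sum_{r > \delta_1} r^{d-1}\, e^{-\Omega(r/\log^2 r)} \;\leq\; \epsilon,
\end{equation*}
once $\delta_1 \geq C_{\mathrm{max}}\log^2(1/\epsilon)$, which proves the first claim with a universal constant $C$.

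For the smoothness bound, I would apply the same spectral flow identity at a fixed $x$ to get $\partial f_P/\partial x_c = -i\alpha_P\Tr([P, D_{\hat{e}_c}(\chi_P(x))]\rho(\chi_P(x)))$ for $c \in I_P$, and note that $\partial f_P/\partial x_c = 0$ for $c \notin I_P$. Running the same near/far split for a general distance $r_c = d_{\mathrm{obs}}(h_{j(c)}, P)$ (not just $r > \delta_1$) produces $|\partial f_P/\partial x_c|^2 \leq |\alpha_P|^2 g(r_c)^2$, where $g(0) = \mathcal{O}(1)$ and $g(r)$ decays faster than any polynomial as $r \to \infty$. Shell-counting~\eqref{eq:localtermsbound} then gives
\begin{equation*}
\norm{\nabla_x f_P(x)}_2^2 \;\leq\; |\alpha_P|^2 \sum_{r \geq 0}\,(\text{number of $c$ at distance $r$})\cdot g(r)^2 \;\leq\; |\alpha_P|^2\sum_{r \geq 0} \mathcal{O}(r^{d-1})\, g(r)^2 \;\leq\; |\alpha_P|^2 C',
\end{equation*}
for a universal constant $C'$ independent of $n, m, \epsilon$. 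The main obstacle is the far-regime integral: because $W_\gamma(t)$ decays only as $\exp(-\tfrac{2}{7}\gamma|t|/\log^2(\gamma|t|))$, one cannot simply extract clean exponential decay in $r$ from the tail, and the logarithmic corrections are exactly what force $\delta_1$ to scale as $\log^2(1/\epsilon)$ rather than $\log(1/\epsilon)$. The intricate constants in Definition~\ref{def:delta} exist precisely to make this tail bookkeeping close up cleanly.
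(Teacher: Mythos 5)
Your proposal is correct and follows essentially the same route as the paper's proof: spectral flow to express the derivative as $\Tr([P,D_{\hat{u}}]\rho)$, a near/far split of the time integral at $t^*=r/(2v_{\mathrm{lr}})$ handled by Lieb--Robinson and the sub-exponential tail of $W_\gamma$ respectively, shell counting via Eq.~\eqref{eq:localtermsbound}, and the choice $\delta_1=\Theta(\log^2(1/\epsilon))$ to absorb the logarithmic corrections. The only cosmetic differences are that the paper telescopes one coordinate at a time (Lemmas~\ref{lemma:oneparamderiv} and~\ref{lemma:oneparamdist}) rather than integrating along a single straight-line path, and it cites Lemma~4 of~\cite{huang2021provably} for the smoothness bound where you re-derive it from the same machinery.
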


\begin{corollary}[Approximation using smooth local functions; general case]
\label{corollary:approxlocal}
Consider a class of local Hamiltonians $\{H(x) : x \in [-1,1]^m\}$ and an observable $O = \sum_{P \in \{I, X, Y, Z\}^{\otimes n}} \alpha_P P$ satisfying assumptions~\ref{assum:a}-\ref{assum:d}.
There exists a constant $C > 0$ such that for any $1/e > \epsilon > 0$,
\begin{equation}
|\Tr(O \rho(x)) - f(x)| \leq C\epsilon \left(\sum_{P} |\alpha_P| \right),
\end{equation}
where $f(x) = \sum_{P \in S^{(\mathrm{geo})}} f_P(x)$ for $f_P(x)$ given in Lemma~\ref{lemma:approxlocal}.
\end{corollary}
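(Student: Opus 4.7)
The plan is to reduce Corollary~\ref{corollary:approxlocal} to the single-Pauli bound of Lemma~\ref{lemma:approxlocal} by linearity of the trace and the triangle inequality. First, I would use assumption~(d) to expand the observable as $O = \sum_{P \in S^{(\mathrm{geo})}} \alpha_P P$; the expansion truncates to geometrically local Paulis because $O$ is a sum of operators each supported on an $\mathcal{O}(1)$-ball, so Paulis outside $S^{(\mathrm{geo})}$ carry zero coefficient. Linearity then gives $\Tr(O\rho(x)) = \sum_{P \in S^{(\mathrm{geo})}} \alpha_P \Tr(P\rho(x))$, which aligns term-for-term with $f(x) = \sum_{P \in S^{(\mathrm{geo})}} f_P(x)$ under the definition $f_P(x) = \alpha_P \Tr(P\rho(\chi_P(x)))$ from Lemma~\ref{lemma:approxlocal}.

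Next, I would subtract and apply the triangle inequality to obtain
\begin{equation}
|\Tr(O\rho(x)) - f(x)| \leq \sum_{P \in S^{(\mathrm{geo})}} \left| \alpha_P \Tr(P\rho(x)) - f_P(x) \right|.
\end{equation}
Since every $P \in S^{(\mathrm{geo})}$ acts on $\mathcal{O}(1)$ qubits, Lemma~\ref{lemma:approxlocal} supplies the per-term bound $|\alpha_P \Tr(P\rho(x)) - f_P(x)| \leq C|\alpha_P|\epsilon$ with a single universal constant $C$. Summing over $P$ yields $|\Tr(O\rho(x)) - f(x)| \leq C\epsilon \sum_{P} |\alpha_P|$, which is exactly the claimed inequality (the outer sum can be taken over all $P \in \{I,X,Y,Z\}^{\otimes n}$ since $\alpha_P = 0$ for $P \notin S^{(\mathrm{geo})}$).

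The only point that warrants a sentence of care, rather than a real obstacle, is the uniformity of the constant $C$ across all $P \in S^{(\mathrm{geo})}$. This holds because the inputs to the proof of Lemma~\ref{lemma:approxlocal} — the Lieb-Robinson parameters $a_{\mathrm{lr}}, c_{\mathrm{lr}}, v_{\mathrm{lr}}$ from Eq.~\eqref{eq:lrbound}, the uniform spectral gap lower bound $\gamma$, and the thresholds in Definition~\ref{def:delta} — depend only on the Hamiltonian family and the ambient $d$-dimensional geometry, and each $P \in S^{(\mathrm{geo})}$ has $\mathcal{O}(1)$ support by construction. Hence the per-Pauli bound does not degrade when aggregated, and the corollary reduces to routine bookkeeping on top of Lemma~\ref{lemma:approxlocal}.
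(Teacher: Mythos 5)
Your proposal is correct and matches the paper's intended argument: the paper proves only the single-Pauli case (Lemma~\ref{lemma:approxlocal}) and explicitly states that the general case follows by the triangle inequality over the Pauli decomposition $O = \sum_{P \in S^{(\mathrm{geo})}} \alpha_P P$, exactly as you do. Your added remark on the uniformity of the constant $C$ across all $P \in S^{(\mathrm{geo})}$ is a correct and worthwhile point of care that the paper leaves implicit.
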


\begin{figure}[t]
\centering
\includegraphics[width=0.9\linewidth]{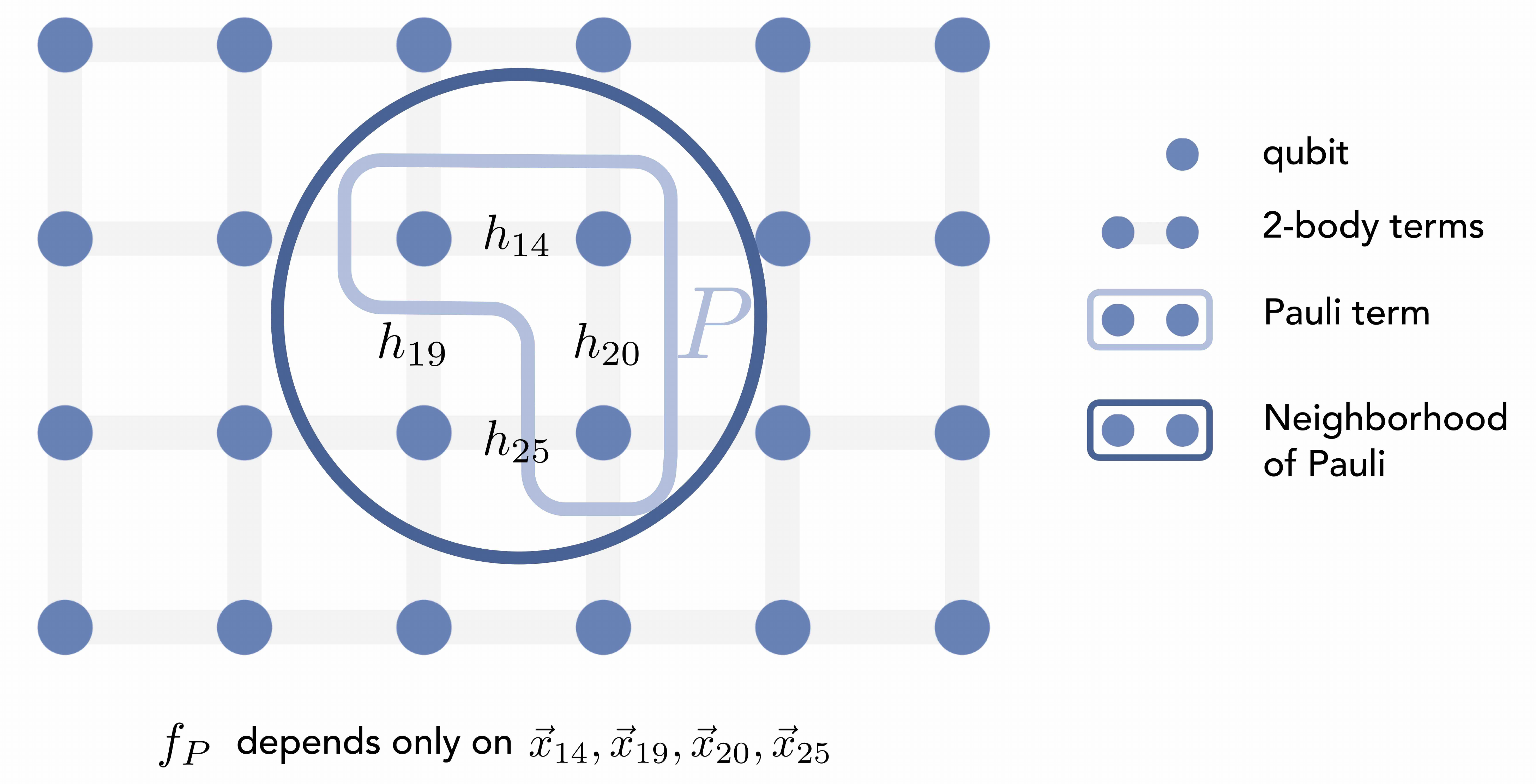}
\caption{\textbf{Intuition behind Lemma~\ref{lemma:approxlocal}}. The qubits (blue circles) are arranged in a two-dimensional lattice with local Hamiltonian terms (light gray shading) acting between all pairs of neighboring qubits. A Pauli term $P$ acts on a subset of these qubits indicated by the light blue region. The dark blue circle represents a neighborhood around the region on which $P$ acts. The idea of Lemma~\ref{lemma:approxlocal} is that when changing the parameters $x$, only $\vec{x}_j$ such that $h_j(\vec{x}_j)$ within the neighborhood around the region that $P$ acts on should significantly change $\Tr(P\rho(x))$. Hence, $f_P$ depends only on those parameters. It is implicit in the figure that $h_j$ depends on $\vec{x}_j$ for all $j$. Hence, $f_P$ depends only on the vectors $\vec{x}_{14}, \vec{x}_{19}, \vec{x}_{20}, \vec{x}_{25}$.}
\label{fig:approxlocal}
\end{figure}

We illustrate the intuition for Lemma~\ref{lemma:approxlocal} in Figure~\ref{fig:approxlocal}.
The proof of Lemma~\ref{lemma:approxlocal} requires several steps.
The main idea is that the function $f_P(x)$ is simply $\alpha_P \Tr(P \rho(\chi_P(x)))$ such that $\chi_P(x)_c = x_c$ for $c \in I_P$ and $\chi_P(x)_c = 0$ for coordinates $c \notin I_P$.
Thus, we need to show that changing coordinates outside of $I_P$ does not change $\alpha_P \Tr(P \rho(x))$ by much.
First, we change one coordinate outside of $I_P$ at a time and show that the directional derivative of $\alpha_P \Tr(P \rho(x))$ in the direction changing this coordinate is bounded.
Next, we use this to prove that $|\alpha_P \Tr(P \rho(x)) - \alpha_P \Tr(P \rho(x'))|$ is bounded, where $x$ and $x'$ differ in this one coordinate.
Finally, we show that the difference is bounded for the case where $x$ and $x'$ differ for all coordinates outside of $I_P$, which concludes the proof of Lemma~\ref{lemma:approxlocal}.
We separate these results into lemmas.
Throughout the proofs of these lemmas, we also need several technical lemmas for showing the existence of certain constants and bounding integrals, proofs of which we relegate to Appendix~\ref{sec:techlemmas}.
In the rest of this section, and in Appendix~\ref{sec:techlemmas}, we use the notation $b \triangleq \gamma/(2v_{\mathrm{lr}})$ and $\Delta(j, P) \triangleq d_{\mathrm{obs}}(h_{j(c)}, P)$ for convenience.

\begin{lemma}[Change one coordinate; directional derivative]
\label{lemma:oneparamderiv}
Consider a class of local Hamiltonians $\{H(x) : x \in [-1,1]^m\}$ satisfying assumptions~\ref{assum:a}-\ref{assum:c}, and an observable $O = \alpha_P P$, where $P$ acts on at most $\mathcal{O}(1)$ qubits. Suppose that some $x, x' \in [-1,1]^m$ only differ in one coordinate, say the coordinate $c^*$ such that $c^* \notin I_P$ and only one $h_j$ depends on $x_{c^*}$. Let $\hat{u}$ be a unit vector in the direction that moves from $x$ to $x'$ along the $c^*$th coordinate. Then, there exist constants $c_1,c_2$ such that
\begin{align}
\begin{split}
   & |\alpha_P|\left|\hat{u} \cdot \nabla_x \Tr(P\rho(x))\right| \\&\leq |\alpha_P|\left(c_1 e^{-\frac{a_{\mathrm{lr}}\Delta(j,P)}{2}} + c_2\left(\frac{1}{1-\frac{35\log^2(b\Delta(j, P))}{b\Delta(j,P)}}\right)\Delta(j,P)^{10}\exp\left(-\frac{2}{7}\frac{b\Delta(j,P)}{\log^2(b\Delta(j,P))}\right)\right).
\end{split}
\end{align}
\end{lemma}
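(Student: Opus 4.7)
The plan is to combine the spectral-flow representation of $\partial\rho/\partial\hat{u}$ with the Lieb-Robinson bound, then estimate the resulting time integral by splitting into a short-time regime (where Lieb-Robinson provides exponential decay in $\Delta(j,P)$) and a long-time regime (where the subexponential decay built into $W_\gamma$ takes over). First, from Eq.~\eqref{eq:spectralflow} and cyclicity of trace,
\begin{equation*}
\alpha_P\,\hat{u}\cdot\nabla_x\Tr(P\rho(x))=-i\alpha_P\,\Tr\bigl([P,D_{\hat{u}}(x)]\,\rho(x)\bigr),
\end{equation*}
so the quantity to bound is $|\alpha_P|\,\|[P,D_{\hat{u}}(x)]\|_\infty$ (using $\|\rho(x)\|_1 = 1$). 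Because only $h_j$ depends on $x_{c^*}$, we have $\partial H/\partial\hat{u}=\partial h_j/\partial\hat{u}$, an operator supported on $\mathcal{O}(1)$ sites with operator norm $\leq 1$ by assumption~\ref{assum:b}. Pulling the commutator inside the integral in Eq.~\eqref{eq:spectralflow2} and applying the Lieb-Robinson bound~\eqref{eq:lrbound} with $X_1 = \partial h_j/\partial\hat{u}$ and $X_2 = P$ gives
\begin{equation*}
\|[P,D_{\hat{u}}(x)]\|_\infty\leq C_0\int_{-\infty}^{\infty}|W_\gamma(t)|\,\exp\!\bigl(-a_{\mathrm{lr}}(\Delta(j,P)-v_{\mathrm{lr}}|t|)\bigr)\,dt,
\end{equation*}
for a constant $C_0$ absorbing $c_{\mathrm{lr}}$ and $|\mathsf{dom}(X_1)|$.

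Next I split the integral at $|t|=\Delta(j,P)/(2v_{\mathrm{lr}})$, equivalently at $\gamma|t|=b\Delta(j,P)$. In the short-time regime $|t|\leq\Delta(j,P)/(2v_{\mathrm{lr}})$, I use $|W_\gamma|\leq 1/2$ together with $\Delta(j,P)-v_{\mathrm{lr}}|t|\geq\Delta(j,P)/2$; integrating a constant against $e^{-a_{\mathrm{lr}}\Delta(j,P)/2}$ over an interval of length $\mathcal{O}(\Delta(j,P))$ and absorbing the linear prefactor into the exponential, which is legal because Def.~\ref{def:delta} forces $\Delta(j,P)\geq\delta_1$ to be above a sufficiently large constant, yields the first summand $c_1\,e^{-a_{\mathrm{lr}}\Delta(j,P)/2}$. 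In the long-time regime $\gamma|t|>b\Delta(j,P)$, I use the subexponential bound in~\eqref{eq:wgamma}; the relevant integrand is
\begin{equation*}
35e^2\,(\gamma|t|)^4\,\exp\!\Bigl(-\tfrac{2}{7}\tfrac{\gamma|t|}{\log^2(\gamma|t|)}+a_{\mathrm{lr}}v_{\mathrm{lr}}|t|-a_{\mathrm{lr}}\Delta(j,P)\Bigr).
\end{equation*}
The condition defining $\alpha$ in Def.~\ref{def:delta} (namely $35\log^2 x<x-b$ for $x\geq b(\alpha+1)$) is precisely what is needed to guarantee that this exponent is monotonically decreasing past the splitting point $\gamma|t|=b\Delta(j,P)$, so the tail contribution is controlled by its value at the splitting point.

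Finally I evaluate the long-time integral by invoking one of the technical estimates from Appendix~\ref{sec:techlemmas}. The slow variation of $\log^2$ relative to the dominating linear term allows a geometric-series style comparison (decomposing the tail into intervals where $\gamma|t|$ doubles, each bounded by a geometric ratio of the previous), which collapses to its leading term and produces the factor $1/(1-35\log^2(b\Delta(j,P))/(b\Delta(j,P)))$ in the statement. Evaluating at the splitting point contributes $\exp(-\tfrac{2}{7}b\Delta(j,P)/\log^2(b\Delta(j,P)))$, while the polynomial prefactor $(\gamma|t|)^4$ at $\gamma|t|=b\Delta(j,P)$, the width $\mathcal{O}(\Delta(j,P))$ of the dominating subinterval, and residual $\log^2$ factors combine into the overall $\Delta(j,P)^{10}$ factor. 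The main obstacle is precisely this tail estimate: one has to track the interplay between Lieb-Robinson's exponential growth $e^{a_{\mathrm{lr}}v_{\mathrm{lr}}|t|}$ and $W_\gamma$'s subexponential decay carefully enough to pin down the exact constants in the statement, while also accounting for the slow logarithmic corrections. The constants $C_4$, $C_5$, $\alpha$, and $\theta$ in Def.~\ref{def:delta} are precisely the ones that make this bookkeeping routine.
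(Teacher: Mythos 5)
Your setup (spectral flow, reduction to a single interaction term $h_j$, and the split of the time integral at $|t|=t^*=\Delta(j,P)/(2v_{\mathrm{lr}})$) matches the paper, and your short-time estimate is essentially the paper's, up to a minor slip: integrating $e^{a_{\mathrm{lr}}v_{\mathrm{lr}}|t|}$ exactly over $[-t^*,t^*]$ gives $\tfrac{c}{a_{\mathrm{lr}}v_{\mathrm{lr}}}e^{-a_{\mathrm{lr}}\Delta(j,P)/2}$ with no linear prefactor, whereas a factor of $\Delta(j,P)$ cannot be ``absorbed'' into $c_1 e^{-a_{\mathrm{lr}}\Delta(j,P)/2}$ with the same exponent, since $\Delta(j,P)$ is unbounded.

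The genuine gap is in the long-time regime. You apply the Lieb--Robinson bound uniformly over all $t$ and then try to integrate
\begin{equation*}
(\gamma|t|)^4\exp\Bigl(-\tfrac{2}{7}\tfrac{\gamma|t|}{\log^2(\gamma|t|)}+a_{\mathrm{lr}}v_{\mathrm{lr}}|t|-a_{\mathrm{lr}}\Delta(j,P)\Bigr)
\end{equation*}
over $|t|>t^*$. This integral diverges: the positive term $a_{\mathrm{lr}}v_{\mathrm{lr}}|t|$ grows linearly while the negative term $-\tfrac{2}{7}\gamma|t|/\log^2(\gamma|t|)$ is sublinear, so the exponent tends to $+\infty$ and its $t$-derivative tends to $a_{\mathrm{lr}}v_{\mathrm{lr}}>0$; your claim that the condition on $\alpha$ in Definition~\ref{def:delta} makes the exponent monotonically decreasing past the splitting point is therefore false (that condition instead controls the positivity of the denominator $1-35\log^2(b\Delta)/(b\Delta)$ in the technical integral lemmas). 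The Lieb--Robinson bound is only informative outside the light cone; inside it, it is weaker than the trivial bound. The paper's proof instead drops Lieb--Robinson entirely for $|t|>t^*$, uses $\norm{[A,B]}_\infty\leq 2\norm{A}_\infty\norm{B}_\infty$ so that the tail is controlled by $\int_{t^*}^{\infty}|W_\gamma(t)|\,dt$ alone, and evaluates that integral via the substitution $\tau=\tfrac{2}{7}\gamma t/\log^2(\gamma t)$ and the incomplete Gamma bound of Lemma~\ref{lemma:integralbound}; this is where the factor $\Delta(j,P)^{10}$ (from $t^{2k+2}$ with $k=4$) and the prefactor $\bigl(1-35\log^2(b\Delta(j,P))/(b\Delta(j,P))\bigr)^{-1}$ actually come from. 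Your proposal as written does not recover these and the tail estimate needs to be redone along those lines.
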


\begin{proof}
For the direction $\hat{u}$, we can write the directional derivative of $\rho(x)$ in two ways. First, we have the standard definition:
\begin{equation}
\frac{\partial \rho}{\partial \hat{u}}(x) = \hat{u} \cdot \nabla_x \rho(x).
\end{equation}
Then, from spectral flow, we also have Eq.~\eqref{eq:spectralflow}.
When evaluated on an observable $O = \alpha_P P$, this establishes the following correspondence:
\begin{equation}
\label{eq:dirderiv}
\alpha_P (\hat{u} \cdot \nabla_x \Tr(P \rho(x))) = i\alpha_P\Tr(P[D_{\hat{u}}(x),\rho(x)]) = i\alpha_P\Tr([P, D_{\hat{u}}(x)]\rho(x)).
\end{equation}
Expanding $D_{\hat{u}}(x)$ according to Eq.~\eqref{eq:spectralflow2} and applying the triangle inequality to
\begin{equation}
H(x) = \sum_{j=1}^L h_j(\vec{x}_j),
\end{equation}
we have
\begin{equation}
|\alpha_P||\Tr([P, D_{\hat{u}}(x)]\rho(x))| \leq |\alpha_P|\int_{-\infty}^{+\infty} W_\gamma(t) \sum_{j=1}^L \norm{\left[P, e^{itH(x)} \frac{\partial h_j}{\partial \hat{u}} e^{-itH(x)}\right]}_\infty\,dt.
\end{equation}
Here, since $x_{c^*}$ only affects $h_j$ for one $j$ and $\hat{u}$ is in the direction where only the coordinate $c^*$ changes, then
\begin{equation}
\frac{\partial h_{j'}}{\partial \hat{u}} = 0
\end{equation}
for all $j' \neq j$. Thus, we are left with
\begin{equation}
\label{eq:f60}
|\alpha_P||\Tr([P, D_{\hat{u}}(x)]\rho(x))| \leq |\alpha_P|\int_{-\infty}^{+\infty} W_{\gamma}(t) \norm{\left[P, e^{itH(x)} \frac{\partial h_j}{\partial \hat{u}} e^{-itH(x)}\right]}_\infty\,dt.
\end{equation}
We bound this integral using Lieb-Robinson bounds and the inequality on $W_\gamma(t)$ that
\begin{equation}
\sup_t |W_\gamma(t)| = 1/2.
\end{equation}
We first need to split the integral into cases.
This is because Lieb-Robinson bounds only apply outside of the lightcone, i.e., when $\Delta(j,P) > v_{\mathrm{lr}}|t|$.
Then, when $\Delta(j,P) \leq v_{\mathrm{lr}}|t|$, we can instead use the commutator norm bound $\norm{[A, B]}_\infty \leq 2\norm{A}_\infty \norm{B}_\infty$.
Thus, we define $t^* = \Delta(j,P)/(2v_{\mathrm{lr}})$ and split up the integration into two parts: $t \in [-t^*, t^*]$ and $t\notin [-t^*, t^*]$ so that we have
\begin{align}
\label{eq:integralsum}
\begin{split}
|\alpha_P||\Tr([P, D_{\hat{u}}(x)]\rho(x))| &\leq |\alpha_P|\int_{-t^*}^{t^*} W_{\gamma}(t) \norm{\left[P, e^{itH(x)} \frac{\partial h_j}{\partial \hat{u}} e^{-itH(x)}\right]}_\infty\,dt\\
&+ |\alpha_P|\int_{t^*}^{+\infty} W_{\gamma}(t) \norm{\left[P, e^{itH(x)} \frac{\partial h_j}{\partial \hat{u}} e^{-itH(x)}\right]}_\infty\,dt\\
&+ |\alpha_P|\int_{-\infty}^{-t^*} W_{\gamma}(t) \norm{\left[P, e^{itH(x)} \frac{\partial h_j}{\partial \hat{u}} e^{-itH(x)}\right]}_\infty\,dt.
\end{split}
\end{align}
Notice that the first integral corresponds to the case when we are outside of the lightcone, i.e., $\Delta(j,P) > 2v_{\mathrm{lr}}|t|$ while the other two integrals correspond to the case when we are inside of the light cone.

First, we bound the first integral using the Lieb-Robinson bound.
Applying Eq.~\eqref{eq:lrbound} to the commutator norm, we have
\begin{subequations}
\begin{align}
    \norm{\left[P, e^{itH(x)}\frac{\partial h_j}{\partial \hat{u}} e^{-itH(x)}\right]}_\infty &\leq c_{\mathrm{lr}}\norm{P}_\infty \norm{\frac{\partial h_j}{\partial \hat{u}}}_\infty |\mathsf{dom}(h_j)| e^{-a_{\mathrm{lr}}(\Delta(j,P) - v_{\mathrm{lr}}|t|)}\\
    &\leq c_{\mathrm{lr}}c_h e^{-a_{\mathrm{lr}}(\Delta(j,P) - v_{\mathrm{lr}}|t|)},
\end{align}
\end{subequations}
where in the last inequality, we are using assumption~\ref{assum:b} that $\norm{\partial h_j/\partial \hat{u}}_\infty \leq 1$ and $|\mathsf{dom}(h_j)| \leq c_h$ for a constant $c_h$.
Plugging this into the integral, we have
\begin{subequations}
\begin{align}
|\alpha_P|\int_{-t^*}^{t^*} W_\gamma(t) \norm{\left[P, e^{itH(x)} \frac{\partial h_j}{\partial \hat{u}}e^{-itH(x)}\right]}_\infty \,dt &\leq |\alpha_P|c_{\mathrm{lr}}c_h e^{-a_{\mathrm{lr}}\Delta(j,P)}\int_{-t^*}^{t^*}|W_\gamma(t)|e^{a_{\mathrm{lr}}v_{\mathrm{lr}}|t|}\,dt\\
&\leq \frac{1}{2}|\alpha_P|c_{\mathrm{lr}}c_h e^{-a_{\mathrm{lr}}\Delta(j,P)} \int_{-t^*}^{t^*} e^{a_{\mathrm{lr}}v_{\mathrm{lr}}|t|}\,dt\\
&= |\alpha_P|c_{\mathrm{lr}}c_h e^{-a_{\mathrm{lr}}\Delta(j,P)} \int_0^{t^*} e^{a_{\mathrm{lr}}v_{\mathrm{lr}}t}\,dt\\
&= |\alpha_P|\frac{c_{\mathrm{lr}}c_h e^{-a_{\mathrm{lr}} \Delta(j,P)}}{a_{\mathrm{lr}}v_{\mathrm{lr}}}\left(e^{a_{\mathrm{lr}}v_{\mathrm{lr}}t^*} - 1\right)\\
&= |\alpha_P|\frac{c_{\mathrm{lr}}c_h}{a_{\mathrm{lr}}v_{\mathrm{lr}}} e^{-a_{\mathrm{lr}}\Delta(j,P)} \left(e^{a_{\mathrm{lr}}\Delta(j,P)/2} - 1\right)\\
&= |\alpha_P|\frac{c_{\mathrm{lr}}c_h}{a_{\mathrm{lr}}v_{\mathrm{lr}}} \left(e^{-a_{\mathrm{lr}} \Delta(j,P)/2} - e^{-a_{\mathrm{lr}} \Delta(j,P)}\right)\\
&\leq |\alpha_P|\frac{c_{\mathrm{lr}}c_h}{a_{\mathrm{lr}}v_{\mathrm{lr}}} e^{-a_{\mathrm{lr}} \Delta(j,P)/2},
\end{align}
\end{subequations}
where in the second line, we used the fact that $\sup_t |W_\gamma(t)| = 1/2$, and in the fifth line, we substituted back in $t^* = \Delta(j,P)/(2v_{\mathrm{lr}})$.

We can also bound the other integrals using the commutator norm bound
\begin{equation}
    \norm{[A, B]}_\infty \leq 2\norm{A}_\infty \norm{B}_\infty
\end{equation}
to obtain:
\begin{subequations}
\begin{align}
|\alpha_P|\int_{t^*}^{+\infty} W_{\gamma}(t) \norm{\left[P, e^{itH(x)} \frac{\partial h_j}{\partial \hat{u}} e^{-itH(x)}\right]}_\infty\,dt &\leq 2|\alpha_P| \int_{t^*}^{+\infty} |W_\gamma(t)| \norm{P}_\infty \norm{\frac{\partial h_j}{\partial \hat{u}}}_\infty \,dt\\
&\leq 2|\alpha_P| \int_{t^*}^\infty |W_\gamma(t)|\,dt,
\end{align}
\end{subequations}
where in the second line, we used assumption~\ref{assum:b} that $\norm{\partial h_j/\partial \hat{u}}_\infty \leq 1$.
To bound the resulting integral, we use the definition of $W_\gamma(t)$ in Eq.~\eqref{eq:wgamma}.
Note that by our definition of $t^*$, $\gamma t^* > \theta$, so we only need to consider this case in the upper bound on $W_\gamma(t)$.
This is because we chose
\begin{equation}
\delta_1 = \max\left(C_{\mathrm{max}} \log^2(1/\epsilon), C_4, C_5, \frac{\max(5900, \alpha, 7(d + 11), \theta)}{b} \right),
\end{equation}
and here we consider $\Delta(j,P) > \delta_1$. Thus, we have
\begin{equation}
\label{eq:gammat}
\gamma t^* = \frac{\gamma \Delta(j,P)}{2v_{\mathrm{lr}}} > \frac{\gamma \delta_1}{2v_{\mathrm{lr}}} \geq \max(5900, \alpha, 7(d + 11), \theta) \geq \theta.
\end{equation}
Hence, we can bound the integral:
\begin{equation}
\int_{t^*}^{+\infty} |W_\gamma(t)|\,dt \leq 35e^2\int_{t^*}^{+\infty}(\gamma t)^4 e^{-\frac{2}{7}\frac{\gamma t}{\log^2(\gamma t)}}\,dt = 35e^2\gamma^{-1}\int_{x=\gamma t^*}^{+\infty} x^4 e^{-\frac{2}{7} \frac{x}{\log^2(x)}}\,dx.
\end{equation}
In the inequality, we used the definition of $W_\gamma(t)$ and in the equality, we used the substitution $x = \gamma t$.
We can bound this integral using Lemma~\ref{lemma:integralbound}. Set $a=2/7$ and $k = 4$.
We have chosen $t^*$ and $\delta_1$ such that all of the assumptions of Lemma~\ref{lemma:integralbound} are satisfied.
In particular, from Eq.~\eqref{eq:gammat}, we see that $t = \gamma t^* > \max(5900, \alpha, 7(d + 11), \theta) \geq 5900$.
Furthermore, we have $ at/\log^2(t) > 2k+2$, because if $t \geq 5900$, then it is clear that $at/\log^2(t) > 10$.
Now, applying Lemma~\ref{lemma:integralbound}, we have
\begin{equation}
\int_{t^*}^{+\infty}|W_\gamma(t)|\,dt \leq \frac{245}{2}e^2\gamma^{-1} \left(\frac{1}{1-\frac{35\log^2(\gamma t^*)}{\gamma t^*}}\right) (\gamma t^*)^{10} e^{-\frac{2}{7}\frac{\gamma t^*}{\log^2(\gamma t^*)}}.
\end{equation}
The last integral can be bounded in exactly the same way.
Plugging these bounds into Eq.~\eqref{eq:integralsum}, we have
\begin{subequations}
\begin{align}
&|\alpha_P||\Tr([P, D_{\hat{u}}(x)]\rho(x))|\\
&\leq |\alpha_P|\left(\frac{c_{\mathrm{lr}}c_h}{a_{\mathrm{lr}}v_{\mathrm{lr}}} e^{-a_{\mathrm{lr}}\Delta(j,P)/2} + 4\left(\frac{245}{2}e^2\gamma^{-1}\left(\frac{1}{1-\frac{35\log^2(\gamma t^*)}{\gamma t^*}}\right) (\gamma t^*)^{10} e^{-\frac{2}{7}\frac{\gamma t^*}{\log^2(\gamma t^*)}}\right)\right)\\
&= |\alpha_P|\left(c_1 e^{-a_{\mathrm{lr}}\Delta(j,P)/2} + c_2\left(\frac{1}{1-\frac{35\log^2(b\Delta(j,P))}{b\Delta(j,P)}}\right)\Delta(j,P)^{10}\exp\left(-\frac{2}{7}\frac{b\Delta(j,P)}{\log^2(b\Delta(j,P))}\right)\right),
\end{align}
\end{subequations}
where in the second line, we defined constants
\begin{equation}
c_1 = \frac{c_{\mathrm{lr}}c_h}{a_{\mathrm{lr}}v_{\mathrm{lr}}},\;\;\; c_2 = \frac{245e^2b^9}{v_{\mathrm{lr}}}.
\end{equation}
Thus, we have proven that if we only change one coordinate outside of $I_P$, then the directional derivative changing this coordinate is small. This is exactly the claim of the lemma.
\end{proof}

\noindent An immediate consequence of this is that we can integrate the directional derivative to obtain a bound on the distance between $\Tr(P\rho(x))$ and $\Tr(P\rho(x'))$.

\begin{lemma}[Change one coordinate; distance]
\label{lemma:oneparamdist}
Consider a class of local Hamiltonians $\{H(x) : x \in [-1,1]^m\}$ satisfying assumptions~\ref{assum:a}-\ref{assum:c}, and an observable $O = \alpha_P P$, where $P$ acts on at most $\mathcal{O}(1)$ qubits. Suppose that some $x, x' \in [-1,1]^m$ only differ in one coordinate, say the coordinate $c^*$ such that $c^* \notin I_P$ and only one $h_j$ depends on $x_{c^*}$. Let $\hat{u}$ be a unit vector in the direction that moves from $x$ to $x'$ along the $c^*$th coordinate. Then, there exist constants $c_1', c_2'$ such that
\begin{align}
\begin{split}
    &|\alpha_P||\Tr(P\rho(x)) - \Tr(P\rho(x'))|\\
    &\leq |\alpha_P|\left(c_1' e^{-\frac{a_{\mathrm{lr}}\Delta(j,P)}{2}} + c_2'\left(\frac{1}{1-\frac{35\log^2(b\Delta(j,P))}{b\Delta(j,P)}}\right)\Delta(j,P)^{10}\exp\left(-\frac{2}{7}\frac{b\Delta(j,P)}{\log^2(b\Delta(j,P))}\right)\right).
\end{split}
\end{align}
\end{lemma}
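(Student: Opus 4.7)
The plan is to obtain this bound by integrating the directional-derivative bound from the preceding lemma along the straight-line segment connecting $x$ and $x'$. Since $x$ and $x'$ agree in every coordinate except $c^*$, parameterize the segment by $\gamma(s) = x + s(x' - x)$ for $s \in [0,1]$, and let $\hat{u}$ be the unit vector along the $c^*$th coordinate axis in the direction from $x$ to $x'$. By the fundamental theorem of calculus,
\begin{equation}
\alpha_P\bigl(\Tr(P\rho(x')) - \Tr(P\rho(x))\bigr) = |x'_{c^*} - x_{c^*}| \int_0^1 \alpha_P\, \hat{u}\cdot \nabla_x \Tr(P\rho(\gamma(s)))\, ds.
\end{equation}
Since $x_{c^*}, x'_{c^*} \in [-1,1]$, the prefactor $|x'_{c^*} - x_{c^*}|$ is at most $2$.

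The crucial observation is that the right-hand side of the directional-derivative bound in the previous lemma depends on $x$ only through $\Delta(j,P) = d_{\mathrm{obs}}(h_{j(c^*)}, P)$, which is a purely geometric quantity determined by the site of $h_{j(c^*)}$ and the support of $P$, and through the constants $a_{\mathrm{lr}}, b, c_1, c_2$ which come from assumptions~\ref{assum:a}--\ref{assum:c} (bounded operator norms and directional derivatives of $h_j$, the spectral gap lower bound $\gamma$, and the Lieb--Robinson constants). Because these assumptions hold uniformly on $[-1,1]^m$, the entire segment $\{\gamma(s) : s\in[0,1]\}$ lies in the region where the derivative bound applies, and none of the terms in the bound depend on the specific parameter point $\gamma(s)$. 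Hence the integrand is pointwise dominated by the same expression at every $s$.

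Taking absolute values, pulling them inside the integral, and plugging in the pointwise bound from Lemma~\ref{lemma:oneparamderiv} therefore yields
\begin{equation}
|\alpha_P||\Tr(P\rho(x)) - \Tr(P\rho(x'))| \leq 2|\alpha_P|\left(c_1 e^{-\frac{a_{\mathrm{lr}}\Delta(j,P)}{2}} + c_2\left(\frac{1}{1-\frac{35\log^2(b\Delta(j,P))}{b\Delta(j,P)}}\right)\Delta(j,P)^{10} e^{-\frac{2}{7}\frac{b\Delta(j,P)}{\log^2(b\Delta(j,P))}}\right),
\end{equation}
so the claimed inequality follows by setting $c_1' = 2c_1$ and $c_2' = 2c_2$. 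The only step requiring any care is verifying that the derivative bound is valid \emph{uniformly} along the whole segment $\gamma(s)$, not just at an individual point, but this is immediate since the constants in Lemma~\ref{lemma:oneparamderiv} are Hamiltonian-independent under assumptions~\ref{assum:a}--\ref{assum:c}, so no real obstacle arises; the argument is essentially a one-line mean value estimate.
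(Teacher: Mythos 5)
Your proposal is correct and follows essentially the same route as the paper: the paper likewise writes the difference as an integral of the directional derivative over the single varying coordinate, invokes the pointwise bound from Lemma~\ref{lemma:oneparamderiv} (noting it depends on the parameter point only through the geometric quantity $\Delta(j,P)$), and bounds the integration length by $2$ since $x_{c^*}, x_{c^*}' \in [-1,1]$, yielding $c_1' = 2c_1$ and $c_2' = 2c_2$. Your reparameterization of the segment by $s \in [0,1]$ versus the paper's direct integration over $t \in [x_{c^*}, x_{c^*}']$ is a cosmetic difference only.
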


\begin{proof}
By Lemma~\ref{lemma:oneparamderiv}, we have a bound on the directional derivative of $\alpha_P\Tr(P\rho(x))$ in the direction of $\hat{u}$.
In this lemma, we want a bound on the distance
\begin{equation}
|\alpha_P||\Tr(P\rho(x)) - \Tr(P\rho(x'))| = |\alpha_P||\Tr(P\rho(x_1,\dots, x_{c^*},\dots, x_m)) - \Tr(P\rho(x_1,\dots, x_{c^*}',\dots, x_m))|.
\end{equation}
To this end, we can obtain the distance by integrating the directional derivative:
\begin{subequations}
\begin{align}
&|\alpha_P||\Tr(P\rho(x)) - \Tr(P\rho(x'))|\\
&= |\alpha_P|\left|\int_{x_{c^*}}^{x_{c^*}'} \frac{\partial \Tr(P\rho(x_1,\dots, t, \dots, x_m))}{\partial \hat{u}}\,dt \right|\\
&\leq |\alpha_P| \int_{x_{c^*}}^{x_{c^*}'} \left|\frac{\partial \Tr(P\rho(x_1,\dots, t, \dots, x_m))}{\partial \hat{u}}\right|\,dt\\
&= |\alpha_P|\int_{x_{c^*}}^{x_{c^*}'} |\Tr([P, D_{\hat{u}}(x_1,\dots, t, \dots, x_m)]\rho(x_1,\dots,t,\dots,x_m))|\,dt,
\end{align}
\end{subequations}
where in the last line, we used the correspondence from Eq.~\eqref{eq:dirderiv}.
Now, the integrand is exactly what we bounded in Lemma~\ref{lemma:oneparamderiv}, so we have
\begin{subequations}
\begin{align}
&|\alpha_P||\Tr(P\rho(x)) - \Tr(P\rho(x'))|\\
&\leq |\alpha_P|\int_{x_{c^*}}^{x_{c^*}'} \left(c_1 e^{-a_{\mathrm{lr}}\Delta(j,P)/2} + c_2\left(\frac{1}{1-\frac{35\log^2(b\Delta(j,P))}{b\Delta(j,P)}}\right)\Delta(j,P)^{10}\exp\left(-\frac{2}{7}\frac{b\Delta(j,P)}{\log^2(b\Delta(j,P))}\right)\right)\,dt\\
&\leq 2|\alpha_P|\left(c_1 e^{-a_{\mathrm{lr}}\Delta(j,P)/2} + c_2\left(\frac{1}{1-\frac{35\log^2(b\Delta(j,P))}{b\Delta(j,P)}}\right)\Delta(j,P)^{10}\exp\left(-\frac{2}{7}\frac{b\Delta(j,P)}{\log^2(b\delta_1)}\right)\right),
\end{align}
\end{subequations}
where in the last line, we can bound this integral because $x_{c^*}, x_{c^*}' \in [-1,1]$, so their difference is at most $2$.
Taking $c_1' = 2c_1$ and $c_2' = 2c_2$, we arrive at the claim.
\end{proof}

\noindent With these two results, we can prove Lemma~\ref{lemma:approxlocal}.

\begin{proof}[Proof of Lemma~\ref{lemma:approxlocal}]
It remains to show that if we change multiple coordinates outside of $I_P$, the difference $|\Tr(P\rho(x)) - \Tr(P\rho(x'))|$ is still bounded.
Then, taking $f_P(x)$ to be $\alpha_P \Tr(P\rho(\chi_P(x)))$ with $\chi_P(x) \in [-1,1]^m$ equal to $x_c$ for coordinates $c \in I_P$ and $0$ for coordinates outside of $I_P$ gives the desired result.
Thus, we want to bound
\begin{equation}
|\alpha_P||\Tr(P\rho(x)) - \Tr(P\rho(x'))| = |\alpha_P||\Tr(P\rho(x_1,\dots, x_m)) - \Tr(P\rho(x_1',\dots,x_m'))|,
\end{equation}
where $x_{c'}' = x_{c'}$ if and only if $c' \in I_P$. We can bound this using the triangle inequality
\begin{align}
\begin{split}
|\alpha_P||\Tr(P \rho(x)) - \Tr(P \rho(x'))| &\leq |\alpha_P||\Tr(P \rho(x_1,x_2,\dots, x_m)) - \Tr(P\rho(x_1',x_2,\dots, x_m))|\\
&+ |\alpha_P||\Tr(P \rho(x_1',x_2,\dots, x_m)) - \Tr(P\rho(x_1',x_2',\dots, x_m))|\\
&+ \cdots\\
&+ |\alpha_P||\Tr(P \rho(x_1',\dots, x_{m-1}', x_m)) - \Tr(P \rho(x_1',\dots, x_m'))|
\end{split}
\end{align}
Here, recall that we are only changing coordinates outside of $I_P$, i.e., $x_c$ such that $h_j$ depends on $x_c$ and $\Delta(j,P) > \delta_1$ for $\delta_1$ in Definition~\ref{def:delta}.
Moreover, by assumption~\ref{assum:b}, each local term $h_j$ depends on at most a constant number $q$ of parameters.
Then, we can upper bound this sum by
\begin{subequations}
\begin{align}
|\alpha_P||\Tr(P\rho(x)) - \Tr(P\rho(x'))| &\leq q|\alpha_P| \sum_{j : \Delta(j,P) > \delta_1} |\Tr(P\rho(y_{k^*})) - \Tr(P\rho(y_{k^*}'))|,
\end{align}
\end{subequations}
where $k^*$ is defined as
\begin{equation}
    k^* \triangleq \argmax_{1 \leq k \leq m} |\Tr(P \rho(y_k)) - \Tr(P\rho(y_k'))|,
\end{equation}
and $y_k, y_k' \in [-1,1]^m$ denote parameter vectors that only differ in the $k$th coordinate.
Each of these terms in the summand can be bounded using Lemma~\ref{lemma:oneparamdist}:
\begin{subequations}
\begin{align}
&|\alpha_P||\Tr(P\rho(x)) - \Tr(P\rho(x'))|\\
&\leq q|\alpha_P| \sum_{j : \Delta(j,P) > \delta_1} \left(c_1' e^{-\frac{a_{\mathrm{lr}}\Delta(j,P)}{2}} + c_2'\left(\frac{1}{1-\frac{35\log^2(b\Delta(j,P))}{b\Delta(j,P)}}\right)\Delta(j,P)^{10}\exp\left(-\frac{2}{7}\frac{b\Delta(j,P)}{\log^2(b\Delta(j,P))}\right)\right)\\
\begin{split}
&\leq q|\alpha_P| \sum_{r=0}^{\infty} \sum_{j : \Delta(j,P) \in [\delta_1 + r, \delta_1 + r + 1]} \left(c_1' e^{-\frac{a_{\mathrm{lr}}\Delta(j,P)}{2}} \right.\\
&\hspace{100pt}\left.+ c_2'\left(\frac{1}{1-\frac{35\log^2(b\Delta(j,P))}{b\Delta(j,P)}}\right)\Delta(j,P)^{10}\exp\left(-\frac{2}{7}\frac{b\Delta(j,P)}{\log^2(b\Delta(j,P))}\right)\right).
\end{split}
\end{align}
\end{subequations}
Now, we want to upper bound this inner sum over $j$ such that $\Delta(j,P) \in [\delta_1 + r, \delta_1 + r + 1]$.
Using Eq.~\eqref{eq:localtermsbound}, we see that there are at most $|\mathsf{dom}(P)|(b_d + c_d(\delta_1 + r + 1)^d)$ interaction terms $h_j$ such that $\Delta(j,P) \in [\delta_1 + r, \delta_1 + r + 1]$.
Moreover, because $P$ acts on only a constant number of sites (by assumption), then
\begin{equation}
|\mathsf{dom}(P)|(b_d + c_d(\delta_1 + r + 1)^d) \leq c_P(b_d + c_d(\delta_1 + r + 1)^d),
\end{equation}
for some constant $c_P$. Using this as well as upper bounding the sum over $r$ by an integral, we have
\begin{subequations}
\begin{align}
&|\alpha_P||\Tr(P\rho(x)) - \Tr(P\rho(x'))|\\
&\leq q c_P|\alpha_P| \int_{r=0}^{+\infty} (b_d + c_d(\delta_1 + r + 1)^d)\\
&\cdot\left(c_1' e^{-\frac{a_{\mathrm{lr}}(\delta_1 + r)}{2}} + c_2'\left(\frac{1}{1-\frac{35\log^2(b(\delta_1 + r + 1))}{b(\delta_1 + r)}}\right)(\delta_1 + r + 1)^{10}\exp\left(-\frac{2}{7}\frac{b(\delta_1 + r)}{\log^2(b(\delta_1 + r + 1))}\right)\right)\,dr.
\end{align}
\end{subequations}
It remains to integrate this to obtain our desired bound. Distributing, we can split this integral into four terms. We bound each of these individually.

First, we have
\begin{equation}
\int_{r=0}^{+\infty} b_d c_1' e^{-a_{\mathrm{lr}}(\delta_1 + r)/2}\,dr= c_1' b_d e^{-a_{\mathrm{lr}}\delta_1/2} \int_{r=0}^{+\infty} e^{-a_{{\mathrm{lr}}}r/2}\,dr = \frac{2c_1' b_d}{a_{\mathrm{lr}}}e^{-a_{\mathrm{lr}}\delta_1/2}.
\end{equation}
We also have
\begin{subequations}
\begin{align}
\int_{r=0}^{+\infty}c_dc_1'(\delta_1 + r +1)^d e^{-a_{\mathrm{lr}}(\delta_1 + r)/2}\,dr &= c_1'c_d e^{-a_{\mathrm{lr}}\delta_1/2}\int_{r=0}^{+\infty} (\delta_1 + r + 1)^d e^{-a_{\mathrm{lr}}r/2}\,dr\\
&= c_1'c_d e^{-a_{\mathrm{lr}}\delta_1/2} \sum_{k=0}^d \frac{d!\, 2^{d-k+1}}{k!\, a_{\mathrm{lr}}^{d-k+1}}(\delta_1 + 1)^k,
\end{align}
\end{subequations}
where in the last equality we used integration by parts.
For the other two integrals, we use Lemma~\ref{lemma:integralbound-0} to obtain
\begin{equation}
\int_{r=0}^{+\infty} b_d c_2' \left(\frac{1}{1 - \frac{35\log^2(b(\delta_1 + r + 1))}{b(\delta_1 + r)}}\right)(\delta_1 + r + 1)^{10}\exp\left(-\frac{2}{7}\frac{b(\delta_1 + r)}{\log^2(b(\delta_1 + r + 1))}\right)\,dr \leq c\epsilon,
\end{equation}
for some constant $c$. Similarly, for the last integral, by Lemma~\ref{lemma:integralbound-d}, we have
\begin{equation}
\int_{r=0}^{+\infty} c_d c_2' \left(\frac{1}{1 - \frac{35\log^2(b(\delta_1 + r + 1))}{b(\delta_1 + r)}}\right)(\delta_1 + r + 1)^{d + 10}\exp\left(-\frac{2}{7}\frac{b(\delta_1 + r)}{\log^2(b(\delta_1 + r + 1))}\right)\,dr \leq c'\epsilon,
\end{equation}
for some constant $c'$. Putting everything together, we have
\begin{subequations}
\begin{align}
&|\alpha_P||\Tr(P\rho(x)) - \Tr(P\rho(x'))|\\
&\leq q c_P |\alpha_P|\left(\frac{2c_1'b_d}{a_{\mathrm{lr}}} e^{-a_{\mathrm{lr}}\delta_1/2} + c_1'c_d e^{-a_{\mathrm{lr}}\delta_1/2} \sum_{k=0}^d \frac{d! 2^{d - k +1}}{k! a_{\mathrm{lr}}^{d-k+1}}(\delta_1 + 1)^k + c\epsilon + c'\epsilon\right).
\end{align}
\end{subequations}
Combining constants and simplifying, we have
\begin{equation}
|\alpha_P||\Tr(P\rho(x)) - \Tr(P\rho(x'))| \leq |\alpha_P| \left(e^{-a_{\mathrm{lr}}\delta_1/2} \sum_{k=0}^d c_k'' \delta_1^k + c''\epsilon\right).
\end{equation}
To obtain the final bound, we can use our choice of $\delta_1$ to write this bound in terms of $\epsilon$:
\begin{equation}
e^{-a_{\mathrm{lr}}\delta_1/2}\sum_{k=0}^d c_k''\delta_1^k = \sum_{k=0}^d c_k'' e^{-a_{\mathrm{lr}}\delta_1/2 + k\log\delta_1} \leq \left(\sum_{k=0}^d c_k''\right)\epsilon,
\end{equation}
where the last inequality follows from our choice of $\delta_1$ in Definition~\ref{def:delta} and $C_1$ in Lemma~\ref{lemma:c1}. Thus, we have
\begin{equation}
|\alpha_P||\Tr(P\rho(x)) - \Tr(P\rho(x'))| \leq C|\alpha_P|\epsilon,
\end{equation}
where we take
\begin{equation}
C = \sum_{k=0}^d c_k'' + c''.
\end{equation}
To complete the proof, recall that $f_P(x) = \alpha_P\Tr(P \rho(\chi_P(x)))$, where $\chi_P$ is defined in Eq.~\eqref{eq:restriction-chi}.
The function $f_P$ only depends on parameters in $I_P$ by definition.
By the previous analysis, since $\chi_P(x)$ and $x$ only differs in the coordinates outside of the set $I_P$, the function $f_P(x)$ should be close to $\alpha_P \Tr(P \rho(\chi_P(x)))$ in absolute value as required.
Moreover, $\Tr(P\rho(x))$ is smooth by Lemma 4 in~\cite{huang2021provably} in that
\begin{equation}
\norm{\nabla_x \Tr(P\rho(x))}_2^2 \leq C'\norm{P}_\infty^2 = C'
\end{equation}
for some constant $C' > 0$. Then, because $f_P$ is defined as $\alpha_P \Tr(P \rho(\chi_P(x)))$, we have
\begin{equation}
\norm{\nabla_x f_P(x)}_2^2 \leq |\alpha_P|^2 C',
\end{equation}
so $f_P$ is smooth as claimed.
\end{proof}

\subsection{Simplification using discretization}
\label{sec:discretization}

Now, we want to show that the sum of smooth local functions $f(x) = \sum_{P \in S^{\mathrm{(geo)}}} f_P(x)$ from Corollary~\ref{corollary:approxlocal} can be approximated by simple functions, i.e., linear combinations of indicator functions.
In order to do so, we discretize our parameter space and map each $x \in [-1,1]^m$ to some $x'$ with discrete values.
Our simple function is then $f$ evaluated on this discretized $x'$.
To state this more precisely, we first require some definitions.
An illustrative example of how each set is defined is given in Figure~\ref{fig:def3}.

\begin{definition}[Discretization]
\label{def:discretization}
Let $\epsilon > 0$. Let
\begin{equation}
\delta_2 \triangleq \frac{1}{\left\lceil \frac{\sqrt{C' |I_P|}}{\epsilon}\right\rceil},
\end{equation}
where $I_P$ is defined in Definition~\ref{def:ip} and $C'$ is as in Lemma~\ref{lemma:approxlocal}. Define the discretized parameter space as
\begin{equation}
X_P \triangleq \left.\begin{cases}
x \in [-1,1]^m : \text{if } c \notin I_P, x_{c} = 0\\
\hspace{62pt} \text{if } c \in I_P, x_{c} \in \left\{0, \pm \delta_2, \pm 2\delta_2,\dots, \pm 1\right\}
\end{cases}\right\}.
\end{equation}
Moreover, for each $x \in X_P$, define the thickened affine subspace close to the vector $x$ for coordinates in $I_P$ as
\begin{equation}
T_{x, P} \triangleq \left\{x' \in [-1,1]^m : -\frac{\delta_2}{2} < x_{c} - x_{c}' \leq \frac{\delta_2}{2}, \forall c \in I_P\right\}.
\end{equation}
\end{definition}

\begin{figure}[t]
\centering
\includegraphics[width=0.95\linewidth]{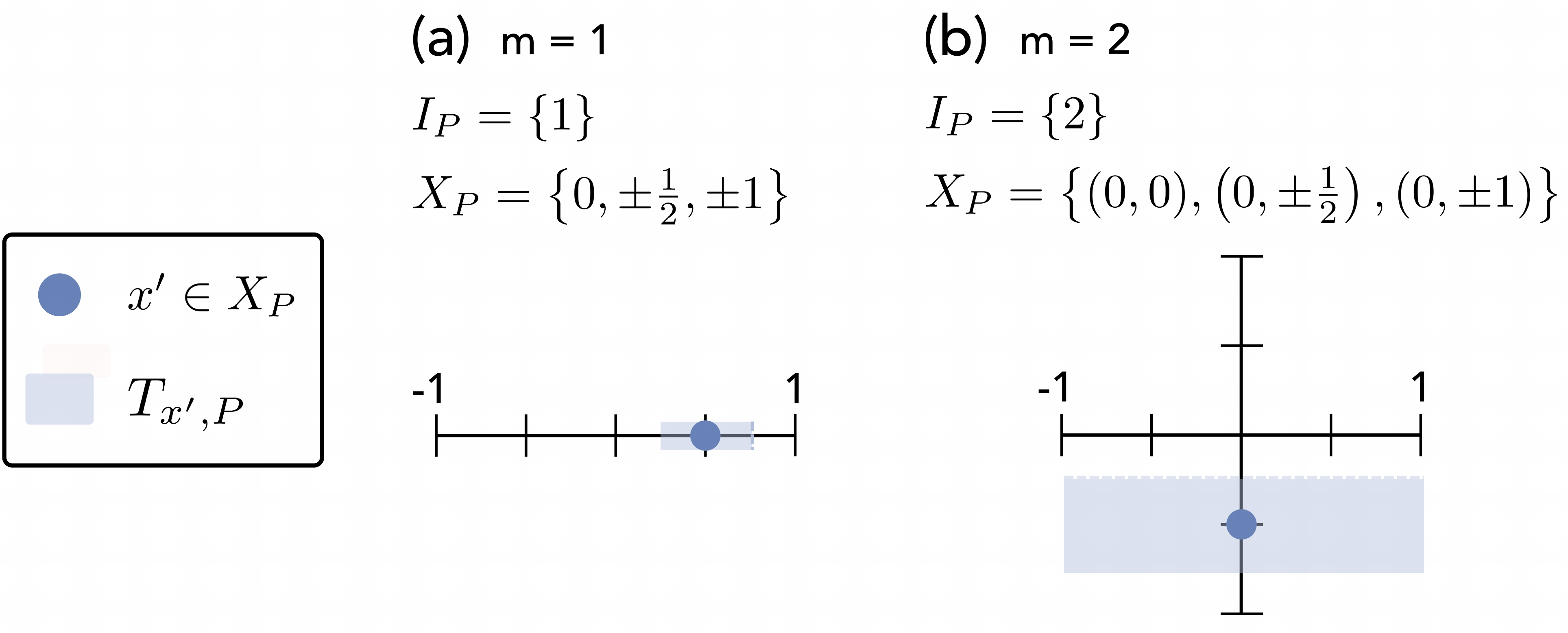}
\caption{\textbf{Example of Definition~\ref{def:discretization}}. Illustration of the set $T_{x', P}$ (light blue shading) for specific $x' \in X_P$ (blue circle), fixing $\delta_2 = 1/2$ for simplicity. \textbf{(a)} \textsc{Example for $m=1$.} $I_P$ is fixed to $\{1\}$ so that $X_P = \{0, \pm 1/2, \pm 1\}$ according to Def.~\ref{def:discretization}. $T_{x', P}$ is depicted for the chosen $x' = 1/2$. \textbf{(b)} \textsc{Example for $m = 2$.} $I_P$ is fixed to $\{2\}$, and $T_{x', P}$ is depicted for the chosen $x' = (0, -1/2)$.}
\label{fig:def3}
\end{figure}

\noindent With these definitions, the simple function that approximates $f$ is defined by
\begin{equation}
\label{eq:g}
g(x) \triangleq \sum_{P \in S^{\mathrm{(geo)}}} \left[ \, \sum_{x' \in X_P} f_P(x') \indicator[x \in T_{x',P}] \, \right] \triangleq \sum_{P \in S^{\mathrm{(geo)}}} g_P(x).
\end{equation}
In what follows, we prove that $g$ indeed approximates $f$ well. As in Appendix~\ref{sec:approxsmooth}, we first consider the simpler case where our observable $O = \alpha_P P$ is a single Pauli term, which easily generalizes to the general case via triangle inequality.

\begin{lemma}[Approximation using simple functions; simple case]
\label{lemma:approxsimple}
Let $\epsilon > 0$. Given this $\epsilon$ in Definition~\ref{def:discretization},
\begin{equation}
|g_P(x) - f_P(x)| < \epsilon|\alpha_P|
\end{equation}
for any $x$, where $f_P$ is as in Lemma~\ref{lemma:approxlocal} and $g_P$ is defined in Eq.~\eqref{eq:g}.
\end{lemma}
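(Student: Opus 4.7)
The plan is to reduce the claim to a smoothness estimate. First, I would verify that the thickened subspaces $\{T_{x', P}\}_{x' \in X_P}$ partition $[-1,1]^m$ with respect to the coordinates in $I_P$. Fix a coordinate $c \in I_P$: the half-open intervals $(x^*_c - \delta_2/2, x^*_c + \delta_2/2]$ centered at the lattice points $x^*_c \in \{-1, -1+\delta_2, \ldots, 1\}$ cover $[-1,1]$ without overlap (here the fact that $1/\delta_2 \in \mathbb{N}$ by Definition~\ref{def:discretization} and the half-open convention matter, in particular at the boundary $x_c = -1$). Taking a product over $c \in I_P$, and noting that coordinates outside $I_P$ do not constrain membership in $T_{x', P}$, I conclude that for every $x \in [-1,1]^m$ there is a unique $x^* \in X_P$ with $x \in T_{x^*, P}$. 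Consequently the sum defining $g_P(x)$ collapses to a single term and $g_P(x) = f_P(x^*)$.

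Next, I would exploit the fact that $f_P$ depends only on the coordinates in $I_P$. By Lemma~\ref{lemma:approxlocal}, $f_P(y) = \alpha_P \Tr(P \rho(\chi_P(y)))$, so $f_P(x) = f_P(\chi_P(x))$. Since $x^* \in X_P$ already satisfies $x^*_c = 0$ for $c \notin I_P$, we have $\chi_P(x^*) = x^*$. The displacement vector $\chi_P(x) - x^*$ therefore vanishes outside $I_P$, and by definition of $T_{x^*, P}$ each of its coordinates inside $I_P$ has magnitude at most $\delta_2/2$. Hence its Euclidean norm satisfies $\|\chi_P(x) - x^*\|_2 \leq \sqrt{|I_P|}\,\delta_2/2$.

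The final step is to combine this with the smoothness bound $\|\nabla_x f_P(z)\|_2 \leq |\alpha_P|\sqrt{C'}$ from Lemma~\ref{lemma:approxlocal}. Applying the fundamental theorem of calculus along the straight segment from $\chi_P(x)$ to $x^*$ and Cauchy--Schwarz yields
\begin{equation*}
|g_P(x) - f_P(x)| \;=\; |f_P(x^*) - f_P(\chi_P(x))| \;\leq\; |\alpha_P|\sqrt{C'} \cdot \sqrt{|I_P|}\cdot \frac{\delta_2}{2}.
\end{equation*}
Substituting $\delta_2 = 1/\lceil \sqrt{C'|I_P|}/\epsilon \rceil \leq \epsilon/\sqrt{C'|I_P|}$ from Definition~\ref{def:discretization} produces the bound $|\alpha_P|\epsilon/2$, which is strictly less than $|\alpha_P|\epsilon$ as required.

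There is no substantive obstacle: the argument is essentially a mean value theorem in $|I_P|$ coordinates together with a partition-of-unity style collapse of the sum. The only point demanding care is the boundary behavior of the partition at $x_c = \pm 1$, which is handled precisely because $\delta_2$ divides $1$ and the intervals in $T_{x^*, P}$ are half-open on one side only.
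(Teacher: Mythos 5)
Your proof is correct and follows essentially the same route as the paper's: show that $x$ lies in exactly one cell $T_{x^*,P}$ so that $g_P(x)=f_P(x^*)$, then control $|f_P(x^*)-f_P(x)|$ by integrating the gradient bound from Lemma~\ref{lemma:approxlocal} over the coordinates in $I_P$ and using $\delta_2 \leq \epsilon/\sqrt{C'|I_P|}$. The only (harmless) difference is that you observe the out-of-$I_P$ displacement contributes exactly zero because $f_P = f_P\circ\chi_P$, whereas the paper bounds that contribution by $|\alpha_P|\epsilon/2$ via a second invocation of Lemma~\ref{lemma:approxlocal}; your version is slightly tighter, giving $|\alpha_P|\epsilon/2$ in total rather than $|\alpha_P|\epsilon$.
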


\begin{corollary}[Approximation using simple functions; general case]
\label{corollary:approxsimple}
Let $\epsilon > 0$. Given this $\epsilon$ in Definition~\ref{def:discretization}, then
\begin{equation}
|g(x) - f(x)| < \epsilon\left(\sum_{P \in S^{\mathrm{(geo)}}} |\alpha_P|\right)
\end{equation}
for any $x$, where $f$ is as in Corollary~\ref{corollary:approxlocal} and $g$ is defined in Eq.~\eqref{eq:g}.
\end{corollary}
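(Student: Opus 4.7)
The plan is to combine the partition structure of the sets $\{T_{x', P}\}_{x' \in X_P}$ with the smoothness of $f_P$ provided by Lemma~\ref{lemma:approxlocal}.

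First, I would show that for every $x \in [-1, 1]^m$ there is exactly one $x'(x) \in X_P$ with $x \in T_{x'(x), P}$. The half-open condition $-\delta_2/2 < x_c - x'_c \leq \delta_2/2$ in Definition~\ref{def:discretization} selects, for each coordinate $c \in I_P$ individually, the nearest grid point in $\{0, \pm\delta_2, \ldots, \pm 1\}$, with unambiguous tie-breaking at the midpoints. Because $\delta_2 = 1/\lceil \sqrt{C'|I_P|}/\epsilon\rceil$ divides $1$, the grid reaches the endpoints $\pm 1$ cleanly, and the coordinates $c \notin I_P$ carry no constraint since $x'(x)_c = 0$ by the definition of $X_P$. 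Consequently the sum defining $g_P$ collapses to a single nonzero term, giving $g_P(x) = f_P(x'(x))$.

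Second, I would reduce $|g_P(x) - f_P(x)| = |f_P(x'(x)) - f_P(x)|$ to a Lipschitz estimate in the $I_P$-coordinates. By Lemma~\ref{lemma:approxlocal}, $f_P$ depends only on the coordinates in $I_P$ (so $f_P(x) = f_P(\chi_P(x))$ for the restriction $\chi_P$ of Eq.~\eqref{eq:restriction-chi}) and satisfies $\|\nabla_x f_P\|_2 \leq |\alpha_P|\sqrt{C'}$. Since $\chi_P(x)$ and $x'(x)$ both vanish outside $I_P$, the mean value inequality along the straight-line segment connecting them yields
\begin{equation*}
|f_P(x) - f_P(x'(x))| \leq |\alpha_P|\sqrt{C'}\,\|\chi_P(x) - x'(x)\|_2.
\end{equation*}
The Euclidean distance on the right involves only the $|I_P|$ coordinates in $I_P$, each bounded by $\delta_2/2$ thanks to $x \in T_{x'(x), P}$, so $\|\chi_P(x) - x'(x)\|_2 \leq \sqrt{|I_P|}\,\delta_2/2$.

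Third, I would plug in the choice of $\delta_2$ from Definition~\ref{def:discretization}, which guarantees $\sqrt{C'|I_P|}\,\delta_2 \leq \epsilon$, to conclude
\begin{equation*}
|g_P(x) - f_P(x)| \leq |\alpha_P|\sqrt{C'|I_P|}\cdot \frac{\delta_2}{2} \leq \frac{|\alpha_P|\epsilon}{2} < |\alpha_P|\epsilon.
\end{equation*}
The only mildly subtle point is the covering-and-disjointness claim in the first step, which hinges on the asymmetric inequality in the definition of $T_{x', P}$ together with $1/\delta_2$ being an integer; once that is in hand, the remainder is a routine mean value estimate on a smooth function over a box of $I_P$-Euclidean radius at most $\sqrt{|I_P|}\,\delta_2/2$.
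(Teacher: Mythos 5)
Your proof is correct and follows essentially the same route as the paper's proof of the single-Pauli case (Lemma~\ref{lemma:approxsimple}): establish that each $x$ lies in exactly one cell $T_{x'(x),P}$ so that $g_P(x)=f_P(x'(x))$, then bound $|f_P(x'(x))-f_P(x)|$ by a mean-value estimate using the gradient bound from Lemma~\ref{lemma:approxlocal} and the choice of $\delta_2$ in Definition~\ref{def:discretization}; your version is in fact marginally cleaner, since you use that $f_P = f_P\circ\chi_P$ exactly and thereby avoid the paper's extra (and vacuous, since that term is zero) triangle-inequality step for the coordinates outside $I_P$. The only thing missing relative to the stated corollary is the final one-line summation $|g(x)-f(x)|\le\sum_{P\in S^{\mathrm{(geo)}}}|g_P(x)-f_P(x)|<\epsilon\sum_{P\in S^{\mathrm{(geo)}}}|\alpha_P|$, which is exactly how the paper passes from the simple case to the general case.
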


\begin{proof}[Proof of Lemma~\ref{lemma:approxsimple}]
Consider some input $x \in [-1,1]^m$.
First, we want to argue that $x \in T_{x',P}$ for exactly one $x' \in X_P$.
Consider some variable $x_{c} \in [-1,1]$ of $x$ for $c \in I_P$.
It suffices to show that there exists $x_{c}' \in \{0,\pm \delta_2, \pm 2\delta_2,\dots, \pm 1\}$ such that $-\delta_2/2 < x_{c}' - x_{c} \leq \delta_2/2$.
This is clear because $\delta_2$ is defined as a fraction of the form $1/n$ for an integer $n$.
Moreover, there is at most one $x_{c}'$ such that this is true because each possible discrete value of $x_{c}'$ is separated by intervals of size $\delta_2$ while $x_{c}$ is within $\delta_2/2$ of $x_{c}'$, so there cannot be overlap for different values of $x_{c}'$.
Also, since $x_{c}$ is in a half-open interval of $x_{c}'$, this prevents points on the boundary (i.e., exactly $\delta_2/2$ away from $x_{c}'$) from being associated with two $x_{c}'$.
Finally, this half-open interval does not prevent the boundary case of $x_{c} = -1$ from being associated with an $x_{c}'$ because $-1$ is always a possible discrete value for $x_{c}'$.
This occurs again because of our choice of $\delta_2$ as a fraction of the form $1/n$ for an integer $n$. Thus, $x \in T_{x', P}$ for exactly one $x' \in X_P$.

With this, our goal is to show that
\begin{equation}
|g_P(x) - f_P(x)| = |f_P(x') - f_P(x)| < \epsilon|\alpha_P|.
\end{equation}
There are two parts to proving this. By definition of $T_{x', P}$ in Definition~\ref{def:discretization}, this means that $x'$ and $x$ are close for coordinates in $I_P$. However, for coordinates not in $I_P$, $x'$ and $x$ can be far away. Nevertheless, from our results in Lemma~\ref{lemma:approxlocal}, we know that $f_P$ does not change much when its input only differs for coordinates not in $I_P$. Thus, we can use this to obtain our bound.

To make this more clear, we introduce the notation
\begin{equation}
f_P(x) = f_P(x_{\mathrm{in}}; x_{\mathrm{out}}),
\end{equation}
where $x_{\mathrm{in}}$ denotes the variables $x_{c} \in [-1, 1]$ such that $c \in I_P$ and $x_{\mathrm{out}}$ denotes the variables $x_{c} \in [-1, 1]$ such that $c \notin I_P$.
With this, we can use the triangle inequality to treat the two cases separately:
\begin{subequations}
\label{eq:separateinout}
\begin{align}
|f_P(x') - f_P(x)| &= |f_P(x_{\mathrm{in}}'; x_{\mathrm{out}}') - f_P(x_{\mathrm{in}}; x_{\mathrm{out}})|\\
\begin{split}
&\leq |f_P(x_{\mathrm{in}}'; x_{\mathrm{out}}') - f_P(x_{\mathrm{in}}'; x_{\mathrm{out}})|\\
&+ |f_P(x_{\mathrm{in}}'; x_{\mathrm{out}}) - f_P(x_{\mathrm{in}}; x_{\mathrm{out}})|.
\end{split}
\end{align}
\end{subequations}
Here, in the first term, only the coordinates not in $I_P$ change while in the second term, only coordinates in $I_P$ change.
To bound the first term, we can use Lemma~\ref{lemma:approxlocal} with $\epsilon$ set to $\epsilon/(2C)$, where $C$ is the constant defined in Lemma~\ref{lemma:approxlocal}, to obtain
\begin{equation}
\label{eq:boundout}
|f_P(x_{\mathrm{in}}'; x_{\mathrm{out}}') - f_P(x_{\mathrm{in}}'; x_{\mathrm{out}})| \leq |\alpha_P| \frac{\epsilon}{2}.
\end{equation}

For the second term in Eq.~\eqref{eq:separateinout}, we bound this using the fact that $x'$ and $x$ are separated by at most $\delta_2$ for coordinates in $I_P$ and the smoothness condition on $f_P$ from Lemma~\ref{lemma:approxlocal}.
The key step here is that we can write this difference as the integral of the directional derivative of $f_P$ along the direction from $x_{\mathrm{in}}$ to $x'_{\mathrm{in}}$ given by a line. In particular, we can parameterize this line by $x_{\mathrm{in}}(t) = x_{\mathrm{in}} + (x_{\mathrm{in}}' - x_{\mathrm{in}})t$. Notice that at $t = 0$, this is equal to $x_{\mathrm{in}}$ while at $t = 1$, this is equal to $x_{\mathrm{in}}'$. Thus, suppressing the $x_{\mathrm{out}}$ parameters in our notation, we have
\begin{subequations}
\begin{align}
|f_P(x_{\mathrm{in}}') - f_P(x_{\mathrm{in}})| &= \left| \int_0^1 \frac{\partial f_P(x_{\mathrm{in}}(t))}{\partial t}\,dt \right|\\
&\leq \int_0^1 \left| \frac{\partial f_P(x_{\mathrm{in}}(t))}{\partial t}\right| \,dt\\
&= \int_0^1 \left|\frac{\partial f_P(x_{\mathrm{in}}(t))}{\partial x_{\mathrm{in}}(t)} \cdot \frac{\partial x_{\mathrm{in}}(t)}{\partial t} \right|\,dt\\
&= \int_0^1 \left|\nabla_{x_{\mathrm{in}}} f_P(x_{\mathrm{in}}) \cdot (x_{\mathrm{in}}'-x_{\mathrm{in}})\right|\,dt\\
&\leq \int_0^1 \norm{\nabla_{x_{\mathrm{in}}} f_P(x_{\mathrm{in}})}_2 \norm{x_{\mathrm{in}}'-x_{\mathrm{in}}}_2\,dt\\
&\leq \sqrt{C'}|\alpha_P| \norm{x_{\mathrm{in}}' - x_{\mathrm{in}}}_2\\
&\leq \sqrt{C'} |\alpha_P| \sqrt{|I_P|} \norm{x_{\mathrm{in}}' - x_{\mathrm{in}}}_\infty\\
&\leq \sqrt{C'}|\alpha_P| \sqrt{|I_P|} \frac{\delta_2}{2}\\
&\leq \frac{\epsilon}{2}|\alpha_P|.
\end{align}
\end{subequations}
Here, in the third line, we use the chain rule. In the fifth line, we use the Cauchy-Schwarz inequality. In the sixth line, we use the smoothness condition from Lemma~\ref{lemma:approxlocal} to bound the $\ell_2$-norm of the gradient. In the seventh line, we use the fact that $\norm{y}_2 \leq \sqrt{n}\norm{y}_\infty$ where $n$ is the number of elements in $y$. In the eighth line, we use the definition of $T_{x, P}$. Finally, in the last line, we use our choice of $\delta_2$ as
\begin{equation}
\delta_2 = \frac{1}{\left\lceil \frac{\sqrt{C' |I_P|}}{\epsilon}\right\rceil} \leq \frac{\epsilon}{\sqrt{C'|I_P|}}.
\end{equation}
Combining this bound with Eq.~\eqref{eq:boundout} and plugging into Eq.~\eqref{eq:separateinout}, we have
\begin{equation}
|f_P(x') - f_P(x)| < \frac{\epsilon}{2}|\alpha_P| + \frac{\epsilon}{2}|\alpha_P| = \epsilon|\alpha_P|,
\end{equation}
as required.
\end{proof}

\subsection{Simple form for ground state property}
\label{sec:simpleformal}

We can combine the results of the previous two sections to obtain the final result giving a simpler form for the ground state property $\Tr(O\rho(x))$.
The proof of this statement is simple given the previous results.

\begin{theorem}[Simple form for $\Tr(O\rho(x))$]
\label{thm:simple}
Let $1/e > \epsilon > 0$.
Given $\epsilon$, we define $\delta_1$ according to Definition~\ref{def:delta} with $\epsilon$ set to $\epsilon / (2C)$ for the constant $C$ defined in Corollary~\ref{corollary:approxlocal}, and define $\delta_2$ according to Definition~\ref{def:discretization} with $\epsilon$ set to $\epsilon / 2$.
The ground state property $\Tr(O\rho(x))$ can be approximated by a simple function, i.e.,
\begin{equation}
|\Tr(O\rho(x)) - g(x)| < \epsilon\left(\sum_{P \in S^{\mathrm{(geo)}}} |\alpha_P|\right),
\end{equation}
where $g$ is defined in Eq.~\eqref{eq:g}.
\end{theorem}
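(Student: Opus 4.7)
The plan is to prove Theorem~\ref{thm:simple} by a straightforward triangle inequality argument, combining the two approximation results from the preceding subsections. Specifically, I would write
\begin{equation*}
|\Tr(O\rho(x)) - g(x)| \;\leq\; |\Tr(O\rho(x)) - f(x)| \;+\; |f(x) - g(x)|,
\end{equation*}
where $f(x) = \sum_{P \in S^{\mathrm{(geo)}}} f_P(x)$ is the sum of smooth local functions from Corollary~\ref{corollary:approxlocal}, and $g(x) = \sum_{P \in S^{\mathrm{(geo)}}} g_P(x)$ is the simple indicator-based approximation defined in Eq.~\eqref{eq:g}.

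For the first term, I would apply Corollary~\ref{corollary:approxlocal} directly. Since that corollary gives a bound of the form $C\epsilon_1 \sum_P |\alpha_P|$ when the underlying hyperparameter $\delta_1$ is chosen according to Definition~\ref{def:delta} with tolerance $\epsilon_1$, I would set $\epsilon_1 = \epsilon/(2C)$; this is exactly the rescaling specified in the statement of the theorem. The resulting bound is $|\Tr(O\rho(x)) - f(x)| \leq (\epsilon/2) \sum_P |\alpha_P|$. For the second term, I would apply Corollary~\ref{corollary:approxsimple} with tolerance $\epsilon/2$ in Definition~\ref{def:discretization}, which yields $|f(x) - g(x)| < (\epsilon/2) \sum_P |\alpha_P|$.

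Adding the two bounds immediately gives $|\Tr(O\rho(x)) - g(x)| < \epsilon \sum_P |\alpha_P|$, as required. The only subtlety to verify is consistency of the hyperparameter choices: the $\delta_1$ used implicitly in the definitions of $I_P$, $X_P$, and $T_{x', P}$ (which appear in $g$) must match the $\delta_1$ used to construct $f_P$ in Lemma~\ref{lemma:approxlocal}, and the $\delta_2$ used in $g$ must match the one fixed in Definition~\ref{def:discretization} when invoking Lemma~\ref{lemma:approxsimple}. Both choices are dictated unambiguously by the theorem statement, so no obstacle arises; the proof is essentially a one-line combination of the two corollaries once the rescaling of $\epsilon$ is noted.
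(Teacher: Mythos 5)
Your proposal is correct and follows essentially the same route as the paper's proof: a triangle inequality through the intermediate function $f$, with Corollary~\ref{corollary:approxlocal} (rescaled to $\epsilon/(2C)$) bounding $|\Tr(O\rho(x)) - f(x)|$ and Corollary~\ref{corollary:approxsimple} (with $\epsilon/2$) bounding $|f(x)-g(x)|$. Your added remark on hyperparameter consistency is a fair point of care but introduces nothing beyond what the theorem statement already fixes.
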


\begin{proof}
By the triangle inequality, we have
\begin{equation}
\label{eq:triangle}
|\Tr(O\rho(x)) - g(x)| \leq |\Tr(O\rho(x)) - f(x)| + |f(x) - g(x)|.
\end{equation}
Here, the first term can be bounded by Corollary~\ref{corollary:approxlocal} to obtain
\begin{equation}
\label{eq:triangle1}
|\Tr(O\rho(x)) - f(x)| \leq \frac{\epsilon}{2}\left(\sum_{P \in S^{\mathrm{(geo)}}} |\alpha_P|\right).
\end{equation}
Meanwhile, the second term in Eq.~\eqref{eq:triangle} can be bounded by Corollary~\ref{corollary:approxsimple} to obtain
\begin{equation}
\label{eq:triangle2}
|f(x) - g(x)| < \frac{\epsilon}{2}\left(\sum_{P \in S^{\mathrm{(geo)}}} |\alpha_P|\right).
\end{equation}
Combining Eq.~\eqref{eq:triangle1} and Eq.~\eqref{eq:triangle2} in Eq.~\eqref{eq:triangle}, we have
\begin{equation}
|\Tr(O\rho(x)) - g(x)| < \epsilon\left(\sum_{P \in S^{\mathrm{(geo)}}} |\alpha_P|\right)
\end{equation}
This concludes the proof.
\end{proof}

\subsection{Technical lemmas for finding constants and bounding integrals}
\label{sec:techlemmas}

In this section, we state and prove several technical lemmas for showing the existence of certain constants and bounding integrals of specific forms needed throughout Appendix~\ref{sec:simple}.
Throughout this section, we use the notation $b \triangleq \gamma/(2v_{\mathrm{lr}})$.
First, we show the existence of the constants utilized in Definition~\ref{def:delta}.

\begin{lemma}
\label{lemma:c1}
Given $a_{\mathrm{lr}}, b > 0$ and $d \geq 1$, there exists a constant $C_1$ large enough such that for all $1/e > \epsilon' > 0$ and for all $\delta_1' > C_1\log^2(1/\epsilon')$,
\begin{equation}
\label{eq:c1}
\frac{a_{\mathrm{lr}}}{2}\delta_1'- d\log(\delta_1') \geq \log\left(\frac{1}{\epsilon'}\right).
\end{equation}
Explicitly, such a constant $C_1$ can be given by
\begin{equation}
C_1 = \frac{(2d + \sqrt{4d^2+2a_{\mathrm{lr}}})^2}{a_{\mathrm{lr}}^2}.
\end{equation}
\end{lemma}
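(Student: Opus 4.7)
The plan is to reduce inequality \eqref{eq:c1} to a quadratic in $\sqrt{\delta_1'}$ via the elementary estimate $\log y \leq 2\sqrt{y}$. Abbreviate $L \triangleq \log(1/\epsilon')$, so $L > 1$ because $\epsilon' < 1/e$; the goal becomes to find $C_1$ such that $\delta_1' \geq C_1 L^2$ implies $\tfrac{a_{\mathrm{lr}}}{2}\delta_1' - d\log \delta_1' \geq L$.

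First I would invoke the bound $\log y \leq 2\sqrt{y}$ for all $y > 0$, which I can justify by checking that for $y \in (0,1)$ the left-hand side is negative while the right is positive, and that for $y \geq 1$ the function $2\sqrt{y} - \log y$ equals $2$ at $y=1$ and has nonnegative derivative $\tfrac{1}{\sqrt{y}}-\tfrac{1}{y}$ thereafter. Applying this with $y = \delta_1'$ gives $d\log \delta_1' \leq 2d\sqrt{\delta_1'}$, so it suffices to establish the stronger inequality $\tfrac{a_{\mathrm{lr}}}{2}\delta_1' - 2d\sqrt{\delta_1'} \geq L$.

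Next I would substitute $u = \sqrt{\delta_1'}$, producing the quadratic $\tfrac{a_{\mathrm{lr}}}{2}u^2 - 2du - L \geq 0$, and then apply the quadratic formula to see that it holds for
\[
u \;\geq\; \frac{2d + \sqrt{4d^2 + 2a_{\mathrm{lr}} L}}{a_{\mathrm{lr}}}.
\]
The final step would be to pull an $L$ outside the square root using $L \geq 1$: both $2d \leq 2dL$ and $\sqrt{4d^2 + 2a_{\mathrm{lr}} L} \leq L\sqrt{4d^2 + 2a_{\mathrm{lr}}}$ follow at once from $L^2 \geq L$. Squaring the resulting bound on $u$ then yields the sufficient condition $\delta_1' \geq C_1 L^2$ with $C_1 = (2d + \sqrt{4d^2 + 2a_{\mathrm{lr}}})^2/a_{\mathrm{lr}}^2$, matching the statement exactly. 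The argument is essentially mechanical; the only point requiring any care is verifying that the crude log-vs-square-root envelope is tight enough to produce the advertised $C_1$ rather than a strictly larger constant, which is the case because $2\sqrt{y}$ captures the correct asymptotic growth of $\log y$ up to the factor in front.
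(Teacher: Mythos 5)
Your proof is correct and follows essentially the same route as the paper's: both bound the logarithm by a square-root envelope ($\log y \leq 2\sqrt{y}$ in your case, $\log z \leq 2(\sqrt{z}-1)$ in the paper's) and reduce to an algebraic inequality that yields the identical constant $C_1$. The only organizational difference is that you solve the resulting quadratic in $\sqrt{\delta_1'}$ directly, which handles all $\delta_1'$ above the threshold at once, whereas the paper first argues monotonicity of $\frac{a_{\mathrm{lr}}}{2}\delta_1' - d\log(\delta_1')$ in order to reduce to the boundary case $\delta_1' = C_1\log^2(1/\epsilon')$.
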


\begin{proof}
For simplicity, throughout this proof, let $x = \log(1/\epsilon')$. Because we assert that $1/e > \epsilon' > 0$, then $1 < x < \infty$.
First, we consider the monotonicity of $f(\delta_1') = \frac{a_{\mathrm{lr}}}{2} \delta_1'- d\log(\delta_1')$.
Taking the derivative of $f$ shows that $f(\delta_1')$ is monotonically increasing for $\delta_1' \geq 2d / a_{\mathrm{lr}}$.
Since $\delta_1' > C_1\log^2(1/\epsilon') = C_1x^2 \geq C_1 \geq 2d / a_{\mathrm{lr}}$ (note $d \geq 1$), it suffices to establish the claim for $\delta_1' = C_1x^2$, i.e.,
\begin{equation}
\frac{a_{\mathrm{lr}}}{2}C_1x^2 - d\log(C_1 x^2) \geq x
\end{equation}
for $x > 1$. We show that our choice of $C_1$ satisfies this inequality. First, using the inequality $\log(z) \leq 2(\sqrt{z} - 1)$ for $z > 0$, we can apply this with $z = C_1x^2$ to obtain
\begin{equation}
\frac{a_{\mathrm{lr}}}{2}C_1 x^2 - d\log(C_1x^2) \geq \frac{a_{\mathrm{lr}}}{2} C_1 x^2 - 2d\sqrt{C_1}x + 2.
\end{equation}
Bounding this trivially because $x^2 > x$ for $x > 1$, we have
\begin{equation}
\frac{a_{\mathrm{lr}}}{2}C_1 x^2 - d\log(C_1x^2) \geq \left(\frac{a_{\mathrm{lr}}}{2}C_1 - 2d\sqrt{C_1}\right)x.
\end{equation}
Plugging in our choice of $C_1$ and simplifying, we have
\begin{subequations}
\begin{align}
\frac{a_{\mathrm{lr}}}{2}C_1 x^2 - d\log(C_1x^2) &\geq \left(\frac{\left(2d+\sqrt{4d^2 + a_{\mathrm{lr}}}\right)^2}{2a_{\mathrm{lr}}} - 2d\left(\frac{2d + \sqrt{4d^2 + a_{\mathrm{lr}}}}{a_{\mathrm{lr}}}\right)\right)x\\
&= \left(\frac{8d^2 + 2a_{\mathrm{lr}} + 4d\sqrt{4d^2 + 2a_{\mathrm{lr}}}}{2a_{\mathrm{lr}}} - \frac{8d^2+4d\sqrt{4d^2 + 2a_{\mathrm{lr}}}}{2a_{\mathrm{lr}}}\right)x\\
&= x.
\end{align}
\end{subequations}
Hence, we obtain the desired inequality.
\end{proof}

\begin{lemma}
\label{lemma:c2}
Given $b > 0$, there exists a constant $C_2$ large enough such that for all $1/e > \epsilon' > 0$ and for all $\delta_1' > C_2\log^2(1/\epsilon')$,
\begin{equation}
\label{eq:c2}
\frac{2b\delta_1'}{7\log^2(b(\delta_1' + 1))} - 22\log(b(\delta_1'+1)) \geq \log\left(\frac{1}{\epsilon'}\right).
\end{equation}
Explicitly, such a constant $C_2$ can be given by
\begin{equation}
C_2 = \max\left(\frac{(18 b + (63 \cdot 22 / 2))^3}{b}, 1, \frac{2(7\cdot 16)^2}{b}, \frac{2^{3} (7 \cdot 22 \cdot 64)^4}{b}\right).
\end{equation}
\end{lemma}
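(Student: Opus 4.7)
The plan is to mimic the approach used in the proof of Lemma~\ref{lemma:c1}: first reduce to the boundary case $\delta_1' = C_2 \log^2(1/\epsilon')$ by a monotonicity argument, and then verify the resulting single-variable inequality with an appropriate choice of $C_2$. As before, I will write $x = \log(1/\epsilon')$, so that the hypothesis $1/e > \epsilon' > 0$ gives $x > 1$ and the desired inequality becomes
\begin{equation}
F(\delta_1') \;\triangleq\; \frac{2b\delta_1'}{7\log^2(b(\delta_1'+1))} - 22\log(b(\delta_1'+1)) \;\geq\; x.
\end{equation}

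The first step is to show $F$ is eventually monotonically increasing. A direct derivative computation gives
\begin{equation}
F'(\delta_1') = \frac{2b\bigl[\log(b(\delta_1'+1)) - 2\delta_1'/(\delta_1'+1)\bigr]}{7\log^3(b(\delta_1'+1))} - \frac{22}{\delta_1'+1},
\end{equation}
and for $\delta_1'$ large enough (certainly for all $\delta_1' \geq C_2 x^2 \geq C_2$ once $C_2$ is chosen large enough, using $C_2 \geq 1$ from the stated formula), the first term dominates the second because it decays only like $1/\log^2(\delta_1')$ while the second decays like $1/\delta_1'$. So it suffices to verify the inequality at $\delta_1' = C_2 x^2$.

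The second step is to estimate both terms at $\delta_1' = C_2 x^2$. Using $\log(b(C_2 x^2 + 1)) \leq \log(2b C_2 x^2) = \log(2bC_2) + 2\log x$, and further bounding $\log(2bC_2) \leq c_0 \log x$ for $x>1$ using the constants baked into $C_2$ (or simply by enlarging $C_2$), the denominator satisfies $\log^2(b(\delta_1'+1)) \leq K_1\log^2 x$ for a constant $K_1$ depending only on $b$ and $C_2$. Thus the positive term is at least
\begin{equation}
\frac{2b C_2 x^2}{7 K_1 \log^2 x},
\end{equation}
while the subtracted term is at most $22\bigl(\log(2bC_2) + 2\log x\bigr) \leq K_2 \log x$ for a constant $K_2$. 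Since $x^2/\log^2 x$ grows much faster than $x$ and $\log x$, for $C_2$ large enough (the explicit cubic and quartic terms in the statement provide the needed slack), the positive term dominates the sum of $x$ and $K_2 \log x$. Concretely, one splits the positive term into two halves: half absorbs $22\log(b(\delta_1'+1))$, and the other half is shown to be at least $x$. The terms $(18b + 63\cdot 22/2)^3/b$ and $2^3(7\cdot 22\cdot 64)^4/b$ in the definition of $C_2$ are exactly the thresholds needed for these two bounds to go through, while the term $2(7\cdot 16)^2/b$ handles the $\log$ approximation slack.

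The main obstacle is bookkeeping: because of the $\log^2$ in the denominator, one cannot simply apply $\log z \leq 2(\sqrt{z}-1)$ as in Lemma~\ref{lemma:c1} and be done; instead, each of the absorbed error terms $\log(b(\delta_1'+1))$, $\log x$, and the constant $\log(2bC_2)$ must be dominated by a small fraction of $\frac{2b C_2 x^2}{7 K_1 \log^2 x}$, which forces $C_2$ to be polynomially large in $1/b$ with the specific degrees appearing in the stated bound. None of this is deep; it is simply a matter of choosing each constituent constant so that its corresponding error term is bounded by a fixed fraction (say $1/4$) of the dominant quantity, and then summing.
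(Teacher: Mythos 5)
Your overall strategy matches the paper's: reduce to the boundary case $\delta_1' = C_2\log^2(1/\epsilon')$ by a monotonicity argument, then split the positive term $\frac{2b\delta_1'}{7\log^2(b(\delta_1'+1))}$ into two halves, one to dominate $x = \log(1/\epsilon')$ and one to absorb $22\log(b(\delta_1'+1))$. The monotonicity step and the splitting idea are both fine (your derivative formula is correct, and the paper makes the ``eventually increasing'' claim precise via $\log z \leq 3z^{1/3}$).

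However, there is a genuine gap in the bounding step. You write $\log(b(C_2x^2+1)) \leq \log(2bC_2) + 2\log x$ and then claim $\log(2bC_2) \leq c_0\log x$ for all $x>1$, ``using the constants baked into $C_2$ (or simply by enlarging $C_2$).'' This is false: as $x\to 1^{+}$ the right-hand side tends to $0$ while $\log(2bC_2)$ is a fixed positive constant, and enlarging $C_2$ only makes the left-hand side bigger. Consequently the bound $\log^2(b(\delta_1'+1)) \leq K_1\log^2 x$, and hence your lower bound $\frac{2bC_2x^2}{7K_1\log^2 x}$ on the positive term, are invalid for $\epsilon'$ near $1/e$ (i.e.\ $x$ near $1$), which is squarely inside the required range $1/e > \epsilon' > 0$. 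Your closing appeal to ``$x^2/\log^2 x$ grows much faster than $x$'' is an asymptotic statement as $x\to\infty$ and does not repair the regime $x\approx 1$. The target inequality is in fact still true there (the positive term is $\approx 2bC_2/(7\log^2(2bC_2))$, which dwarfs $22\log(2bC_2)+1$ for large $C_2$), but your chain of estimates does not establish it. The paper sidesteps this by never comparing logs to $\log x$: it uses the polynomial bounds $\log z \leq 4z^{1/4}$ (and $z \leq 2bC_2x^2$) to convert $\log^2(b(\delta_1'+1))$ into a constant times $x$, so that the two half-inequalities become $\frac{\sqrt{bC_2}}{7\cdot 16\sqrt{2}}\,x \geq x$ and $\frac{bC_2x^2}{7} \geq 22\cdot 64\,(2bC_2)^{3/4}x^{3/2}$, which hold uniformly for all $x>1$ and directly produce the explicit constants $2(7\cdot 16)^2/b$ and $2^3(7\cdot 22\cdot 64)^4/b$ in the statement. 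To make your argument work you would either need to keep the additive constant $\log(2bC_2)$ explicit and treat the regime of bounded $x$ separately, or switch to power-law bounds on the logarithm as the paper does.
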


\begin{proof}
For simplicity, throughout this proof, let $x =\log(1/\epsilon')$. Because we assert that $1/e > \epsilon' > 0$, then $1 < x < \infty$.
First, we consider the monotonicity of $f(\delta_1') = \frac{2b\delta_1'}{7\log^2(b(\delta_1' + 1))} - 22\log(b(\delta_1'+1))$.
Taking the derivative of $f$ shows that $f(\delta_1')$ is monotonically increasing for $\delta_1' \geq (18 b + (63 \cdot 22 / 2))^3 / b$.
For $\delta_1' \geq (18 b + (63 \cdot 22 / 2))^3 / b$, we can make use of $\log(z) \leq 3 z^{1/3}, \forall z > 0$ to show that
\begin{equation}
    \frac{2b (\delta_1' + 1)}{7 \log^2(b(\delta_1' + 1))} \geq \frac{4}{7} b + 22.
\end{equation}
Because $\delta_1' \geq (18 b + (63 \cdot 22 / 2))^3 / b \geq e / b$, we have
\begin{equation}
    \log^3(b(\delta_1'+ 1)) \geq 1.
\end{equation}
Together, we can show that for $\delta_1' \geq (18 b + (63 \cdot 22 / 2))^3 / b$,
\begin{equation}
    f'(\delta_1') = \frac{2b}{7 \log^2(b(\delta_1' + 1))} - \frac{4b}{7 (\delta_1' + 1) \log^3(b(\delta_1' + 1))} - \frac{22}{\delta_1' + 1} \geq 0.
\end{equation}

Since $\delta_1' > C_2 \log^2(1/\epsilon') = C_2 x^2 \geq C_2 \geq (18 b + (63 \cdot 22 / 2))^3 / b$, it suffices to establish the claim for $\delta_1' = C_2 x^2$, i.e.,
\begin{equation}
\frac{2bC_2x^2}{7\log^2(bC_2x^2 + b)} - 22\log(bC_2x^2 + b) \geq x
\end{equation}
for $x > 1$. We show that our choice of $C_2$ satisfies this. First, notice that it suffices to show the following two inequalities
\begin{equation}
\label{eq:1}
\frac{bC_2x^2}{7\log^2(bC_2 x^2 + b)} \geq x
\end{equation}
and
\begin{equation}
\label{eq:2}
\frac{bC_2x^2}{7\log^2(bC_2x^2 + b)} \geq 22\log(bC_2x^2 + b).
\end{equation}
Since $C_2 \geq 1$ and $x > 1$, then $C_2x^2 \geq 1$ and $bC_2 x^2 + b \leq 2bC_2x^2$. Then, in Eq.~\eqref{eq:1}, we have
\begin{equation}
\frac{bC_2x^2}{7\log^2(bC_2x^2 + b)} \geq \frac{bC_2x^2}{7\log^2(2bC_2x^2)} \geq \frac{\sqrt{C_2b}}{7\cdot 16\sqrt{2}}x \geq x,
\end{equation}
where the second inequality follows using the inequality $\log(z) \leq 4z^{1/4}$ for $z > 0$, applied with $z = 2bC_2x^2$, and the last inequality follows from our choice of $C_2$.
This proves Eq.~\eqref{eq:1}. Now, to prove Eq.~\eqref{eq:2}, notice that it suffices to show that
\begin{equation}
\label{eq:3}
\frac{bC_2x^2}{7} \geq 22(4 (2bC_2x^2)^{1/4})^3 = 22 \cdot 64 (2bC_2)^{3/4} x^{3/2}.
\end{equation}
This is because, again using the inequality $\log(z) \leq 4z^{1/4}$ with $z = 2bC_2x^2$, then
\begin{equation}
\frac{bC_2x^2}{7} \geq 22 \cdot 64 (2bC_2)^{3/4} x^{3/2} \geq 22\log^3(2bC_2x^2)
\end{equation}
so Eq.~\eqref{eq:3} implies Eq.~\eqref{eq:2}.
Thus, it remains to prove Eq.~\eqref{eq:3}, which is equivalent to
\begin{equation}
    C_2 \geq \frac{(7 \cdot 22 \cdot 64)^4 \cdot 2^{3}}{b}.
\end{equation}
Because $x > 1$, our choice of $C_2$ satisfies the above inequality.
\end{proof}

\begin{lemma}
\label{lemma:c3}
Given $a_{\mathrm{lr}}, b > 0$ and $d \geq 1$, there exists a constant $C_3$ large enough such that for all $1/e > \epsilon' > 0$ and for all $\delta_1' > C_3\log^2(1/\epsilon')$,
\begin{equation}
\label{eq:c3}
\frac{2b\delta_1'}{\log^2(b(\delta_1' + 1))} - (d + 22)\log(b(\delta_1'+1)) \geq \log\left(\frac{1}{\epsilon'}\right)
\end{equation}
Explicitly, such a constant $C_3$ can be given by
\begin{equation}
C_3 = \max\left(\frac{(18b + 63(d+22)/2)^3}{b}, 1, \frac{2(7\cdot 16)^2}{b}, \frac{2^3(7 \cdot (d+22) \cdot 64)^4}{b}\right).
\end{equation}
\end{lemma}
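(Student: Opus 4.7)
The plan is to mirror the proof of Lemma~\ref{lemma:c2} almost verbatim, adjusting constants to account for two changes: the factor of $7$ in the denominator has been removed (making the first term larger, which helps), and the coefficient $22$ has been upgraded to $d+22$ (slightly worse). Let $x \triangleq \log(1/\epsilon')$, so $x > 1$, and set $\delta_1' = C_3 x^2$. It suffices to show
\begin{equation}
\frac{2 b C_3 x^2}{\log^2(b C_3 x^2 + b)} - (d+22)\log(b C_3 x^2 + b) \ge x.
\end{equation}

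First I would verify the monotonicity statement: the function $f(\delta_1') = \frac{2b\delta_1'}{\log^2(b(\delta_1'+1))} - (d+22)\log(b(\delta_1'+1))$ is non-decreasing once $\delta_1' \geq (18b + 63(d+22)/2)^3/b$. Differentiating yields $f'(\delta_1') = \frac{2b}{\log^2(b(\delta_1'+1))} - \frac{4b}{(\delta_1'+1)\log^3(b(\delta_1'+1))} - \frac{d+22}{\delta_1'+1}$, and using $\log(z) \leq 3z^{1/3}$ one can show $\frac{2b(\delta_1'+1)}{\log^2(b(\delta_1'+1))} \geq 4b + d + 22$ on this range, which forces $f'(\delta_1') \geq 0$. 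Since $C_3 \geq (18b + 63(d+22)/2)^3/b$ and $x > 1$, we have $\delta_1' = C_3 x^2 \geq C_3 \geq (18b + 63(d+22)/2)^3/b$, so it is enough to verify the inequality on the lattice point $\delta_1' = C_3 x^2$.

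Next I would split $\frac{2b C_3 x^2}{\log^2(bC_3 x^2 + b)}$ into two equal halves and prove the two sub-inequalities
\begin{equation}
\frac{b C_3 x^2}{\log^2(b C_3 x^2 + b)} \ge x, \qquad \frac{b C_3 x^2}{\log^2(b C_3 x^2 + b)} \ge (d+22)\log(b C_3 x^2 + b).
\end{equation}
Because $C_3 \geq 1$ and $x > 1$, $bC_3 x^2 + b \leq 2 b C_3 x^2$, and the bound $\log(z) \leq 4z^{1/4}$ applied with $z = 2bC_3 x^2$ converts both sub-inequalities into polynomial inequalities in $x$. The first reduces to $\frac{\sqrt{C_3 b}}{16\sqrt{2}} x \geq x$, which holds whenever $C_3 \geq 2(7 \cdot 16)^2/b$ (with room to spare since the $7$ is absent here, but keeping the Lemma~\ref{lemma:c2} choice is harmless). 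The second reduces to $b C_3 x^2 \geq (d+22) \cdot 64\, (2b C_3)^{3/4} x^{3/2}$, equivalent to $C_3 \geq 2^3((d+22)\cdot 64)^4/b$.

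Taking $C_3$ to be the maximum of the three thresholds established above, namely
\begin{equation}
C_3 = \max\!\left(\frac{(18b + 63(d+22)/2)^3}{b},\; 1,\; \frac{2(7\cdot 16)^2}{b},\; \frac{2^3(7\cdot(d+22)\cdot 64)^4}{b}\right),
\end{equation}
then closes all the estimates simultaneously. The only genuinely nontrivial step is the monotonicity argument at the start; everything else is a direct transcription of the Lemma~\ref{lemma:c2} argument with $22 \mapsto d+22$ and the factor $1/7$ dropped, so I do not anticipate new obstacles.
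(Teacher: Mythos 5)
Your proposal is correct and is exactly the paper's proof: the paper disposes of Lemma~\ref{lemma:c3} in one line by declaring it identical to the proof of Lemma~\ref{lemma:c2} with $22$ replaced by $d+22$, which is precisely the transcription you carry out (monotonicity via $\log(z)\leq 3z^{1/3}$, reduction to $\delta_1'=C_3x^2$, then the two sub-inequalities via $\log(z)\leq 4z^{1/4}$). Your observation that the absent factor of $1/7$ only creates slack, so the Lemma~\ref{lemma:c2}-style constants remain valid, matches why the stated $C_3$ (which retains the $7$'s) works.
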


\begin{proof}
The proof is the same as that of Lemma~\ref{lemma:c2} after replacing $22$ by $d+22$.
\end{proof}

\noindent Next, we begin the integral bounds portion of this section and reprove a variant of the lemma introduced in~\cite{bachmann2012automorphic}.

\begin{lemma}[Variant of Lemma 2.5 in~\cite{bachmann2012automorphic}, Lemma 5 in~\cite{huang2021provably}]
\label{lemma:integralbound}
For $a > 0$ define
\[u_a(x) = e^{-a\frac{x}{\log^2(x)}}\]
on the domain $x \in (1,\infty)$, where $\log$ denotes the natural logarithm. For all integers $k \geq 0$ and $t \geq 5504$ such that
\begin{equation}
a\frac{t}{\log^2 t} > 2k + 2,
\end{equation}
we have the bound
\begin{equation}
\int_t^{+\infty} x^k u_a(x)\,dx \leq \frac{1}{a\left(1 - \frac{2k+2}{\tau(t)}\right)} t^{2k+2} u_a(t),
\end{equation}
where $\tau(x) \triangleq ax/\log^2(x)$.
\end{lemma}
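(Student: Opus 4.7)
The plan is to exploit a logarithmic-derivative identity for $u_a$ and combine it with one round of integration by parts. First I differentiate:
\[
u_a'(x) \;=\; -u_a(x)\cdot a\cdot\frac{d}{dx}\!\left(\frac{x}{\log^2 x}\right)\;=\;-\frac{a(1-2/\log x)}{\log^2 x}\,u_a(x),
\]
which rearranges to the key identity $u_a(x)=\tfrac{\log^2 x}{a(1-2/\log x)}(-u_a'(x))$. Since $t\ge 5504$ forces $\log t>8$, two monotonicity facts hold on $[t,\infty)$: (i) $\log^2(x)/x$ is decreasing on $(e^2,\infty)$, so $\log^2 x\le x\log^2 t/t = ax/\tau(t)$; and (ii) $1/(1-2/\log x)$ is decreasing in $x$ on $(e^2,\infty)$, so $1/(1-2/\log x)\le 1/(1-2/\log t)$. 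Substituting these into the identity produces, for $x\ge t$,
\[
x^k u_a(x)\;\le\;\frac{x^{k+1}}{\tau(t)(1-2/\log t)}\bigl(-u_a'(x)\bigr).
\]

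Integrating from $t$ to $\infty$ and integrating by parts on the right with $u=x^{k+1}$, $dv=-u_a'(x)\,dx$, $v=-u_a(x)$ (the boundary term at $\infty$ vanishing since $u_a$ decays superpolynomially), one gets
\[
\int_t^\infty x^{k+1}(-u_a'(x))\,dx \;=\; t^{k+1}u_a(t) \;+\; (k+1)\!\int_t^\infty x^k u_a(x)\,dx.
\]
Writing $I_k=\int_t^\infty x^k u_a(x)\,dx$ yields the self-referential inequality
\[
\tau(t)(1-2/\log t)\,I_k \;\le\; t^{k+1}u_a(t)\;+\;(k+1)I_k.
\]
Under the hypothesis $\tau(t)>2k+2$ together with $1-2/\log t\ge 1/2$ (since $\log t>8$), the coefficient $\tau(t)(1-2/\log t)>(k+1)$, so we may solve for $I_k$ to obtain
\[
I_k\;\le\;\frac{t^{k+1}u_a(t)}{\tau(t)(1-2/\log t)-(k+1)}.
\]

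It remains to recast this in the exact form claimed by the lemma. Multiplying numerator and denominator by $\tau(t)/a$ and using $\tau(t)/a=t/\log^2 t$ together with $\log^2 t\le t$ at each step, one checks that the right-hand side above is bounded by
\[
\frac{1}{a(1-(2k+2)/\tau(t))}\,t^{2k+2}u_a(t),
\]
which is exactly the stated inequality. A cleaner alternative at this point is to skip the IBP and bound $u_a(x)/u_a(t)$ directly: the mean value theorem combined with $\tau(x)\ge\tau(t)$ for $x\ge t$ yields $u_a(x)\le u_a(t)(t/x)^{\tau(t)(1-2/\log t)}$, after which the integral becomes a convergent power-function integral with the same resulting bound. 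The main obstacle is really just the bookkeeping in the final algebraic step: verifying that the denominator factor $\tau(t)(1-2/\log t)-(k+1)$ dominates $a(1-(2k+2)/\tau(t))/t^{k+1}$ uniformly under the lemma's hypotheses $t\ge 5504$ and $\tau(t)>2k+2$; every other step is a direct consequence of the logarithmic derivative identity and standard monotonicity on $[e^2,\infty)$.
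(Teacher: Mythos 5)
Your argument is correct in substance but follows a genuinely different route from the paper. The paper substitutes $\tau = \tau(x)$, bounds $x(\tau) \le \tau^2/a^2$ and $dx/d\tau \le \tau^2/a^3$ (this is where the exponent $2k+2$ originates), reduces the integral to the upper incomplete Gamma function $\Gamma(2k+3,\tau(t))$, and then invokes a cited bound on $\Gamma(n+1,z)$. You instead use the logarithmic-derivative identity $u_a(x) = \tfrac{\log^2 x}{a(1-2/\log x)}(-u_a'(x))$, the two monotonicity facts on $[e^2,\infty)$, and one integration by parts to obtain the self-referential inequality $\tau(t)(1-2/\log t)\,I_k \le t^{k+1}u_a(t) + (k+1)I_k$, which you solve for $I_k$. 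This is an elementary, self-contained re-derivation that avoids the incomplete Gamma function entirely, and your intermediate bound $I_k \le t^{k+1}u_a(t)/[\tau(t)(1-2/\log t)-(k+1)]$ is in fact sharper in the power of $t$ than the stated one. The one piece you leave unverified is the final comparison to the lemma's form, which you correctly identify as the inequality $a\bigl(1-\tfrac{2k+2}{\tau(t)}\bigr) \le t^{k+1}\bigl[\tau(t)(1-2/\log t)-(k+1)\bigr]$. This does hold under the hypotheses: the left side is at most $a$, while the right side is at least $t\,\tau(t)\bigl(\tfrac12 - \tfrac{2}{\log t}\bigr) \ge 0.26\,a\,t^2/\log^2 t \ge 0.26\,a\,t^{3/2} \gg a$, using $\tau(t)(1-2/\log t) - (k+1) \ge \tau(t)(\tfrac12 - \tfrac{2}{\log t})$ from $\tau(t) > 2k+2$, together with $\log^4 t \le t$ for $t \ge 5504$. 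So the gap is benign bookkeeping rather than a flaw, but you should carry it out explicitly, since it is the only place the specific threshold $t \ge 5504$ is actually needed in your version of the argument.
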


\noindent To prove this, we need a bound on the upper incomplete Gamma function:
\begin{lemma}[Proposition 2.7 in~\cite{pinelis2020exact}]
Take any real $n \geq 0$. Then,
\label{lemma:incompletegammabound}
\begin{equation}
\Gamma(n + 1, z) \leq \frac{1}{1- \frac{n}{z}} z^n e^{-z},
\end{equation}
for all real $z > n$.
\end{lemma}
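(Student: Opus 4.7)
The plan is to bound the upper incomplete Gamma function $\Gamma(n+1,z) = \int_z^\infty t^n e^{-t}\,dt$ by a direct substitution that turns the integrand into something that decays exponentially, and then integrate the resulting exponential explicitly. The key observation is that the hypothesis $z > n$ is exactly what is needed to make the resulting exponential decay (rather than grow).

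First I would shift the integration variable by setting $t = z+s$ so that the integral becomes
\begin{equation*}
\Gamma(n+1,z) = e^{-z}\int_0^\infty (z+s)^n e^{-s}\,ds = z^n e^{-z}\int_0^\infty \left(1+\frac{s}{z}\right)^{\!n} e^{-s}\,ds.
\end{equation*}
Next I would apply the elementary inequality $1+x \leq e^x$ (valid for all real $x$, but only needed here for $x \geq 0$), raised to the $n$-th power, which is legitimate because $n \geq 0$ and $1+s/z > 0$. This yields $(1+s/z)^n \leq e^{ns/z}$, so that
\begin{equation*}
\Gamma(n+1,z) \leq z^n e^{-z}\int_0^\infty e^{-s(1 - n/z)}\,ds.
\end{equation*}

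The hypothesis $z > n$ ensures $1 - n/z > 0$, so the final integral converges and equals $(1-n/z)^{-1}$. Substituting back gives exactly the claimed bound
\begin{equation*}
\Gamma(n+1,z) \leq \frac{1}{1-\frac{n}{z}} z^n e^{-z}.
\end{equation*}
There is no real obstacle here; the only subtle point is recognizing that the shift $t \mapsto z+s$ combined with the bound $(1+s/z)^n \leq e^{ns/z}$ is exactly strong enough to produce the prefactor $1/(1-n/z)$ without losing anything of consequence. A slightly different derivation via integration by parts would work when $n$ is a positive integer, but the substitution approach has the advantage of handling all real $n \geq 0$ uniformly, which matches the statement of the lemma.
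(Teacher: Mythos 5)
Your proof is correct. The paper itself does not prove this lemma at all: it is imported verbatim as Proposition~2.7 of the cited reference (Pinelis, 2020), where it is obtained as part of a more elaborate study of exact bounds on the ratio $\Gamma(n+1,z)/(z^{n}e^{-z})$. Your argument is a clean, self-contained replacement: the substitution $t=z+s$ factors out $z^{n}e^{-z}$ (legitimate since $z>n\geq 0$ forces $z>0$), the bound $(1+s/z)^{n}\leq e^{ns/z}$ is valid because $1+x\leq e^{x}$ and raising to the power $n\geq 0$ preserves the inequality between quantities that are at least $1$, and the condition $z>n$ is used exactly where it must be, to make $1-n/z>0$ so that $\int_{0}^{\infty}e^{-s(1-n/z)}\,ds=(1-n/z)^{-1}$ converges. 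Every step is elementary and uniform in real $n\geq 0$, which matches the generality of the statement; what you lose relative to Pinelis's treatment is only the sharper information he derives about when such bounds are tight, none of which is needed here. If anything, your derivation could be inlined into the appendix to make Lemma~\ref{lemma:integralbound} self-contained.
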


\begin{proof}[Proof of Lemma~\ref{lemma:integralbound}]
Define the function
\begin{equation}
\tau(x) \triangleq a\frac{x}{\log^2(x)}.
\end{equation}
Here, because we are considering the domain $x \in (1,\infty)$, then this function is well-defined and differentiable. Moreover, it is always positive because $\log^2(x) \geq \log(x) > 0$ for $x > 1$. Also, consider the derivative
\begin{equation}
\frac{d\tau}{dx} = a\left(\frac{\log(x) - 2}{\log^3(x)}\right).
\end{equation}
Again, this is well-defined because $\log^3(x) > 0$ for $x > 1$. Furthermore, we see that if $x \geq e^2$, then $\frac{d\tau}{dx} > 0$. Thus, for $x \geq e^2$, $\tau(x)$ is monotone increasing. Ultimately, our goal is to bound the integral
\begin{equation}
\int_t^{+\infty} x^k u_a(x) \,dx = \int_t^{+\infty} x^k e^{-\tau(x)}\,dx
\end{equation}
by using a substitution $\tau = \tau(x), d\tau = \frac{d\tau}{dx}dx$. Substituting in for $x$, we use the inverse $x = x(\tau)$ and for the differential $dx$, we use $dx = \frac{dx}{d\tau}d\tau$ to obtain
\begin{equation}
\label{eq:substitution}
\int_{\tau(t)}^{+\infty}(x(\tau))^k e^{-\tau} \frac{dx}{d\tau}\,d\tau.
\end{equation}
We want to get this into the form of the upper incomplete Gamma function:
\begin{equation}
\Gamma(n+1,z) = \int_z^{+\infty} \tau^n e^{-\tau}\,d\tau = n! e^{-z} \sum_{k=0}^n \frac{z^k}{k!}.
\end{equation}
Thus, we want to find bounds on $\frac{dx}{d\tau}$ and $x(\tau)$ in terms of $\tau$ (and constants). Since we define $x(\tau)$ as the inverse of $\tau(x)$, then we know that
\begin{equation}
\frac{dx}{d\tau} = \frac{1}{\frac{d\tau}{dx}} = \frac{1}{a}\left(\frac{\log^3(x)}{\log(x) - 2}\right).
\end{equation}
We notice here that if $x \geq 28$, then
\begin{equation}
\frac{dx}{d\tau} = \frac{1}{a}\left(\frac{\log^3(x)}{\log(x) - 2}\right) \leq \frac{x}{a}.
\end{equation}
If we further require $x \geq 5504$, then
\begin{equation}
x \leq \left(\frac{x}{\log^2x}\right)^2 = \frac{\tau^2}{a^2}.
\end{equation}
Using these together, we have that
\begin{equation}
\frac{dx}{d\tau} \leq \frac{x}{a} \leq \frac{\tau^2}{a^3}.
\end{equation}
Plugging these into Eq.~\eqref{eq:substitution}, we can upper bound our integral
\begin{subequations}
\begin{align}
\int_t^{+\infty} x^k e^{-\tau(x)}\,dx &= \int_{\tau(t)}^{+\infty}(x(\tau))^k e^{-\tau} \frac{dx}{d\tau}\,d\tau\\
&\leq \int_{\tau(t)}^{+\infty} \frac{\tau^{2k}}{a^{2k}} e^{-\tau} \frac{\tau^2}{a^3}\,d\tau\\
&= \frac{1}{a^{2k+3}} \int_{\tau(t)}^{+\infty}\tau^{2k+2}e^{-\tau}\,d\tau\\
&= \frac{1}{a^{2k+3}}\Gamma(2k+3,\tau(t)).
\end{align}
\end{subequations}
Now, applying Lemma~\ref{lemma:incompletegammabound}, we can further bound this:
\begin{equation}
\int_t^{+\infty} x^k e^{-\tau(x)}\,dx \leq \frac{1}{a^{2k+3}} \frac{1}{1-\frac{2k+2}{\tau(t)}} (\tau(t))^{2k+2}e^{-\tau(t)}
\end{equation}
for $\tau(t) > 2k+2$. Finally, since $\tau(t) \leq at$ for $t \geq e$, then we have
\begin{equation}
\int_t^{+\infty} x^k e^{-\tau(x)}\,dx \leq \frac{1}{a\left(1 - \frac{2k+2}{\tau(t)}\right)} t^{2k+2} e^{-\tau(t)}.
\end{equation}
\end{proof}

\noindent We use this to obtain another integral bound, which is as follows.

\begin{lemma}
\label{lemma:integralbound-0}
Let $\delta_1, \epsilon$ be as in Definition~\ref{def:delta}. Then, there exists a constant $c$ such that
\begin{equation}
I = \int_{r=0}^{+\infty} \left(\frac{1}{1-\frac{35\log^2(b(\delta_1 + r + 1))}{b(\delta_1 + r)}}\right)(\delta_1 + r + 1)^{10} \exp\left(-\frac{2}{7}\frac{b(\delta_1 + r)}{\log^2(b(\delta_1 +r + 1))}\right)\,dr \leq c\epsilon.
\end{equation}
\end{lemma}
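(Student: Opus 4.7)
The plan is to bound the integral $I$ by reducing it to an upper incomplete Gamma estimate of the type captured by Lemma~\ref{lemma:integralbound}, and then invoking Lemma~\ref{lemma:c2} (via the choice $\delta_1 \geq C_{\max}\log^2(1/\epsilon) \geq C_2\log^2(1/\epsilon)$ built into Definition~\ref{def:delta}) to turn the resulting polynomial-times-stretched-exponential expression into something of size $\epsilon$. There is no conceptually new ingredient beyond chaining the two earlier technical lemmas.

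First, I would dispense with the prefactor $\bigl(1-\tfrac{35\log^2(b(\delta_1+r+1))}{b(\delta_1+r)}\bigr)^{-1}$ by showing it is uniformly bounded by $2$. Since $\delta_1+r \geq 1$, we have $b(\delta_1+r) \geq b(\delta_1+r+1)/2$, so the argument inside is at most $\tfrac{70\log^2(b(\delta_1+r+1))}{b(\delta_1+r+1)}$. The definition of $C_4$ in Definition~\ref{def:delta} is tailored precisely so that, since $\delta_1+r \geq \delta_1 \geq C_4$, one has $\tfrac{77\log^2(b(\delta_1+r+1))}{b(\delta_1+r+1)} \leq 1/2$; as $77 > 70$, this bounds our quantity by $1/2$ and hence the prefactor by $2$, allowing me to pull it outside the integral.

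Next, I substitute $u = b(\delta_1+r+1)$, transforming the remaining integral into a constant multiple of
\[
\int_{b(\delta_1+1)}^{\infty} u^{10}\exp\Bigl(-\tfrac{2(u-b)}{7\log^2(u)}\Bigr)\,du.
\]
The spurious factor $\exp\bigl(\tfrac{2b}{7\log^2(u)}\bigr)$ is uniformly bounded on the domain of integration (the lower bound $b\delta_1 \geq 5900$ from Definition~\ref{def:delta} forces $\log(u)$ comfortably large), so it can be absorbed into the constant $c$. What remains is in exactly the form handled by Lemma~\ref{lemma:integralbound} with $k=10$, $a=2/7$, and $t=b(\delta_1+1)$. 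Its two hypotheses $t\geq 5504$ and $at/\log^2(t) > 2k+2 = 22$ are both consequences of the lower bounds in Definition~\ref{def:delta}: $b\delta_1 \geq 5900$ gives the first, while $\delta_1\geq C_4$ gives $b(\delta_1+1)/\log^2(b(\delta_1+1)) \geq 154$, which implies the second (and also makes the Lemma~\ref{lemma:integralbound} factor $(1-22/\tau(t))^{-1}$ at most $2$).

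Applying Lemma~\ref{lemma:integralbound} therefore yields a bound proportional to $(b(\delta_1+1))^{22}\exp\bigl(-\tfrac{2b(\delta_1+1)}{7\log^2(b(\delta_1+1))}\bigr)$. Taking logarithms, this is at most $\epsilon$ precisely when $\tfrac{2b(\delta_1+1)}{7\log^2(b(\delta_1+1))} - 22\log(b(\delta_1+1)) \geq \log(1/\epsilon)$, which is exactly the inequality guaranteed by Lemma~\ref{lemma:c2} under the hypothesis $\delta_1 \geq C_2\log^2(1/\epsilon)$ absorbed into $\delta_1 \geq C_{\max}\log^2(1/\epsilon)$. Collecting all accumulated constants produces the claimed bound $I \leq c\epsilon$. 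I anticipate that the main obstacle is purely bookkeeping: verifying that every numerical constant ($35$, $70$, $77$, $5504$, $5900$, $22$, $154$) lines up so that the hypotheses of both Lemmas~\ref{lemma:integralbound} and~\ref{lemma:c2} are met simultaneously without any circular dependence on $\delta_1$, and ensuring that the constant $b = \gamma/(2v_{\mathrm{lr}})$ is treated as $\Theta(1)$ throughout so that all prefactors genuinely collapse into a single $c$.
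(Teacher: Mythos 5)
Your proposal is correct and follows essentially the same route as the paper's proof: substitute $u=b(\delta_1+r+1)$, pull out the uniformly bounded prefactor and the $\exp(2b/(7\log^2 u))$ factor (the paper does this via monotonicity and evaluation at the lower limit, you via uniform bounds — an immaterial difference), apply Lemma~\ref{lemma:integralbound} with $k=10$, $a=2/7$, and close with Lemma~\ref{lemma:c2} through the choice of $\delta_1$ in Definition~\ref{def:delta}. The constant-checking you flag ($b(\delta_1+1)\geq 5900$, $\tau(t)\geq 44>22$, the $C_4$ condition controlling both denominators) is exactly the bookkeeping the paper carries out.
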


\begin{proof}
Using the substitution $x = b(\delta_1 + r + 1)$, this integral transforms into
\begin{equation}
I = \left(\frac{1}{b}\right)^{n + 11} \int_{x = b(\delta_1 + 1)}^{+\infty} \left(\frac{1}{1-\frac{35\log^2x}{x-b}}\right)x^{10}\exp\left(-\frac{2}{7} \frac{x-b}{\log^2(x)}\right)\,dx.
\end{equation}
Here, we can show that for our choice of $\delta_1$, $e^{2b/7\log^2(x)}$ and $1/(1-(35\log^2x)/(x-b))$ are both monotonically decreasing in $x$. The derivative of the exponential term is
\begin{equation}
\frac{d}{dx} e^{\frac{2}{7} \frac{b}{\log^2x}} = -\frac{4b e^{\frac{2}{7}\frac{b}{\log^2x}}}{7x\log^3x}.
\end{equation}
To show that the exponential term is monotonically decreasing, we need to show that this derivative is less than $0$ for $x \geq b(\delta_1 + 1)$. We see that $e^{2b/(7\log^2x)}$ is always nonnegative and $\log^3 x$ is positive as long as $x > 1$ (in which case $x > 0$ as well). Thus, we only require $x > 1$ for this derivative to be less than $0$.

Similarly, for the other term, we have the derivative
\begin{equation}
\frac{d}{dx}\frac{1}{1-\frac{35\log^2 x}{x -b}} = -\frac{35\log x (2b - 2x  +x\log x)}{x(b - x + 35\log^2 x)^2}.
\end{equation}
In order for this to be less than $0$, we see that $(b - x + 35\log^2x)^2$ is always nonnegative and $\log x$ is positive as long as $x > 1$ (in which case $x > 0$ as well). Then, the only term left is $2b - 2x + x \log x$, which is positive as long as $\log x > 2(x-b)/x = 2- 2b/x$. This is satisfied is $\log x > 2$ instead, which follows when $x > e^2$.

Putting everything together, we see that both of these terms are monotonically decreasing in $x$ for $x > e^2$. In our integral, we have $x \geq b(\delta_1 + 1)$. However, by our choice of $\delta_1$ in Definition~\ref{def:delta}, we have that $\delta_1 \geq 5900/b$ so that $b(\delta_1 + 1) > b\delta_1 \geq 5900 > e^2$. Hence, the condition for these terms to be monotonically decreasing is satisfied for the bounds of the integral.

Thus, because these terms are monotonically decreasing, we can upper bound the integral by
\begin{equation}
I \leq \left(\frac{1}{b}\right)^{11} \exp\left(\frac{2}{7}\frac{b}{\log^2(b(\delta_1 + 1))}\right)\left(\frac{1}{1-\frac{35\log^2(b(\delta_1 + 1))}{b\delta_1}}\right) \int_{x = b(\delta_1 + 1)}^{+\infty} x^{10} e^{-\frac{2}{7}\frac{x}{\log^2(x)}}\,dx.
\end{equation}
Now, we can use Lemma~\ref{lemma:integralbound} to bound this final integral using $k = 10, a = 2/7$:
\begin{equation}
\int_{x=b(\delta_1 + 1)}^{+\infty} x^{10}e^{-\frac{2}{7}\frac{x}{\log^2(x)}}\,dx \leq \frac{7}{2} \frac{1}{1-\frac{7(22)\log^2(b(\delta_1 + 1))}{2b(\delta_1 + 1)}} (b(\delta_1 + 1))^{22} \exp\left(-\frac{2}{7}\frac{b(\delta_1 + 1)}{\log^2(b(\delta_1 + 1))}\right).
\end{equation}
Here, we note that the conditions are satisfied because
\begin{equation}
t = \frac{\gamma(\delta_1 + 1)}{2v_{\mathrm{lr}}} \geq \frac{\gamma\delta_1}{2v_{\mathrm{lr}}} \geq \max(5900, \alpha, 7(d + 11), \theta) \geq 5900,
\end{equation}
and it is clear that for $t \geq 5900$ that $at/\log^2t > 22$.
Let $c_1 = 7(1/b)^{11}/2$, and we can combine these bounds:
\begin{equation}
I \leq c_1 \exp\left(-\frac{2}{7}\frac{b\delta_1}{\log^2(b(\delta_1 + 1))}\right)\left(\frac{1}{1 - \frac{35\log^2(b(\delta_1 + 1))}{b\delta_1}}\right)\left(\frac{1}{1-\frac{7(22)\log^2(b(\delta_1 + 1))}{2b(\delta_1 + 1)}}\right)(b(\delta_1 + 1))^{22}.
\end{equation}
We can further bound this by
\begin{equation}
I \leq 4c_1 \exp\left(\frac{-2\gamma\delta_1 + 14(22)v_{\mathrm{lr}}\log^3(b(\delta_1 + 1))}{14v_{\mathrm{lr}}\log^2(b(\delta_1 + 1))}\right).
\end{equation}
Here, this is because of Eq.~\eqref{eq:c4}. This follows because
\begin{equation}
\frac{1}{1-\frac{35\log^2(b(\delta_1 + 1))}{b\delta_1}} \leq \frac{1}{1-\frac{77\log^2(b(\delta_1 + 1))}{b(\delta_1 + 1)}},
\end{equation}
Now, by our choice of $\delta_1$ and Lemma.~\ref{lemma:c2}, then we have
\begin{equation}
I \leq 4c_1 e^{-\log(1/\epsilon)} = 4c_1 \epsilon.
\end{equation}
Taking $c = 4c_1$, we arrive at our claim.
\end{proof}

\begin{lemma}
\label{lemma:integralbound-d}
Let $\delta_1, \epsilon$ be as in Definition~\ref{def:delta}. Then, there exists a constant $c'$ such that
\begin{equation}
I = \int_{r=0}^{+\infty} \left(\frac{1}{1-\frac{35\log^2(b(\delta_1 + r + 1))}{b(\delta_1 + r)}}\right)(\delta_1 + r + 1)^{d + 10} \exp\left(-\frac{2}{7}\frac{b(\delta_1 + r)}{\log^2(b(\delta_1 +r + 1))}\right)\,dr \leq c'\epsilon.
\end{equation}
\end{lemma}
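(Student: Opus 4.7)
The plan is to mirror the proof of Lemma~\ref{lemma:integralbound-0} almost step for step, with the only modification being that the polynomial degree inside the integrand is now $d+10$ instead of $10$. This will push the exponent $k$ used in Lemma~\ref{lemma:integralbound} from $10$ to $d+10$, which in turn requires invoking Lemma~\ref{lemma:c3} (whose structure is designed to absorb the extra $d$ in the exponent) in place of Lemma~\ref{lemma:c2}.

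Concretely, first I would perform the substitution $x = b(\delta_1 + r + 1)$, which rewrites the integral as
\begin{equation}
I = \left(\frac{1}{b}\right)^{d+11}\int_{x=b(\delta_1+1)}^{+\infty}\left(\frac{1}{1-\frac{35\log^2 x}{x-b}}\right) x^{d+10}\exp\left(-\frac{2}{7}\frac{x-b}{\log^2 x}\right)dx.
\end{equation}
Splitting the exponential as $\exp(-\tfrac{2}{7}\tfrac{x}{\log^2 x})\cdot \exp(\tfrac{2b}{7\log^2 x})$, I would then argue, exactly as in Lemma~\ref{lemma:integralbound-0}, that both the factor $\exp(\tfrac{2b}{7\log^2 x})$ and the factor $(1-35\log^2 x/(x-b))^{-1}$ are monotonically decreasing on the range of integration. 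This uses the same derivative computations as before, which are valid because the choice of $\delta_1$ in Definition~\ref{def:delta} guarantees $b(\delta_1+1) > 5900 > e^2$. Hence I can pull these two factors out at $x = b(\delta_1+1)$.

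What remains is $\int_{b(\delta_1+1)}^{+\infty} x^{d+10}\exp(-\tfrac{2}{7}\tfrac{x}{\log^2 x})\,dx$, to which I would apply Lemma~\ref{lemma:integralbound} with $k = d+10$ and $a = 2/7$. The hypothesis $at/\log^2 t > 2k+2 = 2d+22$ is met because $\delta_1 \geq 7(d+11)/b$ is baked into Definition~\ref{def:delta} precisely for this purpose, and $t \geq 5900$ holds for the same reason as before. The resulting bound contains the polynomial factor $(b(\delta_1+1))^{2d+22}$ times $\exp(-\tfrac{2}{7}\tfrac{b(\delta_1+1)}{\log^2(b(\delta_1+1))})$, plus a denominator $(1-\tfrac{7(2d+22)\log^2(b(\delta_1+1))}{2b(\delta_1+1)})^{-1}$, which is bounded by $2$ by the defining inequality \eqref{eq:c5} for $C_5$.

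Finally, combining this tail bound with the two monotonically decreasing factors we pulled out (both controlled by constants via \eqref{eq:c4} and \eqref{eq:c5}), the expression takes the schematic form $c_1\,(b(\delta_1+1))^{2d+22}\exp(-\tfrac{2}{7}\tfrac{b\delta_1}{\log^2(b(\delta_1+1))})$ for some constant $c_1$. Taking logarithms, this is at most $\epsilon$ exactly when $\tfrac{2b\delta_1}{7\log^2(b(\delta_1+1))} - (2d+22)\log(b(\delta_1+1)) \geq \log(1/\epsilon)$, which is the content of Lemma~\ref{lemma:c3} after absorbing the factor of $7$ into the constant $C_3$ (and which is precisely why $C_3$ was defined with a $d$-dependent polynomial term). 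Setting $c' = 4c_1$ yields $I \leq c'\epsilon$ as claimed. The only real obstacle is bookkeeping: ensuring that every implicit constant has the correct $d$-dependence and that the specific hypotheses in Definition~\ref{def:delta} (namely $\delta_1 \geq C_3 \log^2(1/\epsilon)$, $\delta_1 \geq C_5$, and $\delta_1 \geq 7(d+11)/b$) are each used once; the analytic content is identical to the $d=0$ case.
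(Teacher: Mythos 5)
Your proposal is correct and matches the paper's proof exactly: the paper's own argument for this lemma is literally ``the same as that of Lemma~\ref{lemma:integralbound-0} after replacing $x^{10}$ by $x^{d+10}$, using Eq.~\eqref{eq:c5} and Lemma~\ref{lemma:c3} in place of Eq.~\eqref{eq:c4} and Lemma~\ref{lemma:c2},'' which is precisely the substitution, monotonicity, tail-bound (with $k=d+10$), and constant-absorption chain you spell out. Your bookkeeping of the exponent $2k+2 = 2d+22$ is in fact slightly more careful than the paper's statement of Lemma~\ref{lemma:c3}, which writes $d+22$ where the argument actually requires $2d+22$ (and omits the factor of $7$); both discrepancies are harmless since they only shift the value of the constant $C_3$.
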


\begin{proof}
The proof is the same as that of Lemma~\ref{lemma:integralbound-0} after replacing $x^{10}$ by $x^{d + 10}$.
Moreover, in the final steps, instead of using Eq.~\eqref{eq:c4} and Lemma~\ref{lemma:c2}, we use Eq.~\eqref{eq:c5} and Lemma~\ref{lemma:c3}, respectively.
\end{proof}

\section{Norm inequality for observables}
\label{sec:norminequality}

The efficiency of learning depends strongly on the complexity of the target functions we would like to learn.
One way to characterize the complexity of the target function is to consider an appropriate norm of the function.
Given an observable $O = \sum_P \alpha_P P$ specified by the Pauli coefficients $\alpha_P$, Theorem~\ref{thm:simple} shows that having a smaller $\ell_1$-norm $\sum_P |\alpha_P|$ on the Pauli coefficients implies that the ground state property $\Tr(O \rho(x))$ can be better approximated by a simple function.
This motivates the derivation of bounds on $\sum_P |\alpha_P|$.

A technical contribution of this work is to develop a norm inequality relating the $\ell_1$-norm of the Pauli coefficients  $\sum_P |\alpha_P|$ to the spectral norm $\norm{O}_\infty$ (the largest singular value).
To state this result precisely, we first present some formal definitions.
Throughout the remainder of this section, we consider labelling the $n$ qubits in a $d$-dimensional lattice with a $d$-tuple, $\ell = (\ell_1,\dots, \ell_d)$, where each $\ell_k \in \{1,\dots, \lfloor \sqrt[d]{n} \rfloor\}$.

\begin{definition}[Domain of an observable]
\label{def:domain}
Let $O$ be an arbitrary observable in a finite $d$-dimensional space. Then, define the \emph{domain} $\mathsf{dom}(O) \subseteq \{1,\dots, \lfloor \sqrt[d]{n} \rfloor\}^d$ of $O$ to be the set of qubits that $O$ acts nontrivially on.
\end{definition}

\begin{definition}[geometrically local with range $R$]
\label{def:gl}
Let $O$ be an arbitrary observable in a finite $d$-dimensional space and let $\mathsf{dom}(O) \subseteq \{1,\dots, \lfloor \sqrt[d]{n}\rfloor\}^d$ be its domain. Moreover, let $\mathsf{dom}(O)_k = \pi_k(\mathsf{dom}(O)) \subseteq \{1,\dots, \lfloor \sqrt[d]{n}\rfloor\}$, where $\pi_k:\mathbb{Z}^d \to \mathbb{Z}$ is the projection map onto the $k$th coordinate.
Let $R_{O,k} \triangleq \max (\mathsf{dom}(O)_k) - \min (\mathsf{dom}(O)_k)$.
The observable $O$ is \emph{geometrically local with range $R$} if $R_{O,k} \leq R_k$, for all $k = 1,\dots,d$ and
\begin{equation}
R \triangleq \prod_{k=1}^d R_k.
\end{equation}
In cases when the range $R = \mathcal{O}(1)$ is unimportant, we simply say that $O$ is \emph{geometrically local}.
\end{definition}

\noindent We can now properly state the norm inequality relating the Pauli-1 norm to the spectral norm.

\begin{theorem}[Detailed restatement of Theorem~\ref{thm:main-normineq}]
\label{thm:normineq}
Given an observable $O = \sum_{P} \alpha_P P$ that can be written as a sum of geometrically local observables with range $R$ in a finite $d$-dimensional space, we have
\begin{equation}
\sum_{P} |\alpha_{P}| \leq 2^d R \cdot 4^R \norm{O}_\infty.
\end{equation}
\end{theorem}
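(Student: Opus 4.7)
The plan is to partition the geometrically local Paulis of range $\leq R$ (the only Paulis with nonzero $\alpha_P$ in $O$'s expansion, since $O$ is a sum of such operators) into a bounded number of classes, each consisting of pairwise commuting Paulis with disjoint supports, to bound the partial $\ell_1$-norm within each class by $\|O\|_\infty$ via a Pauli-twirl argument, and to sum over classes.

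Concretely, I classify each non-identity geometrically local Pauli $P$ by the pair (its shape $s_P$, the Pauli pattern of $P$ anchored at a canonical bounding-box corner $b_P \in \mathbb{Z}^d$ on a box of size $R_1 \times \cdots \times R_d$; and the residue $b_P \bmod (2R_1,\ldots,2R_d)$). There are at most $4^R \cdot \prod_k (2R_k) = 2^d R \cdot 4^R =: K$ such pairs. Two Paulis sharing a class have canonical corners differing by a nonzero integer vector whose $k$th coordinate is a multiple of $2R_k$, which forces a separation of at least $R_k \geq 1$ in some direction; hence their supports are disjoint, and the Paulis commute pairwise.

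Fix a class $c = \{P_1, \ldots, P_m\}$ and let $W = \langle c \rangle$, which is abelian by commutativity. The Pauli twirl $\pi_{W^\perp}(O) = |W^\perp|^{-1} \sum_{Q \in W^\perp} Q O Q^\dagger$ keeps precisely the Paulis in $(W^\perp)^\perp = W$, so $\pi_{W^\perp}(O) = \sum_{P' \in W} \alpha_{P'} P'$ by symplectic duality. The essential use of geometric locality appears here: any product of $\geq 2$ elements of $c$ has disconnected support extending across more than $R_k$ sites in some direction, hence is not geometrically local of range $R$, and therefore its $\alpha$-coefficient in $O$ vanishes. Consequently $\pi_{W^\perp}(O) = \alpha_I I + \sum_{P \in c} \alpha_P P$. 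Since the elements of $\{I\} \cup c$ are pairwise commuting and independent, their joint spectrum realizes every sign pattern, so $\|\pi_{W^\perp}(O)\|_\infty = |\alpha_I| + \sum_{P \in c} |\alpha_P|$ (choose the common eigenstate with $\epsilon_P = \mathrm{sign}(\alpha_P)$ and absorb the sign of $\alpha_I$). Contractivity of the Pauli twirl then gives $|\alpha_I| + \sum_{P \in c} |\alpha_P| \leq \|O\|_\infty$.

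Summing this inequality over all $K$ classes yields $K |\alpha_I| + \sum_{P \neq I} |\alpha_P| \leq K \|O\|_\infty$, and hence
\[
\sum_P |\alpha_P| = |\alpha_I| + \sum_{P \neq I} |\alpha_P| \leq K \|O\|_\infty - (K-1)|\alpha_I| \leq 2^d R \cdot 4^R \|O\|_\infty,
\]
as required. The most delicate step is the vanishing-of-products observation: using the spacing $2R_k$ (rather than $R_k$) between class representatives is precisely what forces the supports of any two class Paulis to be separated by at least $R_k$, so that every nontrivial product of class elements is too spread out to be a geometric-range-$R$ Pauli and its $\alpha$-coefficient automatically vanishes; without this clearance the twirl would not land on $\alpha_I I + \sum_{P \in c} \alpha_P P$ and the bound breaks. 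The other ingredients---symplectic duality to realize the class projection as an average of unitary conjugations, contractivity of Pauli twirls, and the equality between operator norm and $\ell_1$ norm of coefficients for commuting independent Paulis---are standard, and tracking the class count carefully is what produces the stated constant $2^d R \cdot 4^R$.
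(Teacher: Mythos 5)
Your proof is correct and arrives at the paper's constant $2^d R \cdot 4^R$ by a genuinely different route. The paper uses the same $2R_k$-periodic tiling of the lattice, but then builds a single explicit test state $\rho$ --- maximally mixed on the buffer qubits and, on each block $B$, equal to $2^{-R}\bigl(I + |S_B|^{-1}\sum_{Q \in S_B} \pm Q\bigr)$ --- and certifies the bound through $\sum_P |\alpha_P| \leq 2^d R \cdot 4^R \Tr(O\rho) \leq 2^d R \cdot 4^R \norm{O}_\infty$: the factor $2^d R$ comes from picking the best of the $2^d R$ shifts by an averaging argument, and the $4^R$ from the normalization $|S_B| \leq 4^R$ inside the state. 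You instead refine the spatial partition by Pauli pattern, producing at most $2^d R \cdot 4^R$ classes of pairwise disjointly supported Paulis, and certify $|\alpha_I| + \sum_{P \in c} |\alpha_P| \leq \norm{O}_\infty$ for each class \emph{exactly}, using the twirl over $W^\perp$ (a unital, operator-norm-contractive projection) together with the spectral identity for commuting, disjointly supported Paulis; your vanishing-of-products observation plays precisely the role of the paper's buffer argument that no range-$R$ Pauli can straddle two blocks. The two certificates are dual in flavor (a positive functional versus a norm-nonincreasing projection); yours is more modular, loses nothing within a class, and treats the identity coefficient more carefully than the paper does. One caveat you share with the paper: since $R_{O,k} = \max(\mathsf{dom}(O)_k) - \min(\mathsf{dom}(O)_k) \leq R_k$ permits supports occupying $R_k + 1$ sites along axis $k$, both your bounding boxes and the paper's blocks should really have side $R_k + 1$, which changes the constant to $2^d \prod_k (R_k+1) \cdot 4^{\prod_k (R_k+1)}$ but affects neither argument structurally.
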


\noindent If we additionally require that $\norm{O}_\infty = \mathcal{O}(1)$, we have the following corollary.

\begin{corollary}
\label{corollary:normineq}
Given an observable $O = \sum_{P} \alpha_P P$ with $\norm{O}_\infty = \mathcal{O}(1)$ that can be written as a sum of geometrically local observables in a finite $d$-dimensional space with $R = \mathcal{O}(1)$, we have $\sum_{P} |\alpha_{P}| = \mathcal{O}(1).$
\end{corollary}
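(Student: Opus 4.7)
The plan is to reduce the Pauli $\ell^1$-norm bound for the sum-of-local observables $O$ to bounds on individual size-$R$ regions, via a geometric coloring argument on the lattice. First I would observe that because $O$ decomposes as a sum of geometrically local terms of range $R$, every Pauli $P$ with $\alpha_P \neq 0$ has support contained in some size-$R$ bounding box. Then I would introduce a modular coloring of the potential box positions into $C \leq \prod_k (R_k + 1) \leq 2^d R$ classes, chosen so that within each color the size-$R$ boxes are pairwise disjoint and tile the lattice. Canonically assigning each nonzero Pauli $P$ to the color of the unique tile-of-that-color containing its support yields a decomposition $O = \sum_\kappa O^{(\kappa)}$, where $O^{(\kappa)} = \sum_{P: \kappa(P) = \kappa} \alpha_P P$ and, within each color $\kappa$, we have $O^{(\kappa)} = \sum_c M^{(\kappa)}_c$ as a sum of Hermitian operators on pairwise disjoint supports.

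Next I would bound the color-$\kappa$ contribution to $\sum_P |\alpha_P|$ using two elementary inequalities. Each piece $M^{(\kappa)}_c$ lives on at most $R$ qubits, so its Pauli $\ell^1$-norm is bounded by $4^R \norm{M^{(\kappa)}_c}_\infty$ (at most $4^R$ nonzero Paulis, each with $|\alpha_P| \leq \norm{M^{(\kappa)}_c}_\infty$). For Hermitian operators on pairwise disjoint subsystems, the eigenvalues of their sum are exactly sums of the individual eigenvalues in a joint basis, so both $\sum_c \lambda^+_c$ and $-\sum_c \lambda^-_c$ are eigenvalues of $O^{(\kappa)}$; it follows that $\sum_c \norm{M^{(\kappa)}_c}_\infty \leq 2 \norm{O^{(\kappa)}}_\infty$. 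Combining and summing over colors gives $\sum_P |\alpha_P| \leq 2 \cdot 4^R \sum_\kappa \norm{O^{(\kappa)}}_\infty$.

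The main obstacle is then the final step: bounding $\sum_\kappa \norm{O^{(\kappa)}}_\infty$ by a constant multiple of $\norm{O}_\infty$. The cleanest target is $\norm{O^{(\kappa)}}_\infty \leq \norm{O}_\infty$ for each color, which is equivalent to the statement that the Pauli-basis projection onto color-$\kappa$ Paulis does not increase spectral norm. The natural strategy is to realize each such projection as an average (or bounded-$\ell^1$ signed combination) of unitary conjugations $U O U^\dagger$ --- for example, by twirling over Paulis supported on the ``seams'' between the tiles of a complementary coloring, so as to kill exactly those Paulis whose support straddles a color-$\kappa$ tile. Granting this contraction with only constant-factor loss, combining with $C \leq 2^d R$ yields $\sum_P |\alpha_P| \leq O(1) \cdot 2^d R \cdot 4^R \norm{O}_\infty$, matching the claimed bound up to absorbable constants. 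Identifying the right twirl and verifying the contraction is the genuinely novel technical ingredient of the proof.
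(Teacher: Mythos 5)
Your proposal leaves the decisive step unproven and, as written, contains a false intermediate inequality, so it does not yet constitute a proof. The intermediate claim is the assertion that for Hermitian operators $M_c$ on pairwise disjoint supports one has $\sum_c \norm{M_c}_\infty \leq 2 \norm{\sum_c M_c}_\infty$. This fails when the pieces carry identity components of opposite sign: take two blocks with $M_1 = \mathrm{diag}(1, 0.9)$ and $M_2 = \mathrm{diag}(-0.9, -1)$; then $\sum_c \norm{M_c}_\infty = 2$ while $\norm{M_1 \otimes I + I \otimes M_2}_\infty = 0.1$. The inequality is salvageable if you first strip the identity (trace) component from each block, since a traceless Hermitian $M_c$ satisfies $\norm{M_c}_\infty \leq \lambda_c^+ - \lambda_c^-$ and the extremal eigenvalues add across disjoint supports; but that repair must be made explicit, and the global identity coefficient handled separately via $|\alpha_I| \leq \norm{O}_\infty$. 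More seriously, the step you yourself flag as ``the genuinely novel technical ingredient'' --- that the Pauli-basis projection onto a color class contracts (or nearly contracts) the spectral norm --- is exactly the crux, and it is not established. The natural realization, replacing the complement of the union of tiles by the maximally mixed state, is a unital channel and does contract $\norm{\cdot}_\infty$, but it only kills Paulis that touch that complement. With tiles that ``tile the lattice'' (no gaps), the complement is empty, the map is the identity, Paulis straddling adjacent tiles survive, and your disjoint-support decomposition $O^{(\kappa)} = \sum_c M_c^{(\kappa)}$ breaks down. You need buffer regions of width $R_k$ between same-color tiles (which is why the count of classes is $\prod_k 2R_k = 2^d R$ rather than $\prod_k (R_k+1)$), and you need to handle the Paulis assigned to other colors that nonetheless survive the projection.

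It is worth noting that the paper's proof avoids both difficulties by working on the dual side. Rather than decomposing $O$ into color pieces and bounding each piece's spectral norm, it selects the single shift $\vec{j}^*$ whose assigned Paulis carry at least a $1/(2^d R)$ fraction of $\sum_P |\alpha_P|$ (by averaging), and then constructs one explicit product test state $\rho$ --- maximally mixed on the buffer and equal to $\frac{1}{2^R}\bigl(I + \frac{1}{|S|}\sum_Q \pm Q\bigr)$ on each block --- for which $\Tr(P\rho)$ vanishes for every Pauli outside the selected class and equals $\pm 1/|S| \geq 1/4^R$ in sign-aligned fashion for every Pauli inside it. The bound then follows from $\Tr(O\rho) \leq \norm{O}_\infty$ with no projection-contraction lemma and no disjoint-support norm inequality. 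If you want to complete your primal route, you must (i) add the buffers, (ii) prove the depolarize-the-buffer contraction explicitly, and (iii) replace the factor-2 claim by the traceless version; until then the argument has a genuine gap at its final and central step.
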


In order to establish the above norm inequality, we consider an explicit algorithm for constructing a state $\rho$ satisfying $\sum_{Q} \left| \alpha_Q \right| \leq C \Tr(O \rho)$.
In this way, bounding $\Tr(O\rho)$ above by $\norm{O}_\infty$ gives the desired inequality.
We briefly discuss the idea of the algorithm.
First, we consider the set of all geometrically local blocks over the $n$ qubits.
Then, we consider all Pauli observables $Q$ with nonzero $\alpha_Q$ and the qubits that $Q$ acts on.
For each block, if the qubits that $Q$ acts on are all inside that block, we put $Q$ inside of this block.
If there are multiple such blocks, we choose an arbitrary one to put $Q$ in so that each Pauli observable $Q$ is in exactly one block.
After that, we separate all blocks into a few disjoint layers of blocks.
Each layer contains many blocks that are sufficiently far from one another, and each block contains some Pauli observables.
We select the layer that has the largest $\sum_Q |\alpha_Q|$, where this sum is over all Pauli observables inside that layer.
To construct the state $\rho$, we let $\rho$ be the maximally mixed state on qubits outside of the selected layer.
For each block in the selected layer, we choose $\rho$ to be a state that maximizes the sum of the Pauli terms in the block.
With a careful analysis, the constructed state $\rho$ satisfies the desired norm inequality.

\subsection{Facts and lemmas}

Before proving Theorem~\ref{thm:normineq}, we give a few definitions, facts and lemmas.

\begin{definition}[geometrically local Pauli observables]
\label{def:setglpauli}
Throughout the appendix, we consider $S^{(\mathrm{geo})}$ to be the set of all geometrically local Pauli observables with a constant range $R = \mathcal{O}(1)$.
\end{definition}

\noindent The following fact can be easily shown by considering the Pauli decomposition of each geometrically local observable in the sum.

\begin{fact}
Any observable $O$ that can be written as a sum of geometrically local observables can also be written as a sum of geometrically local Pauli observables. Thus, we can write $O = \sum_P \alpha_P P$, where $\alpha_P = 0$ for all $P \notin S^{(\mathrm{geo})}$.
\end{fact}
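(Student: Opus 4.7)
The plan is to prove this Fact by taking each geometrically local summand of $O$ and expanding it in the local Pauli basis of its support. Starting from $O = \sum_{j} O_j$, where each $O_j$ is geometrically local with range bounded by some constant $R = \mathcal{O}(1)$, I write $O_j$ in the Pauli basis restricted to its domain: since $O_j$ acts nontrivially only on the qubits in $\mathsf{dom}(O_j)$, there exist coefficients $\beta_{j,Q}$ such that
\begin{equation*}
O_j = \sum_{Q \in \{I,X,Y,Z\}^{\otimes |\mathsf{dom}(O_j)|}} \beta_{j,Q}\, Q \otimes I_{\overline{\mathsf{dom}(O_j)}},
\end{equation*}
obtained by the standard Hilbert--Schmidt expansion on the $|\mathsf{dom}(O_j)|$-qubit operator supported on $\mathsf{dom}(O_j)$, with identity tensored on all remaining qubits.

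Next, I identify each term $Q \otimes I_{\overline{\mathsf{dom}(O_j)}}$ with the corresponding Pauli string $P \in \{I,X,Y,Z\}^{\otimes n}$ in the full $n$-qubit system. Because the nontrivial factors of $P$ are confined to $\mathsf{dom}(O_j)$, we have $\mathsf{dom}(P) \subseteq \mathsf{dom}(O_j)$, and hence, using Definition~\ref{def:gl}, the coordinate-wise ranges satisfy $R_{P,k} \leq R_{O_j,k}$ for every $k = 1,\ldots,d$. Consequently $P$ is geometrically local with range at most $R$, that is, $P \in S^{(\mathrm{geo})}$.

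Finally, I combine the expansions across all $j$ by substituting into $O = \sum_j O_j$ and collecting like Pauli terms, giving
\begin{equation*}
O = \sum_{P \in \{I,X,Y,Z\}^{\otimes n}} \alpha_P P, \qquad \alpha_P \triangleq \sum_{j : P \text{ appears in the expansion of } O_j} \beta_{j, P\restriction \mathsf{dom}(O_j)}.
\end{equation*}
By the previous paragraph, the only Paulis that can contribute a nonzero coefficient are those lying in $S^{(\mathrm{geo})}$, so $\alpha_P = 0$ for all $P \notin S^{(\mathrm{geo})}$, which is exactly the claim.

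There is essentially no analytical obstacle here; the argument is a bookkeeping exercise that combines (i) the completeness of the Pauli basis on any finite set of qubits and (ii) the monotonicity of the domain (and hence the range) under local Pauli expansion. The only point that requires a small amount of care is verifying that the range bound is inherited correctly when several $O_j$'s contribute to the same Pauli string $P$: since $\alpha_P \neq 0$ requires at least one such $O_j$ to have $\mathsf{dom}(P) \subseteq \mathsf{dom}(O_j)$, the range of $P$ is bounded by the range of that particular $O_j$, which is at most $R$.
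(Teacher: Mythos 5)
Your proof is correct and follows exactly the route the paper indicates: the paper dismisses this Fact with a one-line remark that it follows ``by considering the Pauli decomposition of each geometrically local observable in the sum,'' which is precisely your argument of expanding each $O_j$ in the Pauli basis on $\mathsf{dom}(O_j)$ and noting that every resulting Pauli string inherits the range bound. Your fleshed-out version, including the careful point about the range being inherited from at least one contributing $O_j$, is a valid completion of that sketch.
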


\noindent A construction of the mixed state that we are going to use throughout the proof is the following.
The key idea is that $|\Tr(P_i \rho)| = 1/k$ for $i = 1, \ldots, k$, and $|\Tr(P \rho)| = 0$ for any $P \in \{I, X, Y, Z\}^{\otimes n} \setminus \{I, P_1, \ldots, P_k\}$.

\begin{lemma}
\label{lemma:normalizationpsd}
Let $P_1,\dots, P_k \in \{I, X, Y, Z\}^{\otimes n}$. Suppose that $P_i \neq I^{\otimes n}$ for all $i=1,\dots,k$. Then
\begin{equation}
\rho = \frac{I + \frac{\pm P_1 \pm \cdots \pm P_k}{k}}{2^n}
\end{equation}
is a mixed state, i.e., it is positive semidefinite and has unit trace.
\end{lemma}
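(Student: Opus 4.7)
The plan is to verify the three defining properties of a mixed state in turn: Hermiticity, unit trace, and positive semidefiniteness. Hermiticity is immediate because every Pauli string $P_i$ is Hermitian and all scalar coefficients ($1$, $\pm 1/k$, and $1/2^n$) are real, so $\rho^\dagger = \rho$.

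For the trace condition, I would invoke the standard facts that $\Tr(I^{\otimes n}) = 2^n$ and $\Tr(P) = 0$ for every $P \in \{I,X,Y,Z\}^{\otimes n}$ with $P \neq I^{\otimes n}$. Since the hypothesis guarantees $P_i \neq I^{\otimes n}$ for each $i$, linearity of the trace gives
\begin{equation}
\Tr(\rho) = \frac{1}{2^n}\left(\Tr(I^{\otimes n}) + \frac{1}{k}\sum_{i=1}^{k} (\pm 1)\,\Tr(P_i)\right) = \frac{2^n + 0}{2^n} = 1.
\end{equation}

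For positive semidefiniteness, the key step is to bound the spectral norm of the non-identity part. By the triangle inequality together with the fact that $\norm{P_i}_\infty = 1$ for any Pauli string, I get
\begin{equation}
\left\lVert \frac{\pm P_1 \pm \cdots \pm P_k}{k}\right\rVert_\infty \leq \frac{1}{k}\sum_{i=1}^{k} \norm{P_i}_\infty = 1.
\end{equation}
Since this operator is Hermitian, its eigenvalues lie in $[-1,1]$, so the eigenvalues of $I + \frac{\pm P_1 \pm \cdots \pm P_k}{k}$ lie in $[0,2]$, making it positive semidefinite. Dividing by the positive scalar $2^n$ preserves positivity, yielding $\rho \succeq 0$.

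There is no real obstacle here; the statement is a routine sanity check, and the only substantive input is the triangle-inequality norm bound. I would keep the write-up compact, perhaps emphasizing only that the assumption $P_i \neq I^{\otimes n}$ is used precisely in the trace computation (the normalization would otherwise fail), while the sign pattern $\pm$ plays no role in either the trace or the norm bound.
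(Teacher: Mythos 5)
Your proposal is correct and follows essentially the same route as the paper's proof: the trace computation via tracelessness of non-identity Pauli strings, and positive semidefiniteness via the triangle-inequality bound $\lVert (\pm P_1 \pm \cdots \pm P_k)/k \rVert_\infty \leq 1$. The only addition is the explicit (and harmless) Hermiticity check, which the paper leaves implicit.
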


\begin{proof}
First, we can easily show that $\rho$ has unit trace. Let $P_i = \bigotimes_{j=1}^n P_{i,j}$ for all $i=1,\dots, k$, where $P_{i,j} \in \{I, X, Y, Z\}$. Then, we have
\begin{subequations}
\begin{align}
\Tr(\rho) &= \frac{1}{2^n}\left(\Tr(I) \pm \frac{1}{k}(\Tr(P_1)\pm \cdots \pm \Tr(P_k))\right)\\
&= \frac{1}{2^n}\left(2^n \pm \frac{1}{k}\left(\prod_{j=1}^n \Tr(P_{1,j}) \pm \cdots \pm \prod_{j=1}^n \Tr(P_{k,j})\right)\right)\\
&= 1,
\end{align}
\end{subequations}
where the last equality follows because the trace of a nonidentity Pauli matrix is $0$, and we assume that $P_i \neq I^{\otimes n}$ so that the $P_{i,j}$ are not all identity.
To show that $\rho$ is positive semidefinite, it suffices to prove that the eigenvalues of $(\pm P_1 \pm \cdots \pm P_k)/k$ are between $-1$ and $1$.
Then, when this is summed with the identity matrix which has eigenvalue $+1$, the eigenvalues are nonnegative. We see this using the spectral norm
\begin{equation}
\norm{\frac{\pm P_1 \pm \cdots \pm P_k}{k}}_\infty \leq \frac{1}{k}\left(\norm{P_1}_\infty + \cdots + \norm{P_k}_\infty\right) = 1,
\end{equation}
which concludes our proof.
\end{proof}

\noindent Now, we want to define an operation that is useful throughout the proof.

\begin{definition}[Restriction of a Pauli operator]
\label{def:restriction}
Let $P \in \{I, X, Y, Z\}^{\otimes n}$. Write $P = \bigotimes_{\ell \in \{1,\dots, \lfloor\sqrt[d]{n}\rfloor\}^d} P_\ell$ for $P_\ell \in \{I, X, Y, Z\}$. Let $S \subseteq \{1,\dots, \lfloor \sqrt[d]{n}\rfloor\}^d$ be a subset of qubits. The \emph{restriction of $P$ to the subset of qubits $S$} is the substring of Paulis that act on $S$:
\begin{equation}
\mathsf{restrict}(P; S) \triangleq P_S \triangleq \bigotimes_{\ell\in S} P_\ell.
\end{equation}
\end{definition}

\noindent In Definition~\ref{def:restriction}, the subscript notation is used to be consistent with the more standard notation of $P_k$ to denote a Pauli acting on qubit $k$.

\subsection{Proof of Theorem~\ref{thm:normineq}}

The key idea is to upper bound $\sum_P |\alpha_P|$ by a constant times $\Tr(O\rho)$ for some test state $\rho$.
We construct such a $\rho$ with a similar form to that seen in Lemma~\ref{lemma:normalizationpsd}.
Then, because $\rho$ is positive semidefinite and has unit trace by Lemma~\ref{lemma:normalizationpsd}, $\Tr(O\rho) \leq \norm{O}_\infty$.
Putting everything together, we have
\begin{equation}
\sum_P |\alpha_P| \leq 2^d R \cdot 4^R \Tr(O\rho) \leq 2^d R \cdot 4^R\norm{O}_\infty,
\end{equation}
as required.
Thus, it suffices to consider this intermediate step of finding a quantum state $\rho$ such that $2^d R \cdot 4^R \Tr(O\rho) \geq \sum_P |\alpha_P|$.
To this end, we consider dividing our space of all Pauli observables into different sets and focus on one set, which educates our choice of $\rho$.

Consider some Pauli observable $P \in S^{(\mathrm{geo})}$, where $S^{(\mathrm{geo})}$ is the set of all geometrically local Pauli observables.
Since $P$ is geometrically local, by Definition~\ref{def:gl}, there exist constants $R_k$ for $k = 1,\dots, d$ that serve as the maximum range of qubits that a Pauli observable covers in the $k$th dimension.
We want to divide our $d$-dimensional space into \emph{blocks} of $R_k$ qubits in each dimension.
These blocks of qubits are
\begin{equation}
B_{(\vec{i}, \vec{j})} \triangleq \{\text{qubits}\;\;\ell = (\ell_1,\dots, \ell_d) : \ell_k \in [(2i_k - 2)R_k + j_k + 1, (2i_k-1)R_k + j_k],\;\;\forall k \in \{1,\dots, d\}\},
\end{equation}
where $\vec{i} = (i_1,\dots, i_d)$ and $\vec{j} = (j_1,\dots, j_d)$.
We construct these blocks for $i_k = 1,\dots, \lfloor \frac{\lfloor \sqrt[d]{n}\rfloor -j_k + R_k}{2R_k} \rfloor$ and $j_k = 0,\dots, 2R_k-1$ for $k = 1,\dots, d$.
Here, we are dividing the $d$-dimensional space into blocks of $R = \prod_{k=1}^d R_k$ qubits, where each block is index by $\vec{i}$ and is separated from the next by $R_k$ qubits in the $k$th dimension.
We refer to this gap between the blocks as the \emph{buffer}.
Denote the buffer as
\begin{equation}
B_{\vec{j}}' \triangleq \{1,\dots, \lfloor \sqrt[d][n] \rfloor\}^d \setminus \left(\bigcup_{\vec{i}} B_{(\vec{i},\vec{j})}\right),
\end{equation}
where the union is over all possible vectors $\vec{i}$ such that $i_k$ ranges from $1$ to $\lfloor \frac{\lfloor \sqrt[d]{n} \rfloor - j_k + R_k}{2R_k}\rfloor$.
This separation using the buffer region is so that no Pauli term can act on qubits in two blocks at once, which we use later.
Moreover, we are considering possible shifts of these blocks by $j_k$ qubits in each of the dimensions.
Notice that there are only $2R_k$ possible shifts in each dimension until the blocks align with the original positioning of another block.
Consider the related set consisting of the Pauli terms that act only on qubits in a given block
\begin{equation}
S_{(\vec{i}, \vec{j})} \triangleq \{P : \text{for all qubits $k\in \mathsf{dom}(P)$, then $k \in B_{(\vec{i}, \vec{j})}$}\} \setminus \left(\bigcup_{(\vec{i}',\vec{j}') \leq (\vec{i},\vec{j})} S_{(\vec{i}',\vec{j}')}\right),
\end{equation}
where we define $(\vec{i}',\vec{j}') \leq (\vec{i},\vec{j})$ using the standard lexicographical order, i.e.,
\begin{equation}
(\vec{i}',\vec{j}') = ((i_1',\dots, i_d'), (j_1',\dots, j_d')) \leq (\vec{i},\vec{j}) = ( (i_1,\dots, i_d), (j_1,\dots, j_d))
\end{equation}
if and only if $\vec{i}' < \vec{i}$, or $\vec{i}' = \vec{i}$ and $\vec{j}' \leq \vec{j}$.
Here, $\vec{i}' \leq \vec{i}$ if and only if $i_1' < i_1$, or $i_1' = i_1$ and $i_2' < i_2$, or, etc.
Thus, we create these sets $S_{(i,j)}$ sequentially according to this ordering.
We remove previous sets so that each $S_{(\vec{i},\vec{j})}$ is disjoint from other sets $S_{(\vec{i}', \vec{j}')}$.

Now, taking a union over all $\vec{i}$, we can consider the Pauli terms acting on these blocks together.
The resulting sets then only differ based on the shift of $j_k$ qubits in each dimension.
\begin{equation}
\label{eq:uj}
U_{\vec{j}} \triangleq \bigcup_{\vec{i}} S_{(\vec{i},\vec{j})}.
\end{equation}

\begin{figure}[t]
\centering
\includegraphics[width=0.95\linewidth]{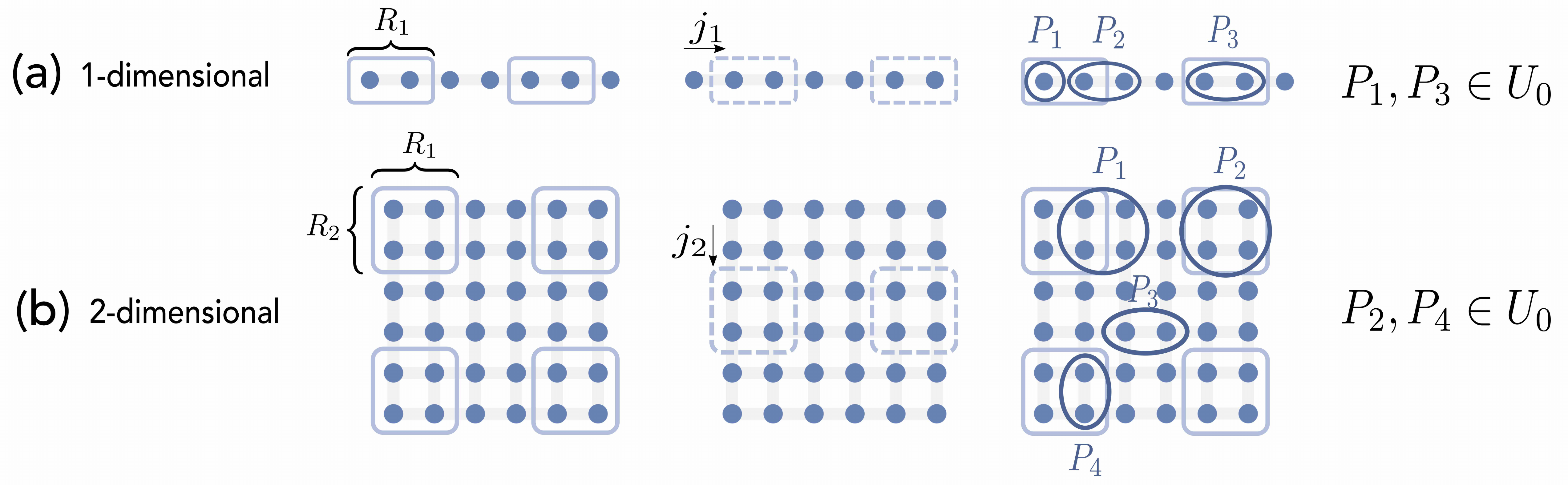}
\caption{\textbf{Intuition behind proof construction of Theorem~\ref{thm:normineq} for the cases of $d = 1$ \textbf{(a)} and $d = 2$ \textbf{(b)}}. In both cases, the idea is to divide our qubits (blue circles) in $d$-dimensional space into blocks (light blue boxes), and consider the quantity we wish to bound in these blocks. Note that all qubits not highlighted are in the buffer region. The first column in the figure depicts the unshifted blocks, i.e., $\vec{j} = 0$. The second column displays an example of shifted blocks (dashed boxes). Finally, the last column considers Pauli terms (dark blue circles) acting on the qubits circled and indicates if they are contained in $U_{0}$, defined in Eq.~\eqref{eq:uj}.}
\label{fig:normineq}
\end{figure}

Figure~\ref{fig:normineq} illustrates all these definitions.
We now consider $\sum_{P \in U_{\vec{j}}} |\alpha_P|$, where these $\alpha_P$ are the 
coefficients in $O = \sum_P \alpha_P P$.
We want to pick the set $U_{\vec{j}}$ such that this sum is largest, i.e.
\begin{equation}
\vec{j}^* \triangleq \argmax_{\substack{0 \leq j_1 \leq 2R_1 - 1\\ \vdots \\ 0 \leq j_d \leq 2R_d - 1 }} \sum_{P \in U_{\vec{j}}} |\alpha_P|.
\end{equation}
We now focus on the set $U_{\vec{j}^*}$. To justify this choice, we can think of each of the sets shifted by $\vec{j}$ as breaking up the sum $\sum_P |\alpha_P|$ into different disjoint sums. This is a result of our earlier choice for $U_{\vec{j}}$ to contain a disjoint set of Pauli terms. Then, the maximum over all shifts $\vec{j}$ of $\sum_{P \in U_{\vec{j}}} |\alpha_P|$ is greater than the average over all shifts. In other words, we have
\begin{equation}
\sum_{P \in U_{\vec{j}^*}} |\alpha_P| \geq \frac{1}{2^d R} \sum_P |\alpha_P|,
\end{equation}
where recall that $R = \prod_{k=1}^d R_k$.
Relating back to our original goal, it remains to find a test state $\rho$ such that
\begin{equation}
\Tr(O\rho) \geq \frac{1}{4^R} \sum_{P \in U_{\vec{j}^*}} |\alpha_P|.
\end{equation}
Once we have this, we can conclude that
\begin{equation}
2^d R \cdot 4^R \norm{O}_\infty \geq 2^d R \cdot 4^R \Tr(O\rho) \geq 2^dR \sum_{P \in U_{\vec{j}^*}} |\alpha_P| \geq \sum_P |\alpha_P|,
\end{equation}
proving our claim, where the first inequality follows because $\rho$ is positive semidefinite and has unit trace from Lemma~\ref{lemma:normalizationpsd}.
In what follows, we aim to define this $\rho$ based on the set $U_{\vec{j}^*}$ and show that this inequality holds.

The idea is to have $\rho$ as the maximally mixed state on qubits in the buffer region $B'_{\vec{j}^*}$ and be a state of the form in Lemma~\ref{lemma:normalizationpsd} for qubits in $\bigcup_{\vec{i}} B_{(\vec{i},\vec{j}^*)}$.
In this way, when we take $\Tr(O\rho)$, any Pauli terms not in $U_{\vec{j}^*}$ contribute $0$ while Pauli terms $P$ in $U_{\vec{j}^*}$ contribute a constant times $|\alpha_P|$.
Explicitly, we define $\rho$ as
\begin{equation}
\rho \triangleq \bigotimes_{\vec{i}} \left(\frac{1}{2^R}\left(I_{B_{(\vec{i},\vec{j}^*)}} + \frac{1}{|S_{(\vec{i},\vec{j}^*)}|} \sum_{Q \in S_{(\vec{i},\vec{j}^*)}} (-1)^{\delta_{\{\alpha_Q < 0\}}} Q_{B_{(\vec{i},\vec{j}^*)}}\right)\right) \bigotimes_{\ell \in B'_{\vec{j}^*}} \frac{I_\ell}{2} \triangleq \bigotimes_{\vec{i}} \rho_{\vec{i}} \bigotimes_{\ell \in B_{\vec{j}^*}'} \frac{I_\ell}{2},
\end{equation}
where $\delta_{\{\alpha_Q < 0\}}$ is $1$ when $\alpha_Q < 0$ and $0$ otherwise.
Also, the tensor product is again over all possible vectors $\vec{i}$ such that each entry $i_k$ ranges from $1$ to $\lfloor \frac{\lfloor \sqrt[d]{n} \rfloor - j_k + R_k}{2R_k} \rfloor$.
Here, we are using the notation from Definition~\ref{def:restriction} to denote quantum operations restricted to their action on a given set of qubits.
By Lemma~\ref{lemma:normalizationpsd}, $\rho_{\vec{i}}$ is a proper quantum state that is positive semidefinite and has unit trace; hence $\rho$ is a quantum state.
Now, we want to calculate $\Tr(O\rho)$. Recall that $O = \sum_P \alpha_P P$.
Taking the trace, we have
\begin{equation}
\Tr(O\rho) = \sum_P \alpha_P \Tr(P\rho).
\end{equation}
There are four cases that can occur regarding $\mathsf{dom}(P)$.
\begin{enumerate}
\item \label{itm:case1} $P$ acts nontrivially on some qubits in the buffer region, i.e., $\mathsf{dom}(P) \subseteq B_{(\vec{i}_P,\vec{j}^*)} \cup B'_{\vec{j}^*}$ for some $\vec{i}_P$.
\item \label{itm:case2} $P$ acts trivially on all qubits in the buffer region, but $P$ acts nontrivially on qubits in two or more blocks, i.e., $\mathsf{dom}(P) \subseteq B_{(\vec{i}_{P,1}, \vec{j}^*)} \cup B_{(\vec{i}_{P,2}, \vec{j}^*)}$ for some $\vec{i}_{P,1}, \vec{i}_{P,2}$.
\item \label{itm:case3} $P$ acts nontrivially only on qubits in a single block but $P$ is not in the set $U_{\vec{j}^*}$, i.e., $\mathsf{dom}(P) \subseteq B_{(\vec{i}_P, \vec{j}^*)}$ for some $\vec{i}_P$, but $P \notin S_{(\vec{i}_P, \vec{j}^*)}$.
\item \label{itm:case4} $P$ acts nontrivially only on qubits in a single block and $P$ is in the set $U_{j^*}$, i.e., $\mathsf{dom}(P) \subseteq B_{(\vec{i}_P, \vec{j}^*)}$ for some $\vec{i}_P$ and $P \in S_{(\vec{i}_P, \vec{j}^*)}$.
\end{enumerate}
We compute $\Tr(P \rho)$ for each of these cases.
We note that $\Tr(\rho_{\vec{i}}) = 1$ in all these calculations.

For Case~\ref{itm:case1}, it suffices to consider the case where $P$ acts on only one qubit in the buffer region, i.e. there exists a qubit $\ell^* \in B'_{\vec{j}^*}$ such that $\ell^* \in \mathsf{dom}(P)$ and $\mathsf{dom}(P) \setminus \{\ell^*\} \subseteq B_{(\vec{i}_P, \vec{j}^*)}$.
Then, the state $P \rho$ is as follows
\begin{subequations}
\begin{align}
P\rho &= \left(\bigotimes_{\vec{i}\neq \vec{i}_P} \rho_{\vec{i}} \bigotimes_{\substack{\ell \in B_{\vec{j}^*}'\\\ell\neq \ell^*}} \frac{I_\ell}{2}\right)\\
&\otimes \left(\frac{1}{2^R}\left(P_{B_{(\vec{i}_P, \vec{j}^*)}} + \frac{1}{|S_{(\vec{i}_P, \vec{j}^*)}|}\sum_{Q \in S_{(\vec{i}_P,\vec{j}^*)}} (-1)^{\delta_{\{\alpha_Q < 0\}}} P_{B_{(\vec{i}_P, \vec{j}^*)}} Q_{B_{(\vec{i}_P, \vec{j}^*)}}\right) \otimes \frac{P_{\ell^*}}{2}\right),
\end{align}
\end{subequations}
where we are again using the notation from Definition~\ref{def:restriction}.
Taking the trace of this state, since the trace of $I/2$ is $1$, we have
\begin{subequations}
\begin{align}
\Tr(P\rho) &= \frac{\Tr\left(\frac{P_{\ell^*}}{2}\right)}{2^R} \Biggr(\Tr(P_{(B_{(\vec{i}_P, \vec{j}^*)})})\\
&+ \frac{1}{|S_{(\vec{i}_P,\vec{j}^*)}|}\sum_{Q \in S_{(\vec{i}_P, \vec{j}^*)}} (-1)^{\delta_{\{\alpha_Q < 0\}}} \Tr(P_{B_{(\vec{i}_P, \vec{j}^*)}} Q_{B_{(\vec{i}_P, \vec{j}^*)}})\Biggr)\\
&= 0,
\end{align}
\end{subequations}
where the last equality follows because the trace of a nonidentity Pauli string is $0$ and
\begin{equation}
    \Tr(P_{B_{(\vec{i}_P, \vec{j}^*)}}Q_{B_{(\vec{i}_P, \vec{j}^*)}}) = 2^R\delta_{\{P_{B_{(\vec{i}_P, \vec{j}^*)}} = Q_{B_{(\vec{i}_P, \vec{j}^*)}}\}}.
\end{equation} Here, $P_{B_{(\vec{i}_P, \vec{j}^*)}} \neq Q_{B_{(\vec{i}_P, \vec{j}^*)}}$ because $Q \in S_{(\vec{i}_P, \vec{j}^*)}$ so that $Q$ acts nontrivially only on qubits in $B_{(\vec{i}_P, \vec{j}^*)}$ while $P$ acts nontrivially on $\ell^* \notin B_{(\vec{i}_P,\vec{j}^*)}$.
Thus, Case~\ref{itm:case1} contributes $0$ to $\Tr(O\rho)$.

Next, we consider Case~\ref{itm:case2}.
In Case~\ref{itm:case2}, we consider what happens if $P$ acts nontrivially on qubits in more than one block, i.e., $\mathsf{dom}(P) \subseteq B_{(\vec{i}_{P,1}, \vec{j}^*)} \cup B_{(\vec{i}_{P,2}, \vec{j}^*)}$.
However, this case is in fact not possible by construction because the buffer region between $B_{(\vec{i}_{P,1}, \vec{j}^*)}$ and $B_{(\vec{i}_{P,2}, \vec{j}^*)}$ is of size $R_k$ in each of the dimensions.
Recall that $R_k$ is the largest distance between two qubits that any $P$ acts on in the $k$th dimension.
Thus, it is not possible for $P$ to span across the buffer region, so this case cannot occur. Hence, it trivially contributes $0$ to $\Tr(O\rho)$.

Now, we consider Case~\ref{itm:case3}.
From the previous two cases, we see that $P$ can only act nontrivially on qubits in a single block $B_{(\vec{i}_P, \vec{j}^*)}$ to contribute to $\Tr(O\rho)$.
However, by construction of the sets $S_{(\vec{i},\vec{j})}$, in order to make them disjoint, it is possible that $P \notin S_{(\vec{i}_P, \vec{j}^*)}$ despite it acting on the correct block of qubits.
We show that this also contributes $0$ to $\Tr(O\rho)$.
Taking the trace, we have
\begin{equation}
\Tr(P \rho) = \frac{1}{2^R} \left(\Tr(P_{(B_{(\vec{i}_P, \vec{j}^*)})}) + \frac{1}{|S_{(\vec{i}_P,\vec{j}^*)}|}\sum_{Q \in S_{(\vec{i}_P, \vec{j}^*)}} (-1)^{\delta_{\{\alpha_Q < 0\}}} \Tr(P_{B_{(\vec{i}_P, \vec{j}^*)}} Q_{B_{(\vec{i}_P, \vec{j}^*)}})\right) = 0,
\end{equation}
where the last equality follows because the trace of a nonidentity Pauli string is $0$ and
\begin{equation}
\Tr(P_{B_{(\vec{i}_P, \vec{j}^*)}}Q_{B_{(\vec{i}_P, \vec{j}^*)}}) = 2^R\delta_{\{P_{B_{(\vec{i}_P, \vec{j}^*)}} = Q_{B_{(\vec{i}_P, \vec{j}^*)}}\}}.
\end{equation}
Here, $P_{B_{(\vec{i}_P, \vec{j}^*)}} \neq Q_{B_{(\vec{i}_P, \vec{j}^*)}}$ because $Q \in S_{(\vec{i}_P, \vec{j}^*)}$ while we know from this case that $P \notin S_{(\vec{i}_P, \vec{j}^*)}$.

Finally, we consider Case~\ref{itm:case4}.
From the previous cases, we see that the only remaining possibility is that $P$ acts nontrivially on qubits in a single block $B_{(\vec{i}_P, \vec{j}^*)}$ and is also contained in a set $S_{(\vec{i}_P, \vec{j}^*)}$.
Computing the trace, we have
\begin{subequations}
\begin{align}
\Tr(P\rho) &= \frac{1}{2^R} \left(\Tr(P_{(B_{(\vec{i}_P, \vec{j}^*)})}) + \frac{1}{|S_{(\vec{i}_P,\vec{j}^*)}|}\sum_{Q \in S_{(\vec{i}_P, \vec{j}^*)}} (-1)^{\delta_{\{\alpha_Q < 0\}}} \Tr(P_{B_{(\vec{i}_P, \vec{j}^*)}} Q_{B_{(\vec{i}_P, \vec{j}^*)}})\right)\\
&= \frac{1}{2^R} \frac{1}{|S_{(\vec{i}_P, \vec{j}^*)}|} \sum_{Q \in S_{(\vec{i}_P, \vec{j}^*)}} (-1)^{\delta_{\{\alpha_Q < 0\}}} \Tr(P_{B_{(\vec{i}_P, \vec{j}^*)}} Q_{B_{(\vec{i}_P, \vec{j}^*)}})\\
&= \frac{1}{|S_{(\vec{i}_P, \vec{j}^*)}|} (-1)^{\delta_{\{\alpha_P < 0\}}}.
\end{align}
\end{subequations}
Here, we are using that the trace of a nonidentity Pauli string is $0$ and $\Tr(P_{B_{(\vec{i}_P, \vec{j}^*)}}Q_{B_{(\vec{i}_P, \vec{j}^*)}}) = 2^R\delta_{\{P_{B_{(\vec{i}_P, \vec{j}^*)}} = Q_{B_{(\vec{i}_P, \vec{j}^*)}}\}}$.
Because $P \in S_{(\vec{i}_P,\vec{j}^*)}$, there exists a $Q \in S_{(\vec{i}_P, \vec{j}^*)}$ such that $P = Q$ so that the sum over $S_{(\vec{i}_P, \vec{j}^*)}$ then collapses to this $P$.
Thus, for this case, $P$ contributes a nonzero amount to $\Tr(O \rho)$.
Summing over all $P \in U_{\vec{j}^*}$, we have a total contribution of
\begin{equation}
\sum_{P \in U_{\vec{j}^*}} \alpha_P \Tr(P\rho) = \sum_{P \in U_{\vec{j}^*}} \frac{1}{|S_{(\vec{i}_P, \vec{j}^*)}|}(-1)^{\delta_{\{\alpha_P < 0\}}} \alpha_P = \sum_{P \in U_{\vec{j}^*}} \frac{1}{|S_{(\vec{i}_P, \vec{j}^*)}|} |\alpha_P|.
\end{equation}
Thus, putting everything together, only Case~\ref{itm:case4} contributed a nonzero amount to $\Tr(O\rho)$, so we have
\begin{equation}
\Tr(O \rho) = \sum_{P \in U_{\vec{j}^*}} \frac{1}{|S_{(\vec{i}_P, \vec{j}^*)}|} |\alpha_P|.
\end{equation}
Here, we know that
\begin{equation}
|S_{(\vec{i}_P, \vec{j}^*)}| \leq 4^R
\end{equation}
because $|B_{(\vec{i}_P, \vec{j}^*)}| = R$ and on $R = \prod_k R_k$ qubits, there are $4^R$ possible Pauli terms (i.e., $I, X, Y, Z$ on each of the qubits). Then, we have
\begin{equation}
\Tr(O\rho) \geq \frac{1}{4^R}\sum_{P \in U_{\vec{j}^*}} |\alpha_P|.
\end{equation}
As explained previously, this suffices to conclude the proof.\qed

\section{ML algorithm and sample complexity}
\label{sec:algorithm}

In this section, we present our machine learning algorithm and prove that it can approximate $\Tr(O\rho(x))$ given training data size $N$ scaling logarithmically in system size $n$.
To do so, we leverage the results in Appendix~\ref{sec:simple} and Appendix~\ref{sec:norminequality} heavily.

Recall that we consider an unknown family of $n$-qubit geometrically local Hamiltonians $\{H(x) : x \in [-1,1]^m\}$ in a finite $d$-dimensional space such that $H(x) = \sum_{j=1}^L h_j(\vec{x}_j)$, where $\vec{x}_j \in \mathbb{R}^q, q = \mathcal{O}(1)$, and $x$ is the concatenation of the $L$ vectors $\vec{x}_1,\dots,\vec{x}_L$.
We also assume that the spectral gap of $H(x)$ is lower bounded by a constant $\gamma$ over $[-1,1]^m$ and $\rho(x)$ is the ground state of $H(x)$.
We also consider an unknown observable $O$ with $\norm{O}_\infty = \mathcal{O}(1)$ that can be written as a sum of geometrically local observables and an arbitrary unknown distribution $\mathcal{D}$ over $[-1, 1]^m$.

In what follows, we first present a full description of the proposed ML algorithm in Appendix~\ref{sec:ml-algorithm}.
In Appendix~\ref{sec:rigor-guarantee}, we then state the rigorous guarantee achieved by this ML algorithm.
Next, we find a bound required to utilize $\ell_1$-regularized regression in Appendix~\ref{sec:1norm}.
Then, we bound the in-sample error on the training data in Appendix~\ref{sec:training} by showing that the function $g(x)$ for approximating $\Tr(O \rho(x))$ as defined in Eq.~\eqref{eq:g} of Appendix~\ref{sec:simple} achieves small training error.
Finally, we use standard results in machine learning theory to bound the prediction error in Appendix~\ref{sec:prediction}.

\subsection{ML algorithm}
\label{sec:ml-algorithm}

This section is dedicated to describing the ML algorithm in detail.
Let $1/e > \epsilon_1, \epsilon_2, \epsilon_3 > 0$.
The ML algorithm is also given training data $\{(x_\ell, y_\ell)\}_{\ell=1}^N$ consisting of parameters $x_\ell$ sampled from an arbitrary unknown distribution $\mathcal{D}$ over $[-1,1]^m$ along with an estimator $y_\ell$ of $\Tr(O\rho(x_\ell))$ such that $|y_\ell - \Tr(O\rho(x_\ell))| \leq \epsilon_2$.

Given this, we first redefine several notions from Appendix~\ref{sec:simple} in terms of $\epsilon_1$.
We utilize these definitions in the remainder of Appendix~\ref{sec:algorithm}.
We begin by redefining $\delta_1$ and $I_P$, originally defined in Def.~\ref{def:delta},~\ref{def:ip}, respectively.
Define $\delta_1$ as
\begin{equation}
\label{eq:delta1}
\delta_1 \triangleq \max\left(C_{\mathrm{max}}\log^2(2C/\epsilon_1), C_4, C_5, \frac{\max(5900, \alpha, 7(d + 11), \theta)}{b}\right),
\end{equation}
where all constants $b, v_{\mathrm{lr}}, C_{\mathrm{max}}, C_4, C_5, \alpha, \theta$ are defined as in Def.~\ref{def:delta} and $C$ is defined in Lemma~\ref{lemma:approxlocal}.
Using this definition of $\delta_1$, let $I_P$ be defined as
\begin{equation}
    I_P \triangleq \{c \in \{1,\dots, m\} : d_{\mathrm{obs}}(h_{j(c)}, P) \leq \delta_1\},
\end{equation}
as in Eq.~\eqref{eq:ip}.
Now, we can redefine the quantities from Def.~\ref{def:discretization} used to approximate the ground state property as a sum of discretized functions.
Let $\delta_2$ be given by
\begin{equation}
    \label{eq:delta2}
    \delta_2 \triangleq \frac{1}{\left\lceil \frac{2\sqrt{C'|I_P|}}{\epsilon_1} \right\rceil},
\end{equation}
where $C'$ is defined in Lemma~\ref{lemma:approxlocal}.
From this, we can define the discretized parameter space $X_P$, which contains parameter vectors that are $0$ outside of $I_P$ and take on discrete values inside of $I_P$:
\begin{equation}
    X_P \triangleq \left.\begin{cases}
    x \in [-1,1]^m : \text{if } c \notin I_P, x_{c} = 0\\
    \hspace{62pt} \text{if } c \in I_P, x_{c} \in \left\{0, \pm \delta_2, \pm 2\delta_2,\dots, \pm 1\right\}
    \end{cases}\right\}.
\end{equation}
Furthermore, for each discretized vector $x' \in X_P$, let $T_{x,P}$ be the set of vectors close to $x'$ for coordinates in $I_P$:
\begin{equation}
    T_{x, P} \triangleq \left\{x' \in [-1,1]^m : -\frac{\delta_2}{2} < x_c - x_c' \leq \frac{\delta_2}{2} ,\;\forall c \in I_P\right\}.
\end{equation}
Finally, we define an additional hyperparameter $B > 0$ as
\begin{equation}
    B \triangleq 2^{\mathcal{O}(\mathrm{polylog}(1/\epsilon_1))}.
\end{equation}

With these definitions in place, we can discuss the ML algorithm. At a high level, the algorithm first maps the parameter space into a high-dimensional feature space.
Then, the ML algorithm learns a linear function in this feature space using $\ell_1$-regularized regression.

In particular, the feature map $\phi$ maps $x \mapsto \phi(x)$, where $x \in [-1,1]^m$ is an $m$-dimensional vector while $\phi(x) \in \mathbb{R}^{m_\phi}$ is an $m_\phi$-dimensional vector with
\begin{equation}
    m_\phi \triangleq \sum_{P \in S^{\mathrm{geo}}} |X_P|.
\end{equation}
Here, $S^{(\mathrm{geo})}$ denotes the set of all geometrically local Pauli observables as in Def.~\ref{def:setglpauli}.
Each coordinate of $\phi(x)$ is indexed by $x' \in X_P, P \in S^{(\mathrm{geo})}$ and is defined as
\begin{equation}
\phi(x)_{x', P} \triangleq \indicator[x \in T_{x', P}].
\end{equation}

The hypothesis class for our proposed ML algorithm consists of linear functions in this feature space, i.e., functions of the form $h(x) = \mathbf{w} \cdot \phi(x)$.
The classical ML model learns such a function using $\ell_1$-regularized regression (LASSO)~\cite{doi:10.1137/0907087,tibshirani1996regression,mohri2018foundations} over the feature space.
Namely, given the hyperparameter $B > 0$ defined above, we utilize LASSO to find an $m_\phi$-dimensional vector $\mathbf{w}^*$ from the following optimization problem that minimizes the training error $\frac{1}{N} \sum_{\ell=1}^N \left| \mathbf{w} \cdot \phi(x_\ell) - y_\ell \right|^2$,
\begin{equation}
\min_{\substack{\mathbf{w} \in \mathbb{R}^{m_\phi}\\ \norm{\mathbf{w}}_1 \leq B} } \, \frac{1}{N} \sum_{\ell=1}^N \left| \mathbf{w} \cdot \phi(x_\ell) - y_\ell \right|^2,
\end{equation}
where $y_\ell$ approximates $\Tr(O\rho(x_\ell))$.
We denote the learned function by $h^*(x) = \mathbf{w}^* \cdot \phi(x)$.
Importantly, this learned function does not need to achieve the minimum training error.
In the following, we consider the vector $\mathbf{w}^*$ to yield a training error that is larger than the minimum training error by at most $\epsilon_3/2$.

\subsection{Rigorous guarantee}
\label{sec:rigor-guarantee}
Given these definitions and the ML algorithm, we prove the following theorem.
The theorem stated in the main text corresponds to $\epsilon_1 = 0.2 \epsilon$, $\epsilon_2 = \epsilon$, and $\epsilon_3 = 0.4 \epsilon$. Hence $(\epsilon_1 + \epsilon_2)^2 \leq 1.44 \epsilon^2 \leq 0.53 \epsilon$ and $(\epsilon_1 + \epsilon_2)^2 + \epsilon_3 \leq \epsilon$.

\begin{theorem}
\label{thm:algorithm}
Let $1/e > \epsilon_1, \epsilon_2, \epsilon_3 > 0$ and $\delta > 0$.
Given training data $\{ (x_\ell, y_\ell) \}_{\ell=1}^N$ of size
\begin{equation}
N = \log(n / \delta) 2^{\mathcal{O}(\log(1 / \epsilon_3) + \mathrm{polylog}(1/\epsilon_1))},
\end{equation}
where $x_\ell$ is sampled from $\mathcal{D}$ and $y_\ell$ is an estimator of $\Tr(O\rho(x_\ell))$ such that $|y_\ell - \Tr(O\rho(x_\ell))| \leq \epsilon_2$,
the ML algorithm can produce $h^*(x)$ that achieves prediction error
\begin{equation}
\E_{x \sim \mathcal{D}} |h^*(x) - \Tr(O\rho(x))|^2 \leq (\epsilon_1 + \epsilon_2)^2 + \epsilon_3
\end{equation}
with probability at least $1 - \delta$.
The training time for constructing the hypothesis function $h$ and the prediction time for computing $h^*(x)$ are upper bounded by $\mathcal{O}(nN) = n\log(n / \delta) 2^{\mathcal{O}(\log(1 / \epsilon_3) + \mathrm{polylog}(1/\epsilon_1))}$.
\end{theorem}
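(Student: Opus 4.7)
The plan is to deduce Theorem~\ref{thm:algorithm} by combining the three ingredients that have already been set up: the simple functional form from Theorem~\ref{thm:simple}, the Pauli $1$-norm bound from Theorem~\ref{thm:normineq}, and a standard LASSO generalization bound in terms of the hypothesis norm, feature-space dimension, and sample size. The key point is that the feature map $\phi$ used by the algorithm was \emph{designed} so that the discretized approximation $g(x)$ of Eq.~\eqref{eq:g} is exactly a linear function of $\phi(x)$, i.e.\ $g(x) = \mathbf{w}' \cdot \phi(x)$ with $\mathbf{w}'_{x',P} = f_P(x')$. This gives us an explicit feasible comparator $\mathbf{w}'$, provided the hyperparameter $B$ is chosen at least as large as $\|\mathbf{w}'\|_1$.

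First I would bound $\|\mathbf{w}'\|_1$. By definition $\|\mathbf{w}'\|_1 = \sum_{P \in S^{(\mathrm{geo})}} \sum_{x' \in X_P} |f_P(x')| \le \big(\max_P |X_P|\big) \sum_P |\alpha_P|$, since $|f_P(x')| = |\alpha_P \Tr(P \rho(\chi_P(x')))| \le |\alpha_P|$. The discretization $X_P$ has cardinality at most $(1 + 2/\delta_2)^{|I_P|}$, and by construction $|I_P| = \mathrm{poly}(\delta_1) = \mathrm{polylog}(1/\epsilon_1)$, so with $\delta_2 = \Theta(\epsilon_1/\sqrt{|I_P|})$ this gives $\max_P |X_P| = 2^{\mathrm{polylog}(1/\epsilon_1)}$. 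Corollary~\ref{corollary:normineq} then controls $\sum_P |\alpha_P| = \mathcal{O}(1)$, so overall $\|\mathbf{w}'\|_1 = 2^{\mathrm{polylog}(1/\epsilon_1)}$, and I would set the hyperparameter $B$ to any upper bound of this form. With $B \ge \|\mathbf{w}'\|_1$, $\mathbf{w}'$ is feasible and the LASSO optimizer $\mathbf{w}^*$ achieves training error at most that of $\mathbf{w}'$, plus the $\epsilon_3/2$ optimization slack.

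Next I would control the training error of $\mathbf{w}'$ itself. Theorem~\ref{thm:simple} (with its $\epsilon$ rescaled appropriately) gives $|g(x) - \Tr(O\rho(x))| \le \epsilon_1 \sum_P |\alpha_P| = \mathcal{O}(\epsilon_1)$ pointwise, and by hypothesis $|y_\ell - \Tr(O\rho(x_\ell))| \le \epsilon_2$. The triangle inequality and $(a+b)^2 \le (|a|+|b|)^2$ yield $\frac{1}{N}\sum_\ell |\mathbf{w}' \cdot \phi(x_\ell) - y_\ell|^2 \le (\epsilon_1 + \epsilon_2)^2$, which is Lemma~\ref{lemma:main-training}. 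Consequently the in-sample error of $\mathbf{w}^*$ is at most $(\epsilon_1 + \epsilon_2)^2 + \epsilon_3/2$.

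Finally I would convert in-sample error to out-of-sample error via a standard Rademacher/LASSO generalization bound (e.g.\ as in~\cite{tibshirani1996regression, mohri2018foundations}): with probability at least $1-\delta$,
\begin{equation}
\E_{x \sim \mathcal{D}} |h^*(x) - \Tr(O\rho(x))|^2 \;\le\; \frac{1}{N}\sum_{\ell=1}^N |\mathbf{w}^* \cdot \phi(x_\ell) - y_\ell|^2 + \mathcal{O}\!\left( B \sqrt{\frac{\log(m_\phi/\delta)}{N}} \right).
\end{equation}
Using $m_\phi \le |S^{(\mathrm{geo})}| \cdot \max_P |X_P| = \mathcal{O}(n) \cdot 2^{\mathrm{polylog}(1/\epsilon_1)}$ and $B = 2^{\mathrm{polylog}(1/\epsilon_1)}$, requiring the second term to be at most $\epsilon_3/2$ yields
\begin{equation}
N \;=\; \log(n/\delta) \cdot 2^{\mathcal{O}(\log(1/\epsilon_3) + \mathrm{polylog}(1/\epsilon_1))},
\end{equation}
matching the stated sample complexity. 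The training and prediction time follows because evaluating $\phi(x)$ exploits that, for each of the $\mathcal{O}(n)$ Paulis $P \in S^{(\mathrm{geo})}$, only one $x' \in X_P$ satisfies $x \in T_{x',P}$, so $\phi(x)$ has $\mathcal{O}(n)$ nonzero entries and can be formed in $\mathcal{O}(n)$ time; plugging this into the cited LASSO solvers then gives $\mathcal{O}(nN)$. The only genuinely delicate step is matching constants between the norm bound, the discretization error, and the generalization bound so that the exponents combine as $2^{\mathcal{O}(\log(1/\epsilon_3) + \mathrm{polylog}(1/\epsilon_1))}$ rather than something larger; everything else is routine bookkeeping on top of Theorems~\ref{thm:simple} and~\ref{thm:normineq}.
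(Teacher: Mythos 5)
Your proposal is correct and follows essentially the same route as the paper's proof: bound $\norm{\mathbf{w}'}_1 = 2^{\mathcal{O}(\mathrm{polylog}(1/\epsilon_1))}$ via $|f_P(x')| \leq |\alpha_P|$, the count $|X_P| \leq (1+2/\delta_2)^{|I_P|}$ with $|I_P| = \mathrm{poly}(\delta_1)$, and Corollary~\ref{corollary:normineq}; bound the training error of $\mathbf{w}'$ by $(\epsilon_1+\epsilon_2)^2$ via Theorem~\ref{thm:simple} plus the label-noise assumption; and then apply the $\ell_1$-constrained LASSO generalization bound (the paper uses Theorem 11.16 of~\cite{mohri2018foundations}, which is the bound you invoke up to writing the $BM$ and $M^2$ terms separately) together with $m_\phi = \mathcal{O}(n)\,2^{\mathrm{polylog}(1/\epsilon_1)}$ and the $\mathcal{O}(m_\phi\log m_\phi/\epsilon_3^2)$ LASSO runtime. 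The only cosmetic difference is that you justify the $\mathcal{O}(nN)$ prediction time via the sparsity of $\phi(x)$, whereas the paper simply bounds it by $\mathcal{O}(m_\phi)$; both give the stated complexity.
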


In the ML problem formulated in Sec.~\ref{subsec:ML-model} and Appendix~\ref{sec:ml-algorithm}, the training data $\{x_\ell, y_\ell\}_{\ell=1}^N$ corresponds to a fixed and unknown observable $O$.
However, we may be interested in training an ML model that can predict $\Tr(O \rho(x))$ for a wide range of observables $O$.
In this setting, one could consider a classical dataset $\{x_\ell, \sigma_T(\rho(x_\ell))\}_{\ell = 1}^N$ generated by performing classical shadow tomography \cite{huang2020predicting, elben2020mixed, elben2022randomized, wan2022matchgate, bu2022classical} on the ground state $\rho(x_\ell)$ for each $x_\ell$ in $\ell = 1, \ldots, N$.
This is achieved by repeatedly performing $T$ randomized Pauli measurements on each state $\rho(x_\ell)$.
Using the classical shadow dataset, we can obtain the following corollary for predicting ground state representations.

\begin{corollary}
\label{corollary:ground-state-rep-full}
Let $1/e > \epsilon_1, \epsilon_2, \epsilon_3 > 0$ and $\delta > 0$.
Given a training data set $\{x_\ell, \sigma_T(\rho(x_\ell))\}_{\ell = 1}^N$ of size
\begin{equation}
N = \log(n / \delta) 2^{\mathcal{O}(\log(1/\epsilon_3) + \mathrm{polylog}(1/\epsilon_1))},
\end{equation}
where $x_\ell$ is sampled from an unknown distribution $\mathcal{D}$ and $\sigma_T(\rho(x_\ell))$ is the classical shadow representation of the ground state $\rho(x_\ell)$ using $T$ randomized Pauli measurements. For $T = \mathcal{O}(\log(nN / \delta)/\epsilon_2^2) = \tilde{\mathcal{O}}(\log(n / \delta) / \epsilon_2^2)$, the proposed ML algorithm can learn a ground state representation $\hat{\rho}_{N, T}(x)$ that achieves
\begin{equation}
\E_{x \sim \mathcal{D}}|\Tr(O\hat{\rho}_{N, T}(x)) - \Tr(O\rho(x))|^2 \leq (\epsilon_1 + \epsilon_2)^2 + \epsilon_3
\end{equation}
for any observable $O$ with eigenvalues between $-1$ and $1$ that can be written as a sum of geometrically local observables with probability at least $1-\delta$.
\end{corollary}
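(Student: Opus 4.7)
The plan is to reduce Corollary~\ref{corollary:ground-state-rep-full} to Theorem~\ref{thm:algorithm} applied separately to each geometrically local Pauli, then recombine the per-Pauli predictions through the Pauli decomposition of $O$, using the Pauli $1$-norm bound of Theorem~\ref{thm:normineq} to keep the accumulated error under control. The representation $\hat{\rho}_{N,T}(x)$ will be specified implicitly by its action on every $P\in S^{(\mathrm{geo})}$: for each such $P$ I would train a LASSO model with labels $\hat{y}_{\ell,P}\triangleq \Tr(P\,\sigma_T(\rho(x_\ell)))$ extracted from the shadow dataset, obtain a learned function $h_P^*$, and declare $\Tr(P\hat{\rho}_{N,T}(x))\triangleq h_P^*(x)$. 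For a general $O=\sum_{P\in S^{(\mathrm{geo})}}\alpha_P P$ that is a sum of geometrically local observables, linearity gives $\Tr(O\hat{\rho}_{N,T}(x))=\sum_P \alpha_P h_P^*(x)$; note that $\hat{\rho}_{N,T}$ need not be a physical state, only the induced linear functional matters.

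Next, I would invoke the classical shadow concentration bounds for randomized Pauli measurements~\cite{huang2020predicting}: any geometrically local Pauli has constant shadow norm, so choosing $T=\mathcal{O}(\log(nN/\delta)/\epsilon_2^2)$ measurements per ground state and applying a union bound over the $\mathcal{O}(nN)$ pairs $(\ell,P)$ with $P\in S^{(\mathrm{geo})}$ guarantees
\begin{equation}
|\hat{y}_{\ell,P}-\Tr(P\rho(x_\ell))|\leq \epsilon_2
\end{equation}
for every $\ell$ and every $P\in S^{(\mathrm{geo})}$, with probability at least $1-\delta/2$. On this event the labels satisfy the accuracy hypothesis of Theorem~\ref{thm:algorithm} applied with the observable $P$, and a further union bound over the $|S^{(\mathrm{geo})}|=\mathcal{O}(n)$ choices of $P$ (each with failure budget $\delta/(2|S^{(\mathrm{geo})}|)$) yields, simultaneously for all $P$, a per-Pauli prediction bound $\E_{x\sim\mathcal{D}}|h_P^*(x)-\Tr(P\rho(x))|^2\leq \epsilon_\star^2$ for a rescaled target $\epsilon_\star$ to be chosen below. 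The $\log(n/\delta)$ factor in the sample complexity of Theorem~\ref{thm:algorithm} absorbs this extra $\log n$ union-bound cost without changing its form.

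Finally I would convert per-Pauli error into error for $O$ via the weighted Cauchy--Schwarz inequality
\begin{equation}
\left|\sum_P \alpha_P e_P(x)\right|^2 \leq \left(\sum_P |\alpha_P|\right)\left(\sum_Q |\alpha_Q|\, e_Q(x)^2\right),
\end{equation}
with $e_P(x)\triangleq h_P^*(x)-\Tr(P\rho(x))$. Taking expectations, bounding $\max_Q \E_x e_Q(x)^2\leq \epsilon_\star^2$, and applying Theorem~\ref{thm:normineq} to $\sum_P|\alpha_P|$ (using $\|O\|_\infty\leq 1$ and constant locality range) gives
\begin{equation}
\E_{x\sim\mathcal{D}}|\Tr(O\hat{\rho}_{N,T}(x))-\Tr(O\rho(x))|^2 \leq C^2\,\epsilon_\star^2
\end{equation}
for an absolute constant $C$ depending only on $d$ and $R$. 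Choosing $\epsilon_\star^2=((\epsilon_1+\epsilon_2)^2+\epsilon_3)/C^2$ and absorbing the constant $C^{-2}$ into the $2^{\mathcal{O}(\mathrm{polylog}(1/\epsilon_1)+\log(1/\epsilon_3))}$ prefactor of the sample complexity completes the argument with total failure probability at most $\delta$.

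The main obstacle is controlling the error uniformly over all $O$ from a single dataset: the shadow bound must hold simultaneously for the $\mathcal{O}(nN)$ Pauli/sample pairs (hence the $\log(nN/\delta)$ inside $T$), and the recombination step must not incur an $n$-dependent blow-up. The latter is exactly what Theorem~\ref{thm:normineq} buys us, turning what naively looks like an $\mathcal{O}(n)$ sum over Paulis into an $\mathcal{O}(1)$-weighted average; without this Pauli $1$-norm bound the reduction would fail.
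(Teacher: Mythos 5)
Your proposal is correct and follows essentially the same route as the paper's proof: per-Pauli LASSO models trained on shadow-derived labels, a union bound over the $\mathcal{O}(nN)$ sample/Pauli pairs to get uniform label accuracy, a union bound over the $\mathcal{O}(n)$ Paulis for the learning guarantee, and recombination of the errors through Cauchy--Schwarz together with the Pauli $1$-norm bound, with the constant $C$ absorbed by rescaling the error targets. The only cosmetic difference is that you apply a weighted Cauchy--Schwarz pointwise before taking expectations while the paper applies Cauchy--Schwarz to the expected cross terms; both yield the identical $\left(\sum_P |\alpha_P|\right)^2 \max_P \E\, e_P^2$ bound.
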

\begin{proof}
For any observable $O$ with eigenvalues between $-1$ and $1$ that can be written as a sum of geometrically local observables, we have $O = \sum_{P \in S^{(\mathrm{geo})}} \alpha_P P$, where $S^{(\mathrm{geo})}$ is the set of all geometrically local Pauli observables.
From Corollary~\ref{corollary:normineq}, we have
\begin{equation} \label{eq:alphaPbound}
    \sum_{P \in S^{(\mathrm{geo})}} |\alpha_P| \leq C
\end{equation}
for a constant $C$. We are going to use the constant $C$ to set the training data size $N$ and the number of randomized measurements $T$.
In particular, we are going to consider
\begin{align}
    N &= \log(n / (\delta / (2 |S^{(\mathrm{geo})}|) ) ) 2^{\mathcal{O}\left(\log(C^2 / \epsilon_3) + \mathrm{polylog}(C / \epsilon_1)\right)}, \\
    T &= \mathcal{O}\left(\log(nN / (\delta / (2 |S^{(\mathrm{geo})}|) ) ) / (\epsilon_2 / C)^2\right).
\end{align}
For any geometrically local Pauli observables $P \in S^{(\mathrm{geo})}$, we can use the classical shadow dataset $\{x_\ell, \sigma_T(\rho(x_\ell))\}_{\ell = 1}^N$ to estimate the expectation value of $P$ in the ground states $\rho(x_\ell)$ for all $\ell$.
This creates a dataset $\{x_\ell, y_\ell^{(P)}\}_{\ell = 1}^N$.
Under the specified $T$, Lemma 1 in~\cite{huang2021provably} guarantees that
\begin{equation} \label{eq:y-is-good}
    \left| y_\ell^{(P)} - \Tr(P \rho(x_\ell)) \right| < \frac{\epsilon_2}{C},
\end{equation}
for all $\ell = 1, \ldots, N$ and $P \in S^{(\mathrm{geo})}$ with probability at least $1 - (\delta / 2)$.
For each $P \in S^{(\mathrm{geo})}$, we consider $h^*_P(x)$ to be the function produced from Theorem~\ref{thm:algorithm}.
From Theorem~\ref{thm:algorithm}, we have
\begin{equation} \label{eq:prediction-error-repre}
    \E_{x \sim \mathcal{D}} | h^*_{P}(x) - \Tr(P \rho(x)) |^2 \leq \frac{1}{C^2}\left[ (\epsilon_1 + \epsilon_2)^2 + \epsilon_3 \right]
\end{equation}
for all $P \in S^{(\mathrm{geo})}$ with probability at least $1 - (\delta / 2)$ conditioned on the event given in Eq.~\eqref{eq:y-is-good} occurs.
Using the union bound to combine the two events considered in Eq.~\eqref{eq:y-is-good} and Eq.~\eqref{eq:prediction-error-repre}, we can ensure that Eq.~\eqref{eq:prediction-error-repre} holds with probability at least $1 - \delta$.

We define the ground state representation produced by the ML algorithm to be
\begin{equation}
    \hat{\rho}_{N, T}(x) \triangleq \sum_{P \in S^{(\mathrm{geo})}} h^*_P(x) \left(\frac{P}{2^n}\right).
\end{equation}
For the observable $O$, we have
\begin{equation}
    |\Tr(O\hat{\rho}_{N, T}(x)) - \Tr(O\rho(x))| \leq \sum_{P \in S^{(\mathrm{geo})}} |\alpha_P| | h^*_P(x) - \Tr(P \rho(x)) |.
\end{equation}
By the Cauchy-Schwarz inequality, Eq.~\eqref{eq:alphaPbound}, and Eq.~\eqref{eq:prediction-error-repre}, we have
\begin{align}
    &\E_{x \sim \mathcal{D}}|\Tr(O\hat{\rho}_{N, T}(x)) - \Tr(O\rho(x))|^2\\
    &\leq \sum_{P_1, P_2 \in S^{(\mathrm{geo})}} |\alpha_{P_1}| |\alpha_{P_2}| \E_{x \sim \mathcal{D}} | h^*_{P_1}(x) - \Tr(P_1 \rho(x)) | | h^*_{P_2}(x) - \Tr(P_2 \rho(x)) |\\
    &\leq \sum_{P_1, P_2 \in S^{(\mathrm{geo})}} |\alpha_{P_1}| |\alpha_{P_2}| \sqrt{\E_{x \sim \mathcal{D}} | h^*_{P_1}(x) - \Tr(P_1 \rho(x)) |^2} \sqrt{\E_{x \sim \mathcal{D}} | h^*_{P_2}(x) - \Tr(P_2 \rho(x)) |^2} \\
    &\leq \left(\sum_{P \in S^{(\mathrm{geo})}} |\alpha_{P}|\right)^2 \frac{1}{C^2} \left[ (\epsilon_1 + \epsilon_2)^2 + \epsilon_3\right] \leq (\epsilon_1 + \epsilon_2)^2 + \epsilon_3.
\end{align}
This concludes the proof of the corollary.
\end{proof}

\subsection{$\ell_1$-Norm bound on coefficients of linear hypothesis}
\label{sec:1norm}

We now justify our choice of the hyperparameter $B > 0$ such that $\norm{\mathbf{w}}_1 \leq B$.
From Appendix~\ref{sec:simple}, we constructed a function that approximates the ground state property. Explicitly, this function is defined as
\begin{equation}
g(x) \triangleq \sum_{P \in S^{\mathrm{(geo)}}} \sum_{x' \in X_P} f_P(x') \indicator[x \in T_{x',P}] = \mathbf{w}' \cdot \phi(x),
\end{equation}
where in this case, the vector of coefficients $\mathbf{w}'$, indexed by $x' \in X_P, P \in S^{(\mathrm{geo})}$, is defined as
\begin{equation}
\label{eq:w'}
\mathbf{w}'_{x', P} \triangleq f_P(x').
\end{equation}
Thus, we see that the ML model, which learns functions of this form, has the capacity to approximate the target ground state property $\Tr(O\rho(x))$.
The actual function we learn, $h^*(x) = \mathbf{w}^* \cdot \phi(x)$ could differ significantly from $g(x) = \mathbf{w}' \cdot \phi(x)$ because $\mathbf{w}'$ is unknown.
Nevertheless, we can utilize an upper bound $\norm{\mathbf{w}'}_1 \leq B$ to restrict the hypothesis set of the ML algorithm to functions of the form $h(x) = \mathbf{w} \cdot \phi(x)$ such that $\norm{\mathbf{w}}_1 \leq B$.
Thus, we find an upper bound on $\norm{\mathbf{w}'}_1$ in the following lemma.

\begin{lemma}[$\ell_1$-Norm bound]
\label{lemma:1norm}
Let $\mathbf{w}'$ be the vector of coefficients defined in Eq.~\eqref{eq:w'}. Then, we have the following bound on $\norm{\mathbf{w}'}_1$:
\begin{equation}
\norm{\mathbf{w}'}_1 = \sum_{P \in S^{\mathrm{(geo)}}} \sum_{x' \in X_P} |f_P(x')| = 2^{\mathcal{O}(\mathrm{polylog}(1/\epsilon_1))}.
\end{equation}
\end{lemma}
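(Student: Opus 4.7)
The plan is to decouple the bound on $\norm{\mathbf{w}'}_1$ into two factors: (i) the per-coordinate magnitude $|f_P(x')|$, which we will bound by $|\alpha_P|$, and (ii) the cardinality of the enumeration sets $X_P$, which is a purely combinatorial count. Once these are separated we can invoke the Pauli $1$-norm bound from Theorem~\ref{thm:normineq} to handle the sum over $P$, and bound $|X_P|$ by a direct count over the discretization.

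First I would note that $f_P(x') = \alpha_P \Tr(P \rho(\chi_P(x')))$ with $\norm{P}_\infty = 1$, hence $|f_P(x')| \leq |\alpha_P|$ for every $x' \in X_P$. Therefore
\begin{equation}
    \norm{\mathbf{w}'}_1 \;=\; \sum_{P \in S^{\mathrm{(geo)}}} \sum_{x' \in X_P} |f_P(x')| \;\leq\; \left( \max_{P \in S^{\mathrm{(geo)}}} |X_P| \right) \sum_{P \in S^{\mathrm{(geo)}}} |\alpha_P|.
\end{equation}
For the sum over $P$, the assumption $\norm{O}_\infty = \mathcal{O}(1)$ together with Theorem~\ref{thm:normineq} (Pauli $1$-norm bound) gives $\sum_{P \in S^{\mathrm{(geo)}}} |\alpha_P| \leq 2^d R \cdot 4^R \norm{O}_\infty = \mathcal{O}(1)$, since $R = \mathcal{O}(1)$ and $d = \mathcal{O}(1)$.

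For the combinatorial factor, I would use the definition of $X_P$: coordinates outside $I_P$ are fixed to $0$, while each coordinate inside $I_P$ takes one of at most $1 + 2/\delta_2$ values drawn from $\{0, \pm \delta_2, \ldots, \pm 1\}$. Hence $|X_P| \leq (1 + 2/\delta_2)^{|I_P|}$. Next I would track the two relevant scalings. Since $\delta_1 = \Theta(\log^2(1/\epsilon_1))$ from Eq.~\eqref{eq:delta1}, and $I_P$ is the set of coordinates whose associated interaction term lies within distance $\delta_1$ of $P$, geometric locality in $d$ dimensions forces $|I_P| = \mathcal{O}(\delta_1^d) = \mathrm{poly}(\log(1/\epsilon_1))$. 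From Eq.~\eqref{eq:delta2} we have $1/\delta_2 = \mathcal{O}(\sqrt{C' |I_P|}/\epsilon_1) = \mathcal{O}(\mathrm{poly}(\log(1/\epsilon_1))/\epsilon_1)$. Combining,
\begin{equation}
    |X_P| \;\leq\; \left( 1 + \frac{2}{\delta_2} \right)^{|I_P|} \;=\; \left( \mathrm{poly}(1/\epsilon_1) \right)^{\mathrm{polylog}(1/\epsilon_1)} \;=\; 2^{\mathcal{O}(\mathrm{polylog}(1/\epsilon_1))},
\end{equation}
using $a^b = 2^{b \log_2 a}$ with $b = \mathrm{polylog}(1/\epsilon_1)$ and $\log_2 a = \mathcal{O}(\log(1/\epsilon_1))$, which multiplies to $\mathrm{polylog}(1/\epsilon_1)$.

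Multiplying the two bounds gives $\norm{\mathbf{w}'}_1 \leq \mathcal{O}(1) \cdot 2^{\mathcal{O}(\mathrm{polylog}(1/\epsilon_1))} = 2^{\mathcal{O}(\mathrm{polylog}(1/\epsilon_1))}$, as claimed. The only nontrivial ingredient is the Pauli $1$-norm bound (Theorem~\ref{thm:normineq}); without it one would get a factor that scales linearly in $|S^{\mathrm{(geo)}}| = \Theta(n)$, which would ruin the $\log(n)$ sample complexity downstream. Every other step is a direct counting argument, so I do not expect any obstacle beyond carefully tracking the $\mathrm{polylog}(1/\epsilon_1)$ exponents.
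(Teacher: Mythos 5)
Your proposal is correct and follows essentially the same route as the paper's proof: bound $|f_P(x')|\leq|\alpha_P|$, factor out $\max_P|X_P|$, invoke the Pauli $1$-norm bound (the paper uses Corollary~\ref{corollary:normineq}, the $\norm{O}_\infty=\mathcal{O}(1)$ specialization of Theorem~\ref{thm:normineq}), and count $|X_P|\leq(1+2/\delta_2)^{|I_P|}$ with $|I_P|=\mathcal{O}(\delta_1^d)=\mathrm{polylog}(1/\epsilon_1)$. No gaps; your closing remark about why the Pauli $1$-norm bound is the essential ingredient matches the paper's motivation exactly.
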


\begin{proof}
First, we can analyze the $|f_P(x')|$ term. Recall that $f_P$ is just $\alpha_P \Tr(P\rho(\chi_P(x)))$, where $\chi_P(x) \in [-1, 1]^m$ sets parameters outside of $I_P$ to $0$. Thus, we can bound its absolute value by
\begin{equation}
|f_P(x)| = |\alpha_P| |\Tr(P\rho(\chi_P(x)))| \leq |\alpha_P|.
\end{equation}
Plugging this into the $\ell_1$-norm of $\mathbf{w}'$, we have
\begin{subequations}
\begin{align}
    \norm{\mathbf{w}'}_1 &= \sum_{P \in S^{\mathrm{(geo)}}} \sum_{x' \in X_P} |f_P(x')|\\
    &\leq \sum_{P \in S^{\mathrm{(geo)}}} |X_P| |f_P(x')|\\
    \label{eq:1normunfin}
    &\leq \max_{P \in S^{\mathrm{(geo)}}} |X_P| \sum_{Q \in S^{(\mathrm{geo})}}|\alpha_Q|.
\end{align}
\end{subequations}
Thus, it suffices to count the number of elements in $X_P$.
Recall in Definition~\ref{def:discretization} that $X_P$ is defined such that the parameter values for $c' \notin I_P$ are fixed to $0$ while for $c' \in I_P$, $x_{c'}$ can be any value in $\{0, \pm \delta_2, \pm 2\delta_2,\dots, \pm 1\}$.
Hence, it is clear that
\begin{equation}
|X_P| \leq \left|\left\{0, \pm \delta_2, \pm 2\delta_2,\dots, \pm 1\right\}\right|^{|I_P|} \leq \left(\frac{2}{\delta_2} + 1\right)^{|I_P|}.
\end{equation}
Moreover, by our choice of $\delta_2$ in Eq.~\eqref{eq:delta2}, we have
\begin{equation}
\label{eq:xpsizeunfin}
|X_P| \leq \left(2\left\lceil \frac{2\sqrt{C'|I_P|}}{\epsilon_1} \right\rceil + 1\right)^{|I_P|}.
\end{equation}
Now, it remains to bound the size of $I_P$, defined in Eq.~\eqref{eq:ip}.
This size is simply the number of parameters that $h_j$ depends on for some $h_j$ with $d_{\mathrm{obs}}(h_j, P) \leq \delta_1$.
By Eq.~\eqref{eq:localtermsbound}, we can bound the number of such $h_j$:
\begin{equation}
\sum_{j : d_{\mathrm{obs}}(h_j, P) \leq \delta_1} 1 \leq b_d + c_d\delta_1^d.
\end{equation}
Moreover, we assume that each $h_j$ depends on $\mathcal{O}(1)$ parameters. Suppose that each $h_j$ depends on at most $q$ parameters. Then, we can bound the size of $I_P$ by
\begin{equation}
|I_P| \leq q(b_d + c_d\delta_1^d).
\end{equation}
Utilizing this bound in Eq.~\eqref{eq:xpsizeunfin}, we obtain
\begin{equation}
\label{eq:boundxpsize}
|X_P| \leq \left(2\left\lceil \frac{2\sqrt{C'q(b_d + c_d\delta_1^d)}}{\epsilon_1} \right\rceil + 1\right)^{q(b_d + c_d\delta_1^d)}.
\end{equation}
Plugging this into our $\ell_1$-norm bound from Eq.~\eqref{eq:1normunfin}, we have
\begin{subequations}
\begin{align}
\norm{\mathbf{w}'}_1 &= \sum_{P \in S^{\mathrm{(geo)}}} \sum_{x' \in X_P} |f_P(x')|\\
&\leq \left(2\left\lceil \frac{2\sqrt{C'q(b_d + c_d\delta_1^d)}}{\epsilon_1} \right\rceil + 1\right)^{q(b_d + c_d\delta_1^d)} \sum_{Q \in S^{\mathrm{(geo)}}} |\alpha_Q| \\
&\leq D \left(2\left\lceil \frac{2\sqrt{C'q(b_d + c_d\delta_1^d)}}{\epsilon_1} \right\rceil + 1\right)^{q(b_d + c_d\delta_1^d)},
\end{align}
\end{subequations}
where the second inequality follows from Corollary~\ref{corollary:normineq}, taking $D$ as this constant. We can simplify this expression further by using that $\delta_1 = C_{\mathrm{max}}\log^2(2C/\epsilon_1)$ for sufficiently small $\epsilon_1$ according to Eq.~\eqref{eq:delta1}.
\begin{subequations}
\label{eq:1normbound}
\begin{align}
\norm{\mathbf{w}'}_1 &= \sum_{P \in S^{\mathrm{(geo)}}} \sum_{x' \in X_P} |f_P(x')|\\
&\leq D \left(2\left\lceil \frac{2\sqrt{C'q(b_d + c_d(C_{\mathrm{max}}\log^2(2C/\epsilon_1))^d)}}{\epsilon_1} \right\rceil + 1\right)^{q(b_d + c_d(C_{\mathrm{max}}\log^2(2C/\epsilon_1))^d)}\\
&= \left(\frac{\log^{2d}(1/\epsilon_1)}{\epsilon_1}\right)^{\mathcal{O}(\log^{2d}(1/\epsilon_1))}\\
&= \left(\frac{1}{\epsilon_1}\right)^{\mathcal{O}(\log^{2d}(1/\epsilon_1))}\left(\log^{2d}(1/\epsilon_1)\right)^{\mathcal{O}(\log^{2d}(1/\epsilon_1))}\\
&= 2^{\mathcal{O}(\log^{2d + 1}(1/\epsilon_1))}\\
&= 2^{\mathcal{O}(\mathrm{polylog}(1/\epsilon_1))},
\end{align}
\end{subequations}
which is the promised scaling.
\end{proof}

\subsection{Training error bound}
\label{sec:training}

Using the results in Appendix~\ref{sec:simpleformal}, we can derive a bound on the training error of $g(x) = \mathbf{w}' \cdot \phi(x)$ discussed in the previous section.
The existence of $\mathbf{w}'$ then guarantees that the function $h^*(x) = \mathbf{w}^* \cdot \phi(x)$ found by performing optimization to minimize training error will also yield a training error close to zero.
To prove this rigorously, we first write a precise definition of training error.

\begin{definition}[Training error]
\label{def:training}
Given a function $h(x)$ and a training dataset $\{(x_\ell, y_\ell)\}_{\ell=1}^N$. The \emph{training error} is defined as
\begin{equation}
\hat{R}(h) = \min_{\mathbf{w}} \frac{1}{N} \sum_{\ell=1}^N | h(x_\ell) - y_\ell|^2.
\end{equation}
\end{definition}

\noindent We can bound the training error in the following lemma.

\begin{lemma}[Detailed restatement of Lemma~\ref{lemma:main-training}]
\label{lemma:training}
The function
\begin{equation}
g(x) = \sum_{P \in S^{\mathrm{(geo)}}} \sum_{x' \in X_P} f_P(x') \indicator[x \in T_{x', P}] = \mathbf{w}' \cdot \phi(x),
\end{equation}
achieves training error
\begin{equation}
\hat{R}(g) \leq (\epsilon_1 + \epsilon_2)^2,
\end{equation}
where the training error is defined in Definition~\ref{def:training}.
\end{lemma}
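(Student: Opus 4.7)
My plan is to combine the pointwise approximation bound from Theorem~\ref{thm:simple} with the assumption $|y_\ell - \Tr(O\rho(x_\ell))| \leq \epsilon_2$ on the training labels via the triangle inequality, then square and average. There is no serious obstacle; the lemma is essentially a direct corollary of the ``simple form'' result established in Appendix~\ref{sec:simple} together with the norm inequality from Appendix~\ref{sec:norminequality}.

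First I would verify that the choices of $\delta_1$ in Eq.~\eqref{eq:delta1} and $\delta_2$ in Eq.~\eqref{eq:delta2} are exactly those required by Theorem~\ref{thm:simple}: namely, $\delta_1$ is set so that Definition~\ref{def:delta} is instantiated with internal parameter $\epsilon_1/(2C)$ (where $C$ is the constant from Lemma~\ref{lemma:approxlocal}/Corollary~\ref{corollary:approxlocal}), and $\delta_2$ is set so that Definition~\ref{def:discretization} is instantiated with internal parameter $\epsilon_1/2$. Under these choices, Theorem~\ref{thm:simple} applied to our observable $O = \sum_P \alpha_P P$ yields the pointwise bound
\begin{equation}
|\Tr(O\rho(x)) - g(x)| < \epsilon_1 \left( \sum_{P \in S^{(\mathrm{geo})}} |\alpha_P| \right)
\end{equation}
for every $x \in [-1,1]^m$. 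Since $\|O\|_\infty = \mathcal{O}(1)$, Corollary~\ref{corollary:normineq} bounds $\sum_P |\alpha_P|$ by an absolute constant; absorbing this constant into the choice of $\epsilon_1$ (equivalently, rescaling the hyperparameters by the constant factor of the norm inequality) gives
\begin{equation}
|\Tr(O\rho(x_\ell)) - g(x_\ell)| \leq \epsilon_1, \qquad \ell = 1, \ldots, N.
\end{equation}

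Next I would invoke the triangle inequality together with the labelling assumption $|y_\ell - \Tr(O\rho(x_\ell))| \leq \epsilon_2$:
\begin{equation}
|g(x_\ell) - y_\ell| \;\leq\; |g(x_\ell) - \Tr(O\rho(x_\ell))| + |\Tr(O\rho(x_\ell)) - y_\ell| \;\leq\; \epsilon_1 + \epsilon_2,
\end{equation}
for every $\ell$. Squaring and averaging over the $N$ training points,
\begin{equation}
\frac{1}{N} \sum_{\ell=1}^{N} |g(x_\ell) - y_\ell|^2 \;\leq\; (\epsilon_1 + \epsilon_2)^2.
\end{equation}
Because $\hat{R}(g)$ is no larger than this empirical average (the minimization in Definition~\ref{def:training} can only decrease it), the conclusion $\hat{R}(g) \leq (\epsilon_1 + \epsilon_2)^2$ follows. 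The only subtlety is the bookkeeping with constants in step one, but these are all absolute constants controlled by Corollary~\ref{corollary:normineq} and can be absorbed into the definitions of $\delta_1, \delta_2$ without affecting the asymptotic scaling.
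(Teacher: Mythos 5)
Your proposal is correct and follows essentially the same route as the paper: invoke Theorem~\ref{thm:simple} for the pointwise approximation $|g(x)-\Tr(O\rho(x))|\lesssim\epsilon_1$, use the label assumption $|y_\ell-\Tr(O\rho(x_\ell))|\leq\epsilon_2$, and combine via the triangle inequality before squaring (the paper bounds the worst-case index $\ell^*$ rather than averaging, but this is immaterial). If anything, your explicit absorption of the constant $\sum_P|\alpha_P|\leq D$ from Corollary~\ref{corollary:normineq} into the hyperparameter choices is handled more carefully than in the paper's own write-up.
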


\begin{proof}
This lemma follows directly from Theorem~\ref{thm:simple}. Let $\ell^*$ be defined as
\begin{equation}
\ell^* = \mathop{\mathrm{argmax}}_{1 \leq \ell \leq N} |g(x_\ell) - y_\ell|^2.
\end{equation}
Then, the training error can be bounded above by
\begin{equation}
\hat{R}(g) \leq |g(x_{\ell^*}) - y_{\ell^*}|^2 \leq \left(|g(x_{\ell^*}) - \Tr(O\rho(x_{\ell^*}))| + |\Tr(O\rho(x_{\ell^*})) - y_{\ell^*}|\right)^2,
\end{equation}
where the last inequality follows by triangle inequality. Here, the second term can be bounded by $\epsilon_2$ using definition of our training labels $y_\ell$. For the first term, let $D$ be a constant such that
\begin{equation}
\sum_{P \in S^{\mathrm{(geo)}}} |\alpha_P| \leq D,
\end{equation}
using Corollary~\ref{corollary:normineq}. Then, by Theorem~\ref{thm:simple}, we have
\begin{equation}
|g(x_{\ell^*}) - \Tr(O\rho(x_{\ell^*}))| \leq \epsilon_1 + \epsilon_2.
\end{equation}
Putting everything together, we have
\begin{equation}
\hat{R}(g) \leq (\epsilon_1 + \epsilon_2)^2,
\end{equation}
which is the claimed result.
\end{proof}

Now consider the function $h^*(x) = \mathbf{w}^* \cdot \phi(x)$, where $\mathbf{w}^*$ is obtained by minimizing the training error, such that the training error is larger than the minimum training error by at most $\epsilon_3 / 2$.
We can achieve this using an optimization algorithm described in Appendix~\ref{sec:comp-time}.
Formally, we have the following inequality,
\begin{equation}
    \hat{R}(h^*) \leq \frac{\epsilon_3}{2} + \min_{\substack{\mathbf{w} \in \mathbb{R}^{m_\phi}\\ \norm{\mathbf{w}}_1 \leq B} } \, \frac{1}{N} \sum_{\ell=1}^N \left| \mathbf{w} \cdot \phi(x_\ell) - \Tr(O \rho(x_\ell)) \right|^2.
\end{equation}
Because we have set $B = 2^{\mathrm{polylog}(1 / \epsilon_1)}$,
\begin{equation}
\norm{\mathbf{w}'}_1 = \sum_{P \in S^{\mathrm{(geo)}}} \sum_{x' \in X_P} |f_P(x')| \leq 2^{\mathcal{O}(\mathrm{polylog}(1/\epsilon_1))} = B.
\end{equation}
Therefore, the minimum training error must be at most $\hat{R}(g)$,
\begin{equation}
    \min_{\substack{\mathbf{w} \in \mathbb{R}^{m_\phi}\\ \norm{\mathbf{w}}_1 \leq B} } \, \frac{1}{N} \sum_{\ell=1}^N \left| \mathbf{w} \cdot \phi(x_\ell) - \Tr(O \rho(x_\ell)) \right|^2 \leq \hat{R}(g).
\end{equation}
Together, we have
\begin{equation} \label{eq:training-error-hstar}
    \hat{R}(h^*) \leq \hat{R}(g) + \frac{\epsilon_3}{2} \leq (\epsilon_1 + \epsilon_2)^2 + \frac{\epsilon_3}{2}.
\end{equation}
The last inequality follows from Lemma~\ref{lemma:training}.

\subsection{Prediction error bound}
\label{sec:prediction}

With this bound on the training error, it remains to find a bound on the prediction error of our hypothesis function. To this end, we can use a standard result from machine learning theory about the prediction error of $\ell_1$-norm-constrained linear hypotheses trained using the LASSO algorithm~\cite{doi:10.1137/0907087,mohri2018foundations,tibshirani1996regression}.

\begin{theorem}[Theorem 11.16 in~\cite{mohri2018foundations}]
\label{theorem:predictionerror}
Let $\mathcal{X} \subseteq \mathbb{R}^A$ and $\mathcal{H} = \{\mathbf{x} \in \mathcal{X} \mapsto \mathbf{w} \cdot \mathbf{x} : \norm{\mathbf{w}}_1 \leq B\}$. Let $S = ( (\mathbf{x}_1, y_1),\dots, (\mathbf{x}_N, y_N)) \in (\mathcal{X} \times \mathcal{Y})^N$. Let $\mathcal{D}$ denote a distribution over $\mathcal{X} \times \mathcal{Y}$ according to which the training data $S$ is drawn. Assume that there exists $r_\infty > 0$ such that for all $\mathbf{x} \in \mathcal{X}$, $\norm{\mathbf{x}}_\infty \leq r_\infty$ and $M > 0$ such that $|h(x) - y| \leq M$ for all $(x,y) \in \mathcal{X} \times \mathcal{Y}$. Then, for any $\delta > 0$, with probability at least $1-\delta$, each of the following inequalities holds for all $h \in \mathcal{H}$:
\begin{equation}
\mathop{\mathbb{E}}_{(x,y) \sim \mathcal{D}} |h(x) - y|^2 \triangleq R(h) \leq \hat{R}_S(h) + 2r_\infty B M \sqrt{\frac{2\log(2A)}{N}} + M^2 \sqrt{\frac{\log\frac{1}{\delta}}{2N}}
\end{equation}
where $R(h)$ is the prediction error for the hypothesis $h$ and $\hat{R}_S(h)$ is the training error of $h$ on the training data $S$.
\end{theorem}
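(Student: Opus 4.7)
The plan is to prove this generalization bound by the standard Rademacher-complexity route that Mohri et al.\ use in their textbook. The argument splits into three ingredients: a uniform deviation inequality via McDiarmid's bounded-differences inequality together with symmetrization, Talagrand's contraction lemma to pass from the squared loss back to the linear hypothesis class, and an explicit calculation of the Rademacher complexity of the $\ell_1$-ball.

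First, I would set up uniform convergence. Let $\ell_h(\mathbf{x},y) = |h(\mathbf{x})-y|^2 \in [0, M^2]$. The statistic $S \mapsto \sup_{h\in\mathcal{H}}(R(h) - \hat{R}_S(h))$ changes by at most $M^2/N$ when a single training example is swapped, so McDiarmid's inequality yields, with probability at least $1-\delta$,
\begin{equation}
\sup_{h \in \mathcal{H}} \bigl( R(h) - \hat{R}_S(h) \bigr) \;\leq\; \E_S \sup_{h \in \mathcal{H}} \bigl( R(h) - \hat{R}_S(h) \bigr) + M^2 \sqrt{\frac{\log(1/\delta)}{2N}}.
\end{equation}
A standard symmetrization argument (introduce a ghost sample, then Rademacher signs) bounds the expected supremum by $2\,\mathfrak{R}_N(\ell \circ \mathcal{H})$, the empirical Rademacher complexity of the loss class.

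Second, I would apply Talagrand's contraction lemma. The map $u \mapsto (u-y)^2$ is $2M$-Lipschitz on the range of interest, since by hypothesis $|h(\mathbf{x}) - y| \leq M$ so the derivative $2(u-y)$ is bounded in absolute value by $2M$. Contraction then gives
\begin{equation}
\mathfrak{R}_N(\ell \circ \mathcal{H}) \;\leq\; 2M \cdot \mathfrak{R}_N(\mathcal{H}).
\end{equation}
Third, I would evaluate $\mathfrak{R}_N(\mathcal{H})$ directly. By $\ell_1$--$\ell_\infty$ duality,
\begin{equation}
\mathfrak{R}_N(\mathcal{H}) \;=\; \frac{1}{N}\,\E_\sigma \sup_{\|\mathbf{w}\|_1 \leq B} \mathbf{w} \cdot \sum_{i=1}^N \sigma_i \mathbf{x}_i \;=\; \frac{B}{N}\,\E_\sigma \Bigl\| \sum_{i=1}^N \sigma_i \mathbf{x}_i \Bigr\|_\infty .
\end{equation}
Applying Massart's finite-class lemma to the $2A$ coordinate functionals $\{\pm \mathbf{e}_j\}_{j=1}^A$, using the coordinate-wise bound $|x_{i,j}| \leq r_\infty$, gives $\E_\sigma \|\sum_i \sigma_i \mathbf{x}_i\|_\infty \leq r_\infty \sqrt{2 N \log(2A)}$, hence $\mathfrak{R}_N(\mathcal{H}) \leq B r_\infty \sqrt{2\log(2A)/N}$.

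Chaining the three inequalities (the factor $2$ from symmetrization combines with the $2M$ from contraction to yield the promised $2 r_\infty B M$ prefactor) delivers the stated bound. There is no genuine obstacle here since each ingredient is textbook; the only care needed is bookkeeping of constants, namely verifying that the squared loss really is $2M$-Lipschitz on the relevant range (this uses the hypothesis $|h(\mathbf{x})-y|\leq M$, not merely boundedness of $\mathcal{Y}$), and that Massart's lemma produces $\sqrt{2\log(2A)}$ rather than $\sqrt{\log A}$ because one must include both signs of each standard basis vector when taking the $\ell_\infty$ norm of a signed sum.
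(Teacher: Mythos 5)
First, note that the paper does not prove this statement at all: it is imported verbatim (up to notation) as Theorem 11.16 of \cite{mohri2018foundations}, so there is no in-paper proof to compare against. Your reconstruction follows exactly the route one expects the textbook to take: McDiarmid plus symmetrization to reduce to the Rademacher complexity of the loss class, Talagrand's contraction with Lipschitz constant $2M$ for the squared loss on the range $|h(\mathbf{x})-y|\le M$, and Massart's lemma over the $2A$ signed coordinate functionals to bound the complexity of the $\ell_1$-ball by $r_\infty B\sqrt{2\log(2A)/N}$. Each of these three steps is correct as you state it, including the two points of care you single out.

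The one concrete problem is your final claim about the constants. Chaining your three inequalities gives a middle term of
\begin{equation*}
2\cdot(2M)\cdot r_\infty B\sqrt{\frac{2\log(2A)}{N}} \;=\; 4\,r_\infty B M\sqrt{\frac{2\log(2A)}{N}},
\end{equation*}
not the $2\,r_\infty B M\sqrt{2\log(2A)/N}$ appearing in the statement; the assertion that ``the factor $2$ from symmetrization combines with the $2M$ from contraction to yield the promised $2r_\infty BM$ prefactor'' is arithmetically $4M$, not $2M$. None of the three ingredients as you have set them up has a spare factor of $2$ to give back: one-sided symmetrization already costs $2$, the Lipschitz constant of $u\mapsto(u-y)^2$ on $\{|u-y|\le M\}$ is genuinely $2M$, and Massart yields exactly $\sqrt{2\log(2A)}$ once both signs of each basis vector are counted. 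So either you must accept the weaker constant $4$ --- which is entirely harmless for every use of this theorem in the paper, where only the $BM\sqrt{\log(m_\phi/\delta)/N}$ scaling enters the sample-complexity bound --- or you must identify the sharper intermediate lemma in \cite{mohri2018foundations} that recovers the factor of $2$. As written, your argument establishes the stated inequality only with $4$ in place of $2$ in the middle term.
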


\noindent We can use this theorem to prove the prediction error bound in Theorem~\ref{thm:algorithm}.

\begin{proof}[Proof of prediction error in Theorem~\ref{thm:algorithm}]
We utilize Theorem~\ref{theorem:predictionerror} as well as our established lemmas.

First, we demonstrate that we satisfy the conditions of the theorem in our setting.
Here, we view $h$ in Theorem~\ref{theorem:predictionerror} as a function of the higher-dimensional feature vector $\phi(x)$ rather than the $m$-dimensional vector $x \in [-1, 1]^m$ so that $h$ is a linear hypothesis.
In this perspective, our input space $\mathcal{X}$ is the feature space $\{0,1\}^{m_\phi} \subseteq \mathbb{R}^{m_\phi}$, as the indicator functions we are evaluating only take $0$-$1$ values.
In our case, the dimension $A$ is given by
\begin{equation}
A = m_\phi \triangleq \sum_{P \in S^{\mathrm{(geo)}}} |X_P|.
\end{equation}
Moreover, the training data we are given is $S = ( (\phi(x_1), y_1),\dots, (\phi(x_N), y_N)) \in (\mathcal{X} \times \mathcal{Y})^N$, where $y_\ell$ is such that
\begin{equation}
|y_\ell - \Tr(O\rho(x_\ell))| \leq \epsilon_2.
\end{equation}
Again, we are thinking of $h$ as a function that takes the input $\phi(x)$.
Furthermore, since $\phi(x_\ell) \in \{0,1\}^{m_\phi}$ for all $\ell = 1,\dots,N$, we can see that $\norm{\phi(x_\ell)}_\infty \leq 1 = r_\infty$.
Moreover, the hypothesis class $\mathcal{H}$ is given by the set of the functions of the same form as $h$, i.e., $\mathcal{H} = \{\phi(x) \in \mathcal{X} \mapsto \mathbf{w} \cdot \phi(x) : \norm{\mathbf{w}}_1 \leq B\}$ with $B = 2^{\mathrm{polylog}(1/\epsilon)}$.
By considering $M = 2^{\mathcal{O}(\mathrm{polylog}(1/\epsilon_1))}$, we also have $|h(x_\ell) - y_\ell| \leq M$ for all $\ell = 1,\dots, N$ because
\begin{equation}
|h(x_\ell) - y_\ell| \leq |\mathbf{w} \cdot \phi(x_\ell)| + |y_\ell| \leq \norm{\mathbf{w}}_1 \norm{\phi(x)}_\infty + 2 \leq 2^{\mathcal{O}(\mathrm{polylog}(1/\epsilon_1))} + 2 = 2^{\mathcal{O}(\mathrm{polylog}(1/\epsilon_1))},
\end{equation}
where the second inequality follows by H\"older's inequality.
Furthermore, by Eq.~\eqref{eq:training-error-hstar},
the learned model $h^*(x) = \mathbf{w}^* \cdot \phi(x)$ achieves $\hat{R}(h^*) \leq  (\epsilon_1 + \epsilon_2)^2 + (\epsilon_3 / 2)$. Thus, by Theorem~\ref{theorem:predictionerror},
\begin{equation}
R(h^*) \leq (\epsilon_1 + \epsilon_2)^2 + \frac{\epsilon_3}{2} + 2B M\sqrt{\frac{2\log(2m_\phi)}{N}} + M^2\sqrt{\frac{\log{\frac{1}{\delta}}}{2N}}
\end{equation}
with probability at least $1 - \delta$.
In order to bound the prediction error above by $(\epsilon_1 + \epsilon_2)^2 + \epsilon_3$, we need $N$ to be large enough such that
\begin{equation}
\label{eq:wtsprediction}
2 B M\sqrt{\frac{2\log(2m_\phi)}{N}} + M^2\sqrt{\frac{\log{\frac{1}{\delta}}}{2N}} \leq \frac{\epsilon_3}{2}.
\end{equation}
We can upper bound $m_\phi$ using the same approach as in the proof of Lemma~\ref{lemma:1norm}. Explicitly, using Eq.~\eqref{eq:boundxpsize} and Eq.~\eqref{eq:1normbound}, we have
\begin{subequations}
\begin{align}
m_\phi &= \sum_{P \in S^{\mathrm{(geo)}}} |X_P| \leq \sum_{P \in S^{\mathrm{(geo)}}} \left(2\left\lceil \frac{\sqrt{C'q(b_d + c_d\delta_1^d)}}{\epsilon_1} \right\rceil + 1\right)^{q(b_d + c_d\delta_1^d)}\\
&= 2^{\mathcal{O}(\mathrm{polylog}(1/\epsilon_1))}\mathcal{O}(n), \label{eq:featuresize}
\end{align}
\end{subequations}
where the last equality follows because $|S^{(\mathrm{geo})}| = \mathcal{O}(n)$. Plugging everything into the left hand side of Eq.~\eqref{eq:wtsprediction}, we have
\begin{subequations}
\begin{align}
&\hspace{10pt}2 B M\sqrt{\frac{2\log(2m_\phi)}{N}} + M^2\sqrt{\frac{\log{\frac{1}{\delta}}}{2N}}\\
&\leq 2\sqrt{2}\left(2^{\mathcal{O}(\mathrm{polylog}(1/\epsilon_1))}\right)^2 \sqrt{\frac{\log\left(2\cdot 2^{\mathcal{O}(\mathrm{polylog}(1/\epsilon_1))}\mathcal{O}(n)\right)} {N}} \\
&+ \frac{1}{\sqrt{2}}\left(2^{\mathcal{O}(\mathrm{polylog}(1/\epsilon_1))}\right)^2\sqrt{\frac{\log\frac{1}{\delta}}{N}}\\
&= 2^{\mathcal{O}(\mathrm{polylog}(1/\epsilon_1))}\frac{1}{\sqrt{N}}\left(\sqrt{\mathcal{O}(\mathrm{polylog}(1/\epsilon_1)) + \mathcal{O}(\log(n))} + \sqrt{\log \frac{1}{\delta}}\right).
\end{align}
\end{subequations}
To upper bounded the above by $\frac{\epsilon_3}{2}$, we choose
\begin{subequations}
\begin{align}
N &= \frac{4}{\epsilon_3^2} \left(2^{\mathcal{O}(\mathrm{polylog}(1/\epsilon_1))}\right)^2 \left(\sqrt{\mathcal{O}(\mathrm{polylog}(1/\epsilon_1)) + \mathcal{O}(\log(n))} + \sqrt{\log \frac{1}{\delta}}\right)^2 \\
&= 2^{\mathcal{O}(\log(1/\epsilon_3) + \mathrm{polylog}(1/\epsilon_1))} \log(n / \delta).
\end{align}
\end{subequations}
Together, the training data size $N$ given above guarantees that $R(h^*) \leq (\epsilon_1 + \epsilon_2)^2 + \epsilon_3$ with probability at least $1 - \delta$.
\end{proof}

\subsection{Computational time for training and prediction}
\label{sec:comp-time}

Finally, we find the computation time required for the ML algorithm's training and prediction. To this end, we utilize standard results about the training time of the LASSO algorithm~\cite{hazan2012linear}.

\begin{proof}[Proof of computational time in Theorem~\ref{thm:algorithm}]

The training time is dominated by the time required for $\ell_1$-regularized regression (LASSO) over the feature space defined by the feature map $\phi$.
It is well-known that to obtain a training error at most $(\epsilon_3 / 2)$ larger than the optimal function value, the LASSO algorithm on the feature space can be executed in time $\mathcal{O}\left(\frac{m_\phi \log m_\phi}{\epsilon_3^2}\right)$~\cite{hazan2012linear}, where $m_\phi$ is the dimension of the feature space.
By Eq.~\eqref{eq:featuresize}, we know that
\begin{equation}
m_\phi = \mathcal{O}(n) 2^{\mathcal{O}(\mathrm{polylog}(1/\epsilon_1))}
\end{equation}
Plugging this into the time required for LASSO, we have
\begin{subequations}
\begin{align}
\mathcal{O}\left(\frac{m_\phi \log m_\phi}{\epsilon_3^2}\right) &= \mathcal{O}\left(\frac{\mathcal{O}(n) 2^{\mathcal{O}(\mathrm{polylog}(1/\epsilon_1))} \log(\mathcal{O}(n)2^{\mathcal{O}(\mathrm{polylog}(1/\epsilon_1))})}{\epsilon_3^2}\right)\\
&= \mathcal{O}(n2^{\mathcal{O}(\log(1/\epsilon_3) + \mathrm{polylog}(1/\epsilon_1))}\left(\mathcal{O}(\mathrm{polylog(1/\epsilon_1)}) + \mathcal{O}(\log(n))\right))\\
&= n\log n \,2^{\mathcal{O}(\log(1 / \epsilon_3) + \mathrm{polylog(1/\epsilon_1)})}\\
&= \mathcal{O}(nN),
\end{align}
\end{subequations}
where the last equality follows by the definition of the training data size $N$.

The prediction time is the amount of time it takes to compute $h^*(x) = \mathbf{w}^* \cdot \phi(x_\ell)$, which takes time $\mathcal{O}(m_\phi)$. This can also be upper bounded by $\mathcal{O}(nN)$.
\end{proof}

\section{Details of numerical experiments}
\label{sec:numerics}

For the numerical experiments, we consider the two-dimensional antiferromagnetic Heisenberg model.
In this setting, spin-$1/2$ particles are placed on sites in a 2D lattice.
The Hamiltonian is
\begin{equation}
H = \sum_{\langle ij \rangle} J_{ij} (X_i X_j + Y_i Y_j + Z_i Z_j),
\end{equation}
where the summation ranges over all pairs $\langle ij \rangle$ of neighboring sites on the lattice and the couplings $\{J_{ij}\}$ are sampled uniformly from the interval $[0,2]$.
Here, the parameter $x$ is a list of all couplings $J_{ij}$ so that the dimension of the parameter space is $m = O(n)$, where $n$ is the system size.
We are interested in predicting ground state properties, which in this case are the two-body correlation functions for each pair of qubits on the lattice. In particular, this correlation function is the expectation value of
\begin{equation}
C_{ij} = \frac{1}{3}(X_i X_j + Y_i Y_j + Z_i Z_j),
\end{equation}
for each pair of qubits $\langle ij \rangle$.

We generated training and testing data for this model using the same method as~\cite{huang2021provably}.
For completeness, we briefly discuss this here.
For each parameter vector of random couplings sampled uniformly from $[0,2]$, we approximated the ground state using the density-matrix renormalization group (DMRG)~\cite{white1992density} based on matrix product states (MPS)~\cite{SCHOLLWOCK201196}.
We first consider an initial random MPS with bond dimension $\chi = 10$ and variationally optimize it using a singular value decomposition cutoff of $10^{-8}$.
We terminate the DMRG runs when the change in energy is less than $10^{-4}$.
After DMRG converges, we perform randomized Pauli measurements by locally rotating into the corresponding Pauli bases and sampling the rotated state~\cite{FerrisSampling}.
In this work, we utilize two different data sets: one which is the same as in~\cite{huang2021provably} and the other which is generated in the same way but contains more data points.

We consider classical machine learning models given by first performing a feature mapping $\phi$ on the input vector $x$ and then running $\ell_1$-regularized regression (LASSO) over the feature $\phi(x)$ space, as described in Appendix~\ref{sec:ml-algorithm}.
However, while the indicator function feature map was a useful tool to obtain our rigorous guarantees, it is often hard to discretize a high-dimension parameter space into $X_P$ in practice.
Thus, we instead utilize random Fourier features~\cite{rahimi2007random}.
One can think of this as a single layer of a randomly initialized neural network.
Explicitly, this feature map is
\begin{equation}
\phi: z \mapsto \begin{pmatrix}
\cos\left(\frac{\gamma}{\sqrt{l}}(\omega_1 \cdot z)\right)\\
\sin\left(\frac{\gamma}{\sqrt{l}}(\omega_1 \cdot z)\right)\\
\vdots\\
\cos\left(\frac{\gamma}{\sqrt{l}}(\omega_R \cdot z)\right)\\
\sin\left(\frac{\gamma}{\sqrt{l}}(\omega_R \cdot z)\right)
\end{pmatrix},
\end{equation}
where $l$ is the length of the vector $z$, $\gamma > 0$ and $R > 0$ are tunable hyperparameters, and $\omega_i$ are $l$-dimensional vectors sampled from a multivariate standard normal distribution.
Here, for each vector $z$, $\phi(z)$ is a $2R$-dimensional vector.
Thus, the hyperparameter $R$ determines the length of the feature vector.
We consider a set of different hyperparameters:
\begin{align}
R &\in \{5, 10, 20, 40\},\\
\gamma &\in \{0.4, 0.5, 0.6, 0.65, 0.7, 0.75\}.
\end{align}

Using this feature map, the ML algorithm is implemented as follows.
First, we decompose $x$ into several vectors corresponding to local regions of a given local term of the Hamiltonian.
This is analogous to the discretization of the parameter space using $X_P$.
Explicitly, the decomposition is performed in the following way.
First, recall that in the 2D antiferromagnetic Heisenberg model, qubits are placed on sites in a 2D lattice.
Thus, each local term can be viewed as an edge between neighboring sites on the lattice.
We construct a local region around this edge by including all edges within an $\ell_1$-distance $\delta_1$.
This is analogous to Eq.~\eqref{eq:ip}.
Now, for each vector resulting from the decomposition of $x$, we apply the feature map $\phi$ and concatenate all vectors together to obtain $\phi(x)$.
Finally, we run the LASSO algorithm using \texttt{scikit-learn}, a Python package~\cite{scikit-learn}.
Here, LASSO optimizes the objective function
\begin{equation}
\frac{1}{2N} \norm{y - Xw}_2^2 + \alpha \norm{w}_1,
\end{equation}
where $N$ is the amount of training data, $y$ is a vector of the training data labels $\{y_\ell\}_{\ell=1}^N$, $X$ is a matrix of the training data inputs $\{x_\ell\}_{\ell=1}^N$, $w$ is a vector of coefficients we want to learn, and $\alpha > 0$ is a regularization parameter.
We consider a set of different regularization parameters
\begin{equation}
\alpha \in \{2^{-8}, 2^{-7}, 2^{-6}, 2^{-5}\}.
\end{equation}
We consider several different classical ML models, corresponding to these choices of hyperparameters $R, \gamma, \alpha$.
Thus, we perform model selection to determine the optimal choice of these hyperparameters.
To this end, we consider $M$ different values of the parameter $x = \{J_{ij}\}$, where $M$ is roughly around $100$ across different system sizes\footnote{The data set size from~\cite{huang2021provably} varies slightly depending on system size. For lattices of sizes $4\times 5$, $5\times 5$, and $7\times 5$, $M = 100$. However, for $6\times 5$, $M = 97$, for $8\times 5$, $M = 92$ and for $9\times 5$, $M = 89$.}.
From these $M$ data points, we randomly choose half of these points as training data (i.e., $N = M/2$) and the remaining half is test data.
For each ground state property we want to predict, we choose one value of each of $R, \gamma, \alpha$ such that the root-mean-square error is minimized when performing $4$-fold cross-validation, which is also implemented using \texttt{scikit-learn}.
Finally, we test the performance of the ML model with these chosen hyperparameters using the test data.

For each vector $x$ that we tested on, we would predict the correlation functions for all pairs of qubits $\langle ij \rangle$.
Hence, the prediction error is averaged over $(M/2) \times (1.8 n - 5) \approx 1500$ to $3500$ predictions, i.e., over all of the test data and all pairs of qubits.
Despite $M / 2$ being only of around $50$, the prediction errors reported in the plots are statistically sound given the large total number of predictions.
The standard deviation of the exact correlation functions in the data varies slightly across different system sizes\footnote{The standard deviation for system size $4 \times 5$ is around $0.192$, $5 \times 5$ is around $0.199$, $6 \times 5$ is around $0.187$, $7 \times 5$ is around $0.193$, $8 \times 5$ is around $0.190$, $9 \times 5$ is around $0.187$}.
When the standard deviation is smaller, the prediction error will also be smaller.
To judge the difficulty to predict the correlation functions across different system sizes, we normalize the standard deviation to be the average standard deviation of $0.191$.
We also include experiments where we vary the training data size $N$ or the classical shadow size $T$, i.e., the number of randomized Pauli measurements used to approximate the ground state.
For a fixed training data size of $N = M/2$, we vary the classical shadow size with values in $T \in \{50, 100, 250, 500, 1000\}$.
Similarly, for a fixed shadow size of $T = 500$, we vary the training data size with values $N = pM$ for $p \in \{0.1, 0.3, 0.5, 0.7, 0.9\}$.
The numerical results of these experiments are summarized in Figure~\ref{fig:heisenberg}.

\end{document}